\newcommand{\bburl}[1]{\textcolor{blue}{\url{#1}}}
\newcommand{\E}{\mathbb{E}}
\renewcommand{\E}{\mathbb{E}}
\newcommand{\complexI}{\hat{i}}
\newcommand{\quaternionI}{\hat{i}}
\newcommand{\quaternionJ}{\hat{j}}
\newcommand{\quaternionK}{\hat{k}}
\numberwithin{equation}{section}
\newtheorem{thm}{Theorem}[section]
\newtheorem{lem}[thm]{Lemma}
\newtheorem{prop}[thm]{Proposition}
\newtheorem{exa}[thm]{Example}
\newtheorem{defi}[thm]{Definition}
\theoremstyle{plain}
\newtheorem{lemma}[thm]{Lemma}
\newtheorem{theorem}[thm]{Theorem}
\newtheorem{rem}[thm]{Remark}
\newtheorem{rek}[thm]{Remark}
\newcommand\be{\begin{equation}}
\newcommand\ee{\end{equation}}
\newcommand\bea{\begin{eqnarray}}
\newcommand\eea{\end{eqnarray}}
\newcommand\bi{\begin{itemize}}
	\newcommand\ei{\end{itemize}}
\newcommand\ben{\begin{enumerate}}
	\newcommand\een{\end{enumerate}}
\newcommand\bc{\begin{center}}
	\newcommand\ec{\end{center}}
\newcommand\ba{\begin{array}}
	\newcommand\ea{\end{array}}
\newcommand{\gl}{\lambda}
\newcommand{\R}{\ensuremath{\mathbb{R}}}
\newcommand{\C}{\ensuremath{\mathbb{C}}}
\newcommand{\Z}{\ensuremath{\mathbb{Z}}}
\newcommand{\N}{\mathbb{N}}
\newcommand{\hr}[1]{\href{#1}{\url{#1}}}
\renewcommand \l {\lambda}
\renewcommand{\H}{\mathbb{H}}
\newcommand{\eps}{\epsilon}
\renewcommand{\Re}[1]{\text{Re}(#1)}
\renewcommand{\Re}{\mathfrak{Re}}
\newcommand{\var}{\text{Var}}
\newcommand{\pfrac}[2]{\left(\frac{#1}{#2}\right)}
\DeclareMathOperator{\tr}{Tr}
\newcommand{\etr}{\mathbb{E}\;\tr\;}
\newcommand{\ektr}{\mathbb{E}_k\;\tr\;}
\newcommand{\ecktr}{\mathbb{E}_k^\C\;\tr\;}
\newcommand{\ehktr}{\mathbb{E}_k^\H\;\tr\;}
\newcommand{\Eo}[1]{\underaccent{#1}{\mathbb{E}}}
\newcommand{\kmat}{b}
\newcommand*{\reff}[1]{\hyperref[#1]{\ref{#1}}}
\newcommand*{\pez}[1]{\left( #1\right)}
\newcommand*{\on}{\operatorname}
\DeclareMathOperator{\rank}{rank}
\DeclarePairedDelimiter\abs{\lvert}{\rvert}%
\DeclareMathOperator{\prob}{Pr}
\title{Random Matrix Ensembles with Split Limiting Behavior}
\author{Paula Burkhardt}
\email{\textcolor{blue}{\href{peb02012@pomona.edu}{peb02012@pomona.edu}}}
\address{Department of Mathematics, Pomona College, Claremont, CA 91711}
\author{Peter Cohen}
\email{\textcolor{blue}{\href{mailto:pcohen@bowdoin.edu, petercohen33@gmail.com}{pcohen@bowdoin.edu, petercohen33@gmail.com}}}
\address{Department of Mathematics, Bowdoin College, Brunswick, ME 04011}
\author{Jonathan DeWitt}
\email{\textcolor{blue}{\href{mailto:jon.dewitt@gmail.com}{jon.dewitt@gmail.com}}}
\address{Department of Mathematics and Statistics, Haverford College, Haverford, PA 19041}
\author{Max Hlavacek}
\email{\textcolor{blue}{\href{mailto:mhlavacek@g.hmc.edu}{mhlavacek@g.hmc.edu}}}
\address{Department of Mathematics, Harvey Mudd College, Claremont, CA 91711}
\author{Steven J. Miller}
\email{\textcolor{blue}{\href{mailto:sjm1@williams.edu, Steven.Miller.MC.96@aya.yale.edu}{sjm1@williams.edu,Steven.Miller.MC.96@aya.yale.edu}}}
\address{Department of Mathematics and Statistics, Williams College, Williamstown, MA 01267}
\author{Carsten Sprunger}
\email{\textcolor{blue}{\href{mailto:csprun@umich.edu}{csprun@umich.edu}}}
\address{Department of Mathematics, University of Michigan, Ann Arbor, MI 48109}
\author{Yen Nhi Truong Vu}
\email{\textcolor{blue}{\href{mailto:ytruongvu17@amherst.edu}{ytruongvu17@amherst.edu}}}
\address{Department of Mathematics and Statistics, Amherst College, Amherst, MA 01002}
\author{Roger Van Peski}
\email{\textcolor{blue}{\href{mailto:rpeski@princeton.edu}{rpeski@princeton.edu}}}
\address{Department of Mathematics, Princeton University, Princeton, NJ 08544}
\author{Kevin Yang}
\email{\textcolor{blue}{\href{mailto:kevinyang@college.harvard.edu}{kevinyang@college.harvard.edu}}}
\address{Department of Mathematics, Harvard University, Cambridge, MA 02138}
\thanks{The authors were partially supported by NSF Grants DMS1265673, DMS1561945, DMS1449679 and DMS1347804, Amherst College, the University of Michigan, Princeton University and Williams College. We thank  Eyvindur Palsson, Arup Bose and Aaditya Sharma for helpful conversations. This work was supervised by the fifth named author at the Williams SMALL REU program; the first, third and ninth authors determined the behavior in the bulk and began the investigations of the blip during the 2015 SMALL program, which were undertaken by the remaining authors the following year.}
\subjclass[2010]{15B52 (primary), 15B57, 15B33 (secondary)}
\keywords{Random Matrix Ensembles, Checkerboard Matrices, Limiting Spectral Measure, Gaussian Orthogonal Ensemble, Gaussian Unitary Ensemble, Gaussian Symplectic Ensemble}
\date{\today}
\begin{document}

\begin{abstract} We introduce a new family of $N\times N$ random real symmetric matrix ensembles, the $k$-checkerboard matrices, whose limiting spectral measure has two components which can be determined explicitly. All but $k$ eigenvalues are in the bulk, and their behavior, appropriately normalized, converges to the semi-circle as $N\to\infty$; the remaining $k$ are tightly constrained near $N/k$ and their distribution converges to the $k \times k$ hollow GOE ensemble (this is the density arising by modifying the GOE ensemble by forcing all entries on the main diagonal to be zero). Similar results hold for complex and quaternionic analogues. We isolate the two regimes by using matrix perturbation results and a nonstandard weight function for the eigenvalues, then derive their limiting distributions using a modification of the method of moments and analysis of the resulting combinatorics.
\end{abstract}

\maketitle
\tableofcontents


\section{Introduction}

\subsection{Background}\label{sec:background}

Since their introduction by Wishart \cite{Wis} in the 1920s in statistics, the distribution of eigenvalues of random matrix ensembles have played a major role in a variety of fields, especially in nuclear physics and number theory; see for example the surveys \cite{Bai, BFMT-B, Con, FM, KaSa, KeSn} and the textbooks \cite{Fo, Meh, MT-B, Tao2}. One of the central results in the subject is Wigner's semi-circle law. Inspired by studies of energy levels of heavy nuclei, Wigner conjectured that their energy levels are well-modeled by eigenvalues of a random matrix ensemble, and he and others proved that in many matrix ensembles the distribution of the scaled eigenvalues of a typical matrix converge, in some sense, to the semi-circle distribution \cite{Wig1, Wig2, Wig3, Wig4, Wig5}.

Which matrix ensemble models the system depends on its physical symmetries. Though the most used in physics and number theory are the Gaussian Orthogonal, Unitary and Symplectic Ensembles, it is of interest to study other families. In many cases the additional symmetry constraints on the  matrix (for example, requiring it to be Toeplitz or circulant or arising from a $d$-regular graph) lead to a different density of states. There is now an extensive literature on the density of eigenvalues of special ensembles; see for example \cite{Bai, BasBo1, BasBo2, BanBo, BLMST, BCG, BHS1, BHS2, BM, BDJ, GKMN, HM, JMRR, JMP, Kar, KKMSX, LW, MMS, MNS, MSTW, McK, Me, Sch}, where many of them have limiting spectral measures different than the semi-circle (though recent work, see \cite{ERSY, ESY, TV1, TV2} among others, shows that in many cases the spacing between normalized eigenvalues is universal and equals that of the Gaussian ensembles).

In many of these special ensembles while one is able to prove the density of eigenvalues of a typical matrix converges to a limiting spectral measure, one cannot write down a nice, closed-form expression for this limiting distribution (notable exceptions are $d$-regular graphs \cite{McK}, block circulant matrices \cite{KKMSX} and palindromic Toeplitz matrices \cite{MMS}). 
In what follows we study a new ensemble of `checkerboard' matrices, the eigenvalues of which are split into two types, each of which converges to a different limiting spectral distribution which can be solved for explicitly. Most of the eigenvalues are of order $\sqrt{N}$ and converge to a semi-circle; however, a small number are of size $\Theta(N)$ and converge to new limiting measures related to the Gaussian ensembles. We define these matrices in the next section, and then summarize our findings and the techniques developed to study such split behavior.

\subsection{Generalized Checkerboard Ensembles}

Our arguments apply with only minor modification to the reals, complex numbers and quaternions, and show connections between the checkerboard and Gaussian ensembles. As we often use $i$ as an index of summation, we use $\complexI:=\sqrt{-1}$ and similarly $\quaternionI, \quaternionJ$ and $\quaternionK$ for the quaternions. Additionally, we index the entries $m_{ij}$ of a matrix starting at $0$ to simplify certain congruence conditions.

\begin{defi}\label{defn:checkerboard}
Fix $D =\R, \C$ or $\H$, $k \in \N$, $w \in \R$. Then the $N \times N$ $(k,w)$-checkerboard ensemble over $D$ is the ensemble of matrices $M = (m_{ij})$ given by
\begin{equation}\label{def $k$-checkerboard matrix}
m_{ij}\ =\ \begin{cases}
a_{ij} & \text{{\rm if} } i \not\equiv j \bmod k\\
w & \text{{\rm if} } i \equiv j \bmod k
\end{cases}
\end{equation}
where $a_{ij}=\overline{a_{ji}}$ and
\begin{equation}
a_{ij}\ =\ \begin{cases}
r_{ij} &\mbox{{\rm if} } D=\R \\
\frac{r_{ij}+b_{ij}\complexI}{\sqrt{2}} &\mbox{{\rm if} } D=\C \\
\frac{r_{ij}+b_{ij}\quaternionI + c_{ij}\quaternionJ + d_{ij}\quaternionK}{2} &\mbox{{\rm if} } D=\H
\end{cases}
\end{equation}
with $r_{ij}$, $b_{ij}$, $c_{ij}$, and $d_{ij}$ i.i.d. random variables with mean 0, variance 1, and finite higher moments, and the probability measure on the ensemble given by the natural product probability measure. We refer to the $(k,1)$-checkerboard ensemble over $D$ simply as the $k$-checkerboard ensemble over $D$.
\end{defi}

When not stated or otherwise clear from context, we assume that $D=\R$ when talking about $k$-checkerboard matrices. We use $w=1$ throughout for simplicity, since only slight alterations are needed to make the results hold for any $w \neq 0$.

For example, a $(2,w)$-checkerboard matrix $A$ would be of the form
\begin{equation}
A\ =\ \begin{bmatrix}
w & a_{0\,1} & w & a_{0\,3} & w & \cdots & a_{0\,N-1}\\
a_{0\,1} & w & a_{1\,2} & w & a_{1\,4} & \cdots & w \\
w & a_{1\,2} & w & a_{2\,3} & w & \cdots & a_{2\,N-1}\\
\vdots & \vdots & \vdots & \vdots & \vdots & \ddots &\vdots \\
a_{0\,N - 1} & w & a_{2\,N - 1} & w & a_{4\,N - 1} & \cdots & w
\end{bmatrix}. \end{equation}


%
%


\subsection{Results}

Let $\nu_{A,N}$ be the empirical spectral measure of a $N\times N$ matrix $A$, where we have normalized the eigenvalues by dividing by $\sqrt{N}$:
\begin{equation}\label{eqn Wigner Spectral Distribution Measure}
\nu_{A, N}\ =\ \frac{1}{N}\sum_{i = 1}^{N} \delta\left(x - \frac{\lambda_{i}}{\sqrt{N}}\right),
\end{equation}
where the $\{\lambda_{i}\}_{i = 1}^{N}$ are the eigenvalues of $A$. Here, we use $A$ and $N$ in the subscript to highlight both the matrix and its size. Wigner's semicircle law states that for many random matrix ensembles, for almost all sequences $\{A_N\}_{N \in \N}$ of $N \times N$ matrices $A_N$,  we have weak convergence of empirical spectral measures $\nu_{A_N,N}$ as $N\to\infty$ to the semicircle measure of radius $R$, $\sigma_R$, which has density
\begin{equation}\label{eqn semi-circle law}
\begin{cases}
\frac{2}{\pi R^2}\sqrt{R^2 - x^{2}} & \text{{\rm if} } |x| \leq R\\
0 & \text{{\rm if} } |x| > R.
\end{cases}
\end{equation}

Note that for $R \neq 1$ the `semicircle' distribution is actually a semi-ellipse with horizontal axis $R$. While one can renormalize the eigenvalues by a constant independent of $N$ to rescale to a semicircle, we will see below that in our setting that constant would depend on $k$. We prefer not to introduce a renormalization dependent on $k$, as it makes no material difference.

For the ensembles mentioned in \S\ref{sec:background} one is able to determine the limiting spectral measure through the method of moments. The situation is more subtle here. As we argue later, the $k$-checkerboard matrices have $k$ eigenvalues of size $N/k$. As the variance of these eigenvalues is of order $k$, for fixed $k$ we see these eigenvalues are well-separated from the $N-k$ eigenvalues that are of order $\sqrt{N}$.  In fact, by using a matrix perturbation approach we are able to establish the following result, which we prove in Appendix \ref{app_tworegimes}:

\begin{theorem}
Let $\{A_N\}_{N\in\N}$ be a sequence of $(k,w)$-checkerboard matrixs. Then almost surely as $N\rightarrow \infty$ the eigenvalues of $A_N$ fall into two regimes: $N-k$ of the eigenvalues are $O(N^{1/2+\epsilon})$ and $k$ eigenvalues are of magnitude $Nw/k+O(N^{1/2+\epsilon})$.
\end{theorem}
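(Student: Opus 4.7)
The plan is a matrix perturbation argument. Write $A_N = C_N + B_N$, where $C_N$ is the deterministic ``checkerboard'' skeleton with $(C_N)_{ij} = w$ when $i \equiv j \bmod k$ and $0$ otherwise, and $B_N$ is the random part with $(B_N)_{ij} = a_{ij}$ when $i \not\equiv j \bmod k$ and $0$ elsewhere. Both $C_N$ and $B_N$ are self-adjoint over $D$, so Weyl's inequality applies.

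First I would compute the spectrum of $C_N$ explicitly. Permuting the basis so that indices are grouped by residue class mod $k$ turns $C_N$ into a block-diagonal matrix with $k$ blocks, each block being an all-$w$ matrix of size $\lfloor N/k \rfloor$ or $\lceil N/k \rceil$. An $m \times m$ all-$w$ matrix has spectrum $\{mw, 0, \ldots, 0\}$, so $C_N$ has exactly $k$ nonzero eigenvalues, each equal to $Nw/k + O(1)$, and $N-k$ zero eigenvalues.

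Next I would establish the almost sure operator norm bound
\begin{equation}
\|B_N\|\ =\ O(N^{1/2+\eps}) \quad \text{a.s.}
\end{equation}
This is where the real work lies. The matrix $B_N$ is a Wigner-type matrix (symmetric, Hermitian, or self-dual depending on $D = \R, \C, \H$) with i.i.d.\ mean-$0$, variance-$1$ entries subject to the appropriate conjugation symmetry, except that entries on the $k$ residue-class diagonals are forced to zero. Since the hypothesis guarantees all moments are finite, I would compute $\E\, \tr(B_N^{2p})$ via the standard moment method, taking $p = p(N) \to \infty$ slowly, and apply Markov's inequality plus Borel-Cantelli to promote the expected bound on $\|B_N\|$ into an almost sure one. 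The zeroing of the checkerboard positions only decreases the contributing pair-partitions, so the bound for the full Wigner ensemble dominates.

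Finally, I would combine these ingredients via Weyl's inequality,
\begin{equation}
|\lambda_i(A_N) - \lambda_i(C_N)|\ \leq\ \|B_N\|,
\end{equation}
ordered consistently across the two matrices. This immediately gives $k$ eigenvalues of $A_N$ equal to $Nw/k + O(N^{1/2+\eps})$ and the remaining $N-k$ eigenvalues of size $O(N^{1/2+\eps})$. The main obstacle is step two: the paper assumes only finite higher moments, not sub-Gaussian tails, so the almost sure promotion requires a careful high-moment computation rather than a direct concentration inequality, and the combinatorics must be checked uniformly in $k$. The complex and quaternionic cases reduce to the real one by viewing $B_N$ as a real matrix of doubled or quadrupled size, whose operator norm differs from that of $B_N$ by a bounded factor.
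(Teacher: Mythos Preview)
Your proposal is correct and follows the same overall architecture as the paper's proof in Appendix~\ref{app_tworegimes}: decompose $A_N$ into the deterministic skeleton $C_N$ (the paper calls it $Z$) plus the random $(k,0)$-checkerboard part $B_N$, compute the spectrum of $C_N$ exactly, bound $\|B_N\|_{\text{op}}$ almost surely, and conclude via Weyl's inequality.

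The one substantive difference is in how the operator-norm bound is obtained. You propose computing $\E\,\tr(B_N^{2p})$ with $p=p(N)\to\infty$ and then applying Markov plus Borel--Cantelli. The paper instead recycles the bulk analysis already carried out in Appendix~\ref{app_bulk}: for each fixed $m$, the $(2m+2)$\textsuperscript{th} moment of the empirical spectral measure of $B_N$ converges almost surely, and if $\|B_N\|_{\text{op}}$ were not $O(N^{1/2+1/(2m)})$ along a subsequence, that fixed moment would diverge along it, a contradiction. This is shorter because the variance estimate and Borel--Cantelli step are already done. Your route works too, but note that for the target $O(N^{1/2+\epsilon})$ (as opposed to the sharp $2\sqrt{N}$) a \emph{fixed} large $p$ depending only on $\epsilon$ already suffices: $\E\,\tr(B_N^{2p})=O(N^{p+1})$ gives $\Pr(\|B_N\|>N^{1/2+\epsilon})=O(N^{1-2p\epsilon})$, summable once $p>1/\epsilon$. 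Growing $p(N)$ is unnecessary here and would demand tighter control of the combinatorial constants than the finite-moment hypothesis immediately provides.
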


We refer to the $N-k$ eigenvalues that are on the order of $\sqrt{N}$ as the eigenvalues in the \textbf{bulk}, while the $k$ eigenvalues near $N/k$ are called the eigenvalues in the \textbf{blip}. See \cite{CHS} for some general results about a class of random matrices exhibiting a different kind of split behavior.

While the presence of these $k$ large eigenvalues prevent us from using one of the standard techniques, the method of moments, to determine the limiting density of the eigenvalues in the bulk, numerics (see Figure \ref{fig example of the blip}) suggest that the limit is a semi-ellipse.



\begin{figure}[H]
\begin{center}
\scalebox{0.7}{\includegraphics{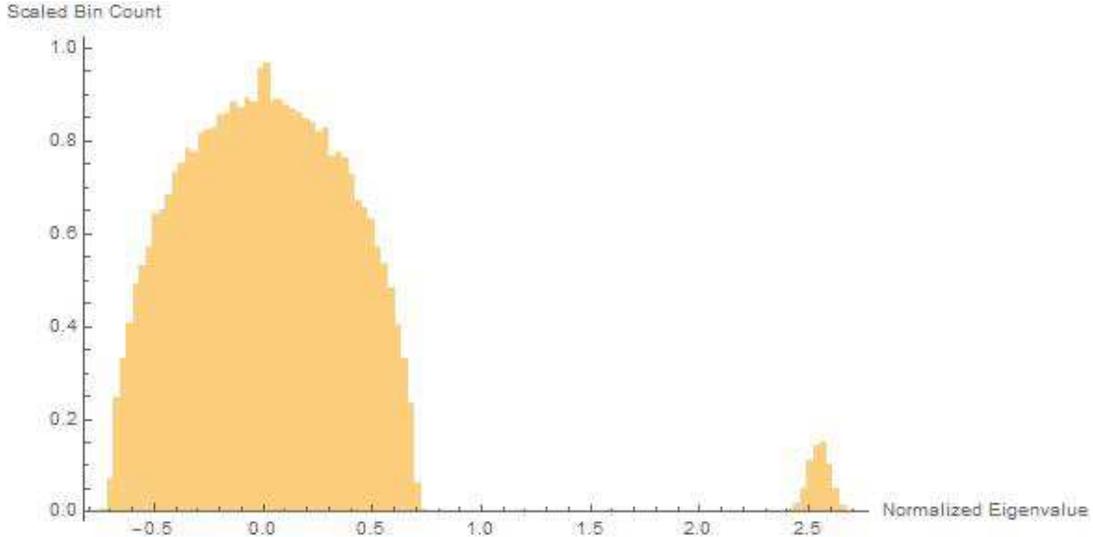}}
\caption{\label{fig example of the blip} A histogram, normalized appropriately to achieve unit mass,
of the scaled eigenvalue distribution for $100 \times 100$ $2$-checkerboard real matrices with $w = 1$ after 500 trials.}
\end{center}
\end{figure}

The following result (see \cite{Tao}) allows us to bypass the complications presented by the small number of large eigenvalues.

\begin{theorem}[\cite{Tao}]\label{stabilityESDrank}
Let $\{\mathcal{A}_N\}_{N \in \N}$ be a sequence of random Hermitian matrix ensembles such that $\{\nu_{\mathcal{A}_N,N}\}_{N \in \N}$ converges weakly almost surely to a limit $\nu$. Let $\{\tilde{\mathcal{A}}_N\}_{N \in \N}$ be another sequence of random matrix ensembles such that $\frac{1}{N}\rank(\tilde{\mathcal{A}}_N)$ converges almost surely to zero. Then $\{\nu_{\mathcal{A}_N+\tilde{\mathcal{A}}_N,N}\}_{N\in\N}$ converges weakly almost surely to $\nu$. 
\end{theorem}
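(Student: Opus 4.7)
The plan is to use a standard eigenvalue-perturbation estimate: for Hermitian matrices, a rank-$r$ additive perturbation shifts each ordered eigenvalue by at most $r$ positions. More precisely, if $A$ is $N \times N$ Hermitian and $B$ is Hermitian with $\rank(B) \le r$, then Cauchy/Weyl interlacing (applied in both directions, writing $A = (A+B) + (-B)$ as well) yields
\[
\bigl| \#\{i : \lambda_i(A+B) \le t\} - \#\{i : \lambda_i(A) \le t\} \bigr| \;\le\; r \quad \text{for all } t \in \R.
\]
Dividing by $N$ gives a uniform Kolmogorov-distance bound on the empirical CDFs,
\[
\sup_{t \in \R} \bigl| F_{A+B}(t) - F_A(t) \bigr| \;\le\; r/N,
\]
and the same bound persists after the monotone rescaling $\lambda \mapsto \lambda/\sqrt{N}$ used in the definition of $\nu_{\cdot, N}$ in \eqref{eqn Wigner Spectral Distribution Measure}. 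So the first step is simply to record this interlacing fact.

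Applied to $A = \mathcal{A}_N$ and $B = \tilde{\mathcal{A}}_N$ with $r = \rank(\tilde{\mathcal{A}}_N)$, this gives
\[
\sup_{t \in \R} \bigl| F_{\nu_{\mathcal{A}_N+\tilde{\mathcal{A}}_N, N}}(t) - F_{\nu_{\mathcal{A}_N, N}}(t) \bigr| \;\le\; \frac{\rank(\tilde{\mathcal{A}}_N)}{N},
\]
whose right-hand side tends to $0$ almost surely by hypothesis. Let $\Omega_1$ be the probability-one event on which $\nu_{\mathcal{A}_N,N}$ converges weakly to $\nu$, and $\Omega_2$ the probability-one event on which $\rank(\tilde{\mathcal{A}}_N)/N \to 0$; then $\Omega_1 \cap \Omega_2$ has probability one, and on it the interlacing inequality holds pathwise for every $N$.

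On this intersection I conclude weak convergence of $\nu_{\mathcal{A}_N + \tilde{\mathcal{A}}_N, N}$ to $\nu$ via the standard equivalence between weak convergence of probability measures on $\R$ and pointwise convergence of CDFs at continuity points of the limit: at any continuity point $t$ of $F_\nu$,
\[
\bigl| F_{\nu_{\mathcal{A}_N + \tilde{\mathcal{A}}_N, N}}(t) - F_\nu(t) \bigr| \;\le\; \frac{\rank(\tilde{\mathcal{A}}_N)}{N} + \bigl| F_{\nu_{\mathcal{A}_N, N}}(t) - F_\nu(t) \bigr|,
\]
and both terms on the right vanish in the limit by the choice of $\Omega_1 \cap \Omega_2$. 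The only real subtlety — and it is mild — is the measure-theoretic bookkeeping needed to intersect the two almost-sure events and to note that the interlacing bound is deterministic (so holds pathwise wherever both matrices are defined); beyond this the statement is essentially a deterministic perturbation result dressed in probabilistic language, and I expect no genuine obstacle.
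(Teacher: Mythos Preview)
The paper does not give its own proof of this statement; it is quoted as a black box from Tao's notes \cite{Tao}, so there is no in-paper argument to compare against. Your proof is correct and is essentially the standard argument found in that reference: the rank interlacing inequality gives the uniform Kolmogorov bound $\sup_t |F_{A+B}(t)-F_A(t)| \le \rank(B)/N$, this bound survives the monotone rescaling $\lambda \mapsto \lambda/\sqrt N$, and intersecting the two almost-sure events then yields pointwise CDF convergence at continuity points of $F_\nu$, hence weak convergence.
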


Taking $\tilde{\mathcal{A}}_N$ to be the fixed matrix with entries $m_{ij}=1_{i \equiv j \pmod{k}}$ implies that the limiting spectral distribution of the $k$-checkerboard ensemble as defined previously with $w=1$ is the same as the limiting spectral distribution of the ensemble with $w=0$, which does not have the $k$ large blip eigenvalues (for the remainder of this paper, $A_N$ always refers to an $N \times N$ matrix). This overcomes the issue of diverging moments.

\begin{theorem}\label{bulk_limit}
Let $\{A_N\}_{N \in \N}$ be a sequence of real $N \times N$ $k$-checkerboard matrices. Then the empirical spectral measures $\nu_{A_N,N}$ converges weakly almost surely to the semicircle distribution.
\end{theorem}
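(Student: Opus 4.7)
The plan is to first use the perturbation lemma to strip off the rank-$k$ deterministic part responsible for the blip, and then run a variant of Wigner's trace method on the resulting ensemble. Let $\tilde A_N$ be the $N\times N$ deterministic matrix with $(\tilde A_N)_{ij}=1$ when $i\equiv j\pmod k$ and $0$ otherwise. Writing $\tilde A_N=\sum_{\ell=0}^{k-1} v_\ell v_\ell^{T}$, where $v_\ell$ is the indicator vector of the residue class $\ell\pmod k$, shows $\rank(\tilde A_N)\le k$, hence $\rank(\tilde A_N)/N\to 0$. Setting $B_N:=A_N-\tilde A_N$ (the $(k,0)$-checkerboard matrix), Theorem~\ref{stabilityESDrank} reduces the problem to proving that $\nu_{B_N,N}$ converges weakly almost surely to a semicircle distribution. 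This removes the $k$ large eigenvalues, and hence the divergent moments that block a direct moment computation on $A_N$.

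On $B_N$ I would expand
$$\E\!\left[\tfrac{1}{N}\tr\bigl(B_N/\sqrt{N}\bigr)^{p}\right]\;=\;\frac{1}{N^{1+p/2}}\sum_{i_1,\dots,i_p=1}^{N}\E\bigl[a_{i_1 i_2}\,a_{i_2 i_3}\cdots a_{i_p i_1}\bigr],$$
under the convention $a_{ij}=0$ when $i\equiv j\pmod k$. Independence and mean zero restrict the nonzero terms to tuples in which each random entry appears at least twice; grouping by the pair partition of $\{1,\dots,p\}$ in the usual Wigner fashion, the leading order in $N$ comes from non-crossing pair partitions, which force $p=2m$ and place $m+1$ distinct indices on the vertices of a plane tree $T$ with $m$ edges. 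The new feature compared to the classical argument is that each edge $(u,v)$ of $T$ now carries the constraint $i_u\not\equiv i_v\pmod k$, since otherwise $a_{i_u i_v}=0$.

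I would evaluate the constrained labeling count by rooting $T$ arbitrarily and filling vertices in breadth-first order: the root is free ($N$ options), and each subsequent vertex has exactly one already-labeled neighbor (its parent), so it must avoid one residue class modulo $k$ and contributes $N(k-1)/k+O(1)$ choices. Each non-crossing pair partition then yields $N^{m+1}((k-1)/k)^{m}+O(N^{m})$, and summing over the $C_m$ non-crossing pair partitions and dividing by $N^{m+1}$ gives the even moment $C_m((k-1)/k)^{m}$, which matches the $2m$-th moment of the semicircle $\sigma_{2\sqrt{(k-1)/k}}$ (here $C_m$ denotes the $m$-th Catalan number). Odd moments vanish by parity, while crossing pair partitions and partitions with blocks of size $\ge 3$ contribute $o(N^{m+1})$ by the classical index-collapse argument; the checkerboard zeros can only reduce these error terms.

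Finally, to upgrade moment convergence in expectation to almost sure convergence of $\nu_{B_N,N}$, I would bound $\V\bigl(\tfrac{1}{N}\tr(B_N/\sqrt{N})^{p}\bigr)=O(N^{-2})$ via the standard connected-diagram estimate on second trace moments; Borel--Cantelli then gives almost sure convergence of each moment, and since the limiting semicircle is compactly supported it is determined by its moments, so moment convergence upgrades to almost sure weak convergence of the measures. The main technical point I expect to be mildly delicate is verifying that the mod-$k$ constraint at a high-degree vertex of the plane tree $T$ does not spoil the leading-order factorization $((k-1)/k)^{m}$; processing vertices in breadth-first order handles the main term cleanly because each non-root vertex sees exactly one already-labeled neighbor at the instant it is processed, but one must still confirm that the $O(1)$ fluctuations in the sizes of residue classes (when $k\nmid N$) accumulate to only $O(N^{m})$ per plane tree, which is harmless after dividing by $N^{m+1}$.
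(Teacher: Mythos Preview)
Your proposal is correct and follows essentially the same route as the paper: reduce to the $(k,0)$-checkerboard via the rank-perturbation stability result, compute expected moments by the trace expansion with the tree/non-crossing pairing argument (obtaining $C_m((k-1)/k)^m$ and hence the semicircle of radius $2\sqrt{(k-1)/k}$), then upgrade to almost sure convergence via an $O(N^{-2})$ variance bound and Borel--Cantelli. Your breadth-first labeling argument is exactly what underlies the paper's count of admissible vertex labelings, and your worry about high-degree vertices is unfounded for the reason you already identified: each non-root vertex sees only its parent at the moment it is labeled, so the constraint is always a single forbidden residue class.
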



The proof is by standard combinatorial arguments. We give the details in Appendix \ref{app_bulk}.

On the other hand, the blip is where the vast majority of interesting behavior and technical challenges are encountered. We begin with some heuristic arguments which give intuition for how the blip arises and behaves.

Firstly, recall that a matrix $A$ for which the sum of all entries in any given row is equal to some fixed $d$ has the trivial eigenvalue $d$ with eigenvector $(1,1,\ldots,1)^T$. For a matrix in the $N \times N$ $k$-checkerboard ensemble, the sum of the $i$\textsuperscript{th} row is equal to $N/k+\sum_{j=1}^N a_{ij}$ where the $a_{ij}$ are i.i.d. with mean $0$ and variance $1$. This is approximately $N/k$, so heuristically there should be an eigenvector very close to $(1,1,\ldots,1)^T$ with eigenvalue roughly $N/k$. Similarly, there are $k-1$ other eigenvalues of size approximately $N/k$ with eigenvectors close to the one described previously with some additional periodic sign changes.

Hence the blip may be thought of as deviations about the trivial eigenvalues. The surprising result of this paper is that these deviations, while seemingly quite different from the eigenvalue distributions of classical random matrix theory, in fact have the same distribution as the eigenvalues of the following $k \times k$ random matrix sub-ensemble of the classical Gaussian Orthogonal Ensemble (GOE).

\begin{defi} \label{def hollow GOE}
The \textbf{hollow Gaussian Orthogonal Ensemble} is given by $A = (a_{ij}) = A^T$ with
\begin{equation}
a_{ij} = \begin{cases}
\mathcal{N}_{\R}(0, 1) & \text{{\rm if} } i \neq j \\
0 & \text{{\rm if} } i = j.
\end{cases}
\end{equation}
\end{defi}

The spectral distribution of the $2 \times 2$ hollow GOE is Gaussian (see Proposition \ref{prop_k2gaussian}), and in the $k \rightarrow \infty$ limit the eigenvalue distribution is a semicircle by standard GOE arguments. For larger finite $k$ we see an interesting sequence of distributions which interpolate between the Gaussian and the semicircle, similarly to the results in \cite{KKMSX} for block circulant matrices. The first few are shown in Figures \ref{figure:hist2hist3} and \ref{figure:hist4hist16}.

\begin{figure}[H]
\centering\includegraphics[scale=.7]{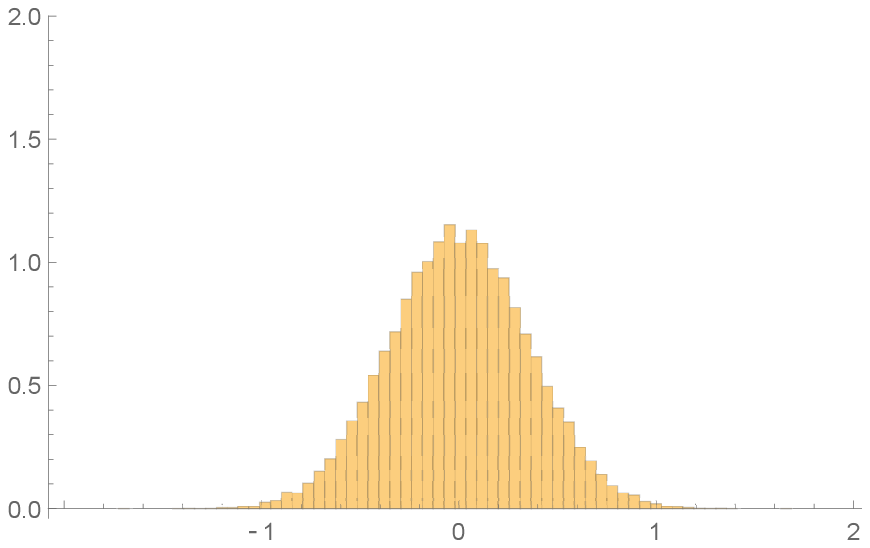}\ \centering\includegraphics[scale=.7]{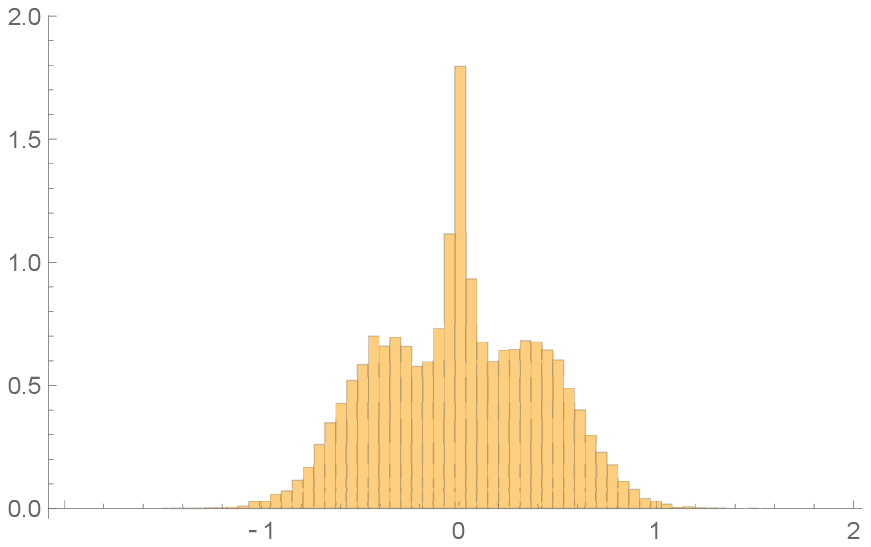}
\caption{(Left) Histogram of eigenvalues of 32000 $2\times2$ hollow GOE matrices. (Right) Histogram of eigenvalues of 32000 $3\times3$ hollow GOE matrices.}\label{figure:hist2hist3}
\centering\includegraphics[scale=.7]{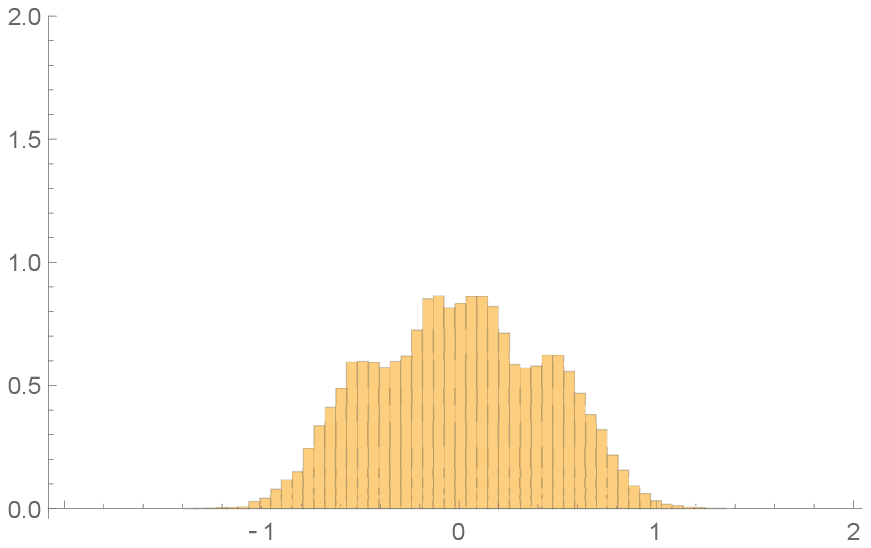}\ \centering\includegraphics[scale=.7]{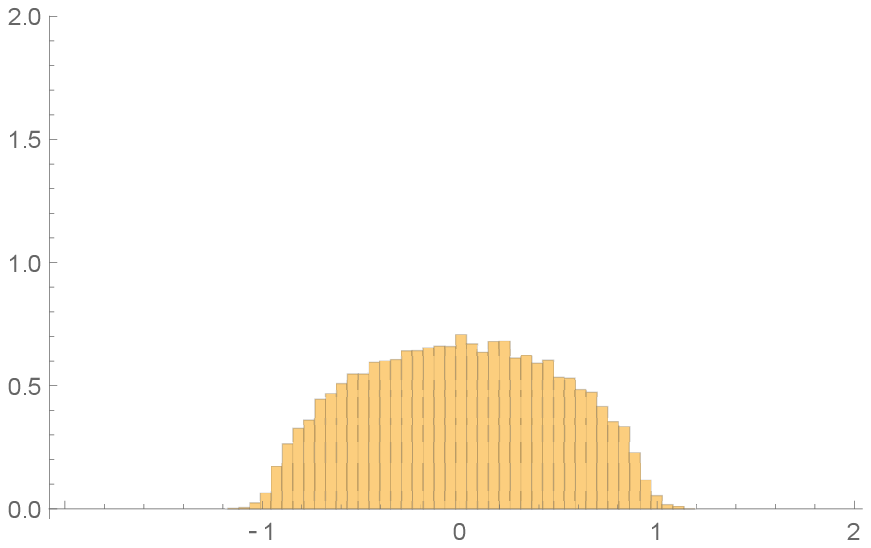}
\caption{(Left) Histogram of eigenvalues of 32000 $4\times4$ hollow GOE matrices. (Right) Histogram of eigenvalues of 32000 $16\times16$ hollow GOE matrices.}\label{figure:hist4hist16}
\end{figure}

Computing this distribution poses substantial challenges. Ideally, we would like to define a weighted blip spectral measure which takes into account only the eigenvalues of the blip and not the bulk. Naively, one could multiply the empirical spectral measure by some smooth cutoff function of the form $\mathbf{1}_{[N/k-\delta(N),N/k+\delta(N)]}$ for $\delta(N)$ growing appropriately to capture all of the blip and neglect the bulk in the limit. However, with such a weighting function we cannot use the eigenvalue-trace formula to reduce the problem to combinatorics on products of matrix entries in the standard way. The next reasonable possibility is to try Taylor expanding a nice cutoff function, for then each expected moment is of the form
\begin{equation}
\E\left[\sum_{i=0}^\infty c_i p_i(\l_1,\ldots,\l_N)\right],
\end{equation}
where $p_i$ is the power sum symmetric polynomial of degree $i$ and $\l_j$'s are the eigenvalues. Unfortunately, Taylor series convergence and limit-switching issues make this approach untenable.

Hence, we are led to use a polynomial weighting function. No polynomial of fixed degree is a sufficiently good approximation to a smooth cutoff function, so we use a sequence of polynomials of degree increasing with the matrix size $N$ so that in the limit we mimic a smooth cutoff function. Specifically, let
\begin{equation}
f_{n}(x) := x^{2n}(x-2)^{2n}
\end{equation}
Thus we alter the standard empirical spectral measure in the following way to capture the blip.

\begin{defi}\label{def_empirical_blip_measure}
The \textbf{empirical blip spectral measure} associated to an $N \times N$ $k$-checkerboard matrix $A$ is
\begin{equation}\label{eqn modified spectral measure for the blip}
\mu_{A, N}\ :=\ \frac{1}{k}\sum_{\lambda \text{ eigenvalue of }A} f_{n(N)}\left(\frac{k \lambda}{N}\right) \delta\left(x - \left(\lambda - \frac{N}{k} \right)  \right)
\end{equation}
where $n(N)$ is a function for which there exists some $\eps$ so that $N^\eps \ll n(N) \ll N^{1-\eps}$.
\end{defi}

At a blip eigenvalue $\l \approx N/k$, we have $f_n\left(\frac{\l}{N/k}\right) \approx 1$; because the standard deviation of the bulk eigenvalues $\l'$ is on the order of $\sqrt{N}$, $f_n\left(\frac{\l'}{N/k}\right)\approx 0$ for any bulk eigenvalue $\l'$. Because $f_n(1)=1$, $f(0)=0$, and $f_n'(1)=0=f_n'(0)=\cdots= f_n^{(2n-1)}(0)$, the bulk eigenvalues are given weight roughly $0$ and the blip eigenvalues are all given weight roughly $1$, and small deviations about these weights disappear in the limit.

\begin{rek}
The authors experimented with several other sequence of polynomials and all give the same end results under some suitable conditions, but this one simplifies computations. Furthermore, it is nonnegative, ensuring that the empirical blip spectral measure is actually a measure. It is almost a probability measure, i.e. for a typical matrix $\mu_{A,N}(\R)$ is close to $1$. To make $\mu_{A,N}$ a probability measure we would need to divide by the sum of the weights associated to the eigenvalues, but the expected value of this quotient is intractable, so we instead divide by $k$.
\end{rek}

Definition \ref{def_empirical_blip_measure} finally allows reduction to tractable combinatorics. Interestingly, this combinatorics reduces back to random matrix theory, yielding convergence in expectation of the moments of the weighted blip spectral measure of the $k$-checkerboard matrix ensemble to those of the $k \times k$ hollow GOE. However, we cannot show almost-sure weak convergence of measures by standard arguments because (a) due to the weighting function, the blip empirical spectral measure is no longer a probability measure, and (b) the number of eigenvalues in the blip is fixed so there are not enough to average over. We modify the moment convergence theorem to overcome the first difficulty, and average over the eigenvalues of multiple independent matrices to overcome the second.

We now state this result formally.

\begin{defi}\label{def_ave_blip_fin}
Fix a function $g: \N \rightarrow \N$. The \textbf{averaged empirical blip spectral measure} associated to a $g(N)$-tuple of $N \times N$ $k$-checkerboard matrices $(A_N^{(1)},A_N^{(2)},\dots,A_N^{(g(N))})$ is
\begin{equation}
\mu_{N,g,A_N^{(1)},A_N^{(2)},\dots,A_N^{(g(N))}}\ :=\ \frac{1}{g(N)}\sum_{i=1}^{g(N)} \mu_{A_N^{(i)},N}.
\end{equation}
\end{defi}

\begin{thm}\label{thm_main_goe}
Let $g: \N \rightarrow \N$ be such that there exists a $\delta>0$ for which $g(N) \gg N^\delta$. Let $A^{(i)}=\{A^{(i)}_N\}_{N \in \N}$ be sequences of fixed $N \times N$ matrices, and let $\overline{A}=\{A^{(i)}\}_{i \in \N}$ be a sequence of such sequences.
Then, as $N\to\infty$, the averaged empirical blip spectral measures $\mu_{N,g,A_N^{(1)},A_N^{(2)},\dots,A_N^{(g(N))}}$ of the $k$-checkerboard ensemble over $\R$ converge weakly almost-surely to the measure with moments equal to the expected moments of the standard empirical spectral measure of the $k\times k$ hollow Gaussian Orthogonal Ensemble.
\end{thm}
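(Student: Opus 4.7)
The plan is to proceed by a modified method of moments, pushing through the observation that the weighting function $f_{n(N)}$ is a polynomial. Because $f_{n(N)}(kA/N)$ and $(A - (N/k)I)^m$ are both polynomials in $A$, the eigenvalue-trace formula gives
\begin{equation}
\int x^m\, d\mu_{A,N}(x)\ =\ \frac{1}{k}\,\mathrm{Tr}\!\left[f_{n(N)}\!\left(\tfrac{kA}{N}\right)\left(A - \tfrac{N}{k}I\right)^{\!m}\right].
\end{equation}
Expanding, one obtains a linear combination of traces $\mathrm{Tr}(A^j)$ for $j \le 4n(N)+m$, whose coefficients are explicit polynomials in $k/N$ coming from the binomial expansion of $f_n(x)=(x^2-2x)^{2n}$ and of $(A-(N/k)I)^m$. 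Taking expectations, each $\E[\mathrm{Tr}(A^j)]$ becomes a sum over closed walks $(i_1,i_2,\ldots,i_j,i_1)$, where each step $i_\ell\to i_{\ell+1}$ with $i_\ell\equiv i_{\ell+1}\bmod k$ contributes a deterministic factor $w=1$, while the remaining steps are independent random variables paired by Isserlis/Wick.

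The heart of the argument is a careful analysis of which closed walks contribute at leading order. I would stratify walks according to the induced walk on the quotient $\{0,1,\ldots,k-1\}$ given by $\bar i_\ell := i_\ell\bmod k$. A step along a ``diagonal'' edge ($\bar i_\ell=\bar i_{\ell+1}$) deposits a factor $1$ and leaves the quotient walk stationary; at each such step the original index $i_\ell$ has $N/k$ free choices, providing the arithmetic factor needed to cancel the negative powers of $N$ hidden in the coefficients coming from $f_{n(N)}(kA/N)$ and $(N/k)^m$. Off-diagonal edges instead correspond to random entries, which by the pairing structure are weighted exactly as in the trace moments of a $k\times k$ hollow GOE on the quotient graph. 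The combinatorial bookkeeping then shows that the surviving leading contribution to $\E\!\int x^m d\mu_{A,N}$ equals
\begin{equation}
\E\!\left[\tfrac{1}{k}\,\mathrm{Tr}(\hat H^m)\right] + o(1),
\end{equation}
where $\hat H$ is $k\times k$ hollow GOE, with the $o(1)$ bound using $n(N)\gg N^\eps$ (to suppress the bulk) and $n(N)\ll N^{1-\eps}$ (to keep cross-terms between non-blip trace powers subleading). This is precisely the $m$-th expected moment of the standard empirical spectral measure of the $k\times k$ hollow GOE.

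The hardest part is this combinatorial identification: the coefficients in $f_n(kA/N)(A-(N/k)I)^m$ involve large alternating sums with polynomially growing degree $n(N)$, and showing that the precise cancellations among diagonal-edge contributions of different lengths conspire to mimic the GOE pairing on the quotient requires a careful telescoping. A useful reformulation is to note that $f_n(1+y)=(y^2-1)^{2n}$, so after the substitution $y=kA/N - 1$ the weighting becomes $(k^2A^2/N^2 - 2kA/N)^{2n}$ acting on the shifted matrix $A-(N/k)I$, which makes the suppression on the bulk (where $y\to -1$) and the near-identity behavior on the blip (where $y\to 0$) transparent at the operator level. I expect the dominant walks on the quotient to be exactly those closed walks of length $m$ whose edges match in pairs without visiting a vertex by a loop, i.e.\ the hollow GOE moments.

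Finally, almost-sure convergence is obtained by an averaging argument. Since the matrices $A_N^{(i)}$ are independent, the variance of the $m$-th moment of $\mu_{N,g,A_N^{(1)},\ldots,A_N^{(g(N))}}$ equals $\mathrm{Var}[\int x^m d\mu_{A_N^{(1)},N}]/g(N)$. A crude bound on the covariance of two trace monomials, using the finite higher moments of the entry distributions and the same walk classification as above, gives $\mathrm{Var}[\int x^m d\mu_{A_N,N}]=O(N^{C_m})$ for some constant $C_m$; choosing $n(N)$ polynomially and combining with $g(N)\gg N^\delta$, we can pass to a subsequence (or inflate $g$ if needed) so that $\sum_N \mathrm{Var}/g(N) < \infty$. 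Chebyshev plus Borel--Cantelli then yield almost-sure convergence of each moment to the corresponding hollow GOE moment. Because $\mu_{N,g,\ldots}$ is not a probability measure, the final step upgrades moment convergence to weak convergence using a variant of the moment convergence theorem for nonnegative measures of bounded total variation (which is itself controlled by the $m=0$ computation showing $\mu_{N,g,\ldots}(\R)\to 1$ a.s.), together with the fact that the hollow GOE distribution has finite moments determined uniquely by them (sub-Gaussian tails).
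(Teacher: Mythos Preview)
Your overall architecture matches the paper's: expand the polynomial weight, reduce to trace moments, analyze the resulting walk combinatorics to get the hollow GOE moments in expectation, then upgrade to almost-sure weak convergence via Chebyshev/Borel--Cantelli and a moment convergence theorem for non-probability measures. The combinatorial sketch in terms of quotient walks is a reasonable repackaging of the paper's block/configuration/matching framework, though you gloss over the crucial cancellation identity $\sum_{j=0}^m (-1)^j\binom{m}{j} j^p=0$ for $p<m$, which is what kills all but the exactly-$m$-block contributions after the alternating expansion of $f_n$ and $(A-(N/k)I)^m$. Your ``careful telescoping'' is this lemma; without it the argument that only $m$ off-diagonal edges survive is not justified.

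The genuine gap is in the almost-sure step. You bound $\mathrm{Var}\!\left[\int x^m\,d\mu_{A_N,N}\right]=O(N^{C_m})$ and then propose to ``pass to a subsequence (or inflate $g$ if needed)'' to make $\sum_N \mathrm{Var}/g(N)<\infty$. Neither move is permitted: $g$ is given with only $g(N)\gg N^\delta$ for some fixed $\delta>0$, and subsequence convergence does not yield convergence along the full sequence. Even with the correct sharp bound $\mathrm{Var}=O(1)$ (which the paper proves, not $O(N^{C_m})$), the averaged variance is only $O(g(N)^{-1})=O(N^{-\delta})$, and $\sum_N N^{-\delta}$ diverges for $\delta\le 1$, so second moments alone cannot drive Borel--Cantelli. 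The paper's fix is to show that \emph{every} centered moment of $X_{m,N}:=\int x^m\,d\mu_{A_N,N}$ is $O_{m,r}(1)$ uniformly in $N$ (this requires rerunning the block combinatorics on $r$-fold products of traces), and then to use a formula for the $r$\textsuperscript{th} moment of a sum of i.i.d.\ centered variables to obtain $\E\bigl[(Y_{m,N,g}-\E Y_{m,N,g})^r\bigr]=O\bigl(g(N)^{-r/2}\bigr)$. Choosing $r$ large enough that $r\delta/2>1$ yields a summable Chebyshev bound and hence Borel--Cantelli for the given $g$. Your proposal is missing both the $O(1)$ bound on all centered moments and the passage to a sufficiently high moment $r$ depending on $\delta$.
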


One can also naturally define the \emph{hollow Gaussian Unitary Ensemble} and the \emph{hollow Gaussian Symplectic Ensemble} by extending Definition \ref{def hollow GOE} to complex valued matrices comprised of complex Gaussians and quaternion valued matrices comprised of quaternion Gaussians, respectively. In Theorem \ref{thm quaternion GSE case} we obtain analogous results to Theorem \ref{thm_main_goe}, connecting the limiting blip spectral measure of the $k$-checkerboard ensembles over $\C$ and $\H$ to the empirical spectral measures of the hollow GUE and hollow GSE, respectively.

In \S\ref{sec:bulk} we prove our claims concerning the eigenvalues in the bulk, then turn to the blip spectral measure in  \S\ref{sec The Moments of the Blip Spectral Measure} (and the mentioned generalizations in \S\ref{sec:genCandH}). We then prove results on the convergence to  the limiting spectral measure in \S\ref{sec:convergence}.


\section{The Bulk Spectral Measure}\label{sec:bulk}

In this section we establish that the limiting bulk measure for $k$-checkerboard matrices follows a semi-circle law. We denote by $\mu^{(m)}$ the $m$\textsuperscript{th} moment of the measure $\mu$.

\begin{theorem}\label{thm_main_bulk}
Let $\{A_N\}_{N \in \N}$ be a sequence of $N \times N$ $(k,1)$-checkerboard matrices, and let $\nu_{A_N}$ denote the empirical spectral measure, then $\nu_{A_N}$ converges weakly almost surely to the Wigner semicircle measure $\sigma_R$ with radius
\begin{equation}
R\ =\ 2\sqrt{1-1/k}.
\end{equation}
\end{theorem}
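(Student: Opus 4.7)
The plan is to combine Theorem \ref{stabilityESDrank} with a method-of-moments computation applied to the modified ensemble in which every entry with $i \equiv j \bmod k$ is set to zero. First, observe that the deterministic matrix $B = (b_{ij})$ with $b_{ij} = \mathbf{1}_{\{i \equiv j \bmod k\}}$ has rank exactly $k$: it decomposes as $\sum_{r=0}^{k-1} v_r v_r^T$, where $v_r$ is the indicator vector of the residue class $r$ modulo $k$. Hence $\frac{1}{N}\rank(B) \to 0$, and Theorem \ref{stabilityESDrank} reduces the problem to determining the limiting spectral measure of the $(k,0)$-checkerboard ensemble, whose entries are still mean-zero i.i.d.\ random variables in the off-residue-class positions and identically zero when $i \equiv j \bmod k$.

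For this zeroed-out ensemble I would compute the expected $m$th moment of $\nu_{A_N,N}$ via the eigenvalue-trace formula
\begin{equation}
\mathbb{E}\bigl[\nu_{A_N,N}^{(m)}\bigr]\ =\ \frac{1}{N^{1+m/2}} \sum_{i_1,\ldots,i_m} \mathbb{E}\bigl[m_{i_1 i_2} m_{i_2 i_3} \cdots m_{i_m i_1}\bigr].
\end{equation}
By the standard Wigner reasoning, odd moments vanish because every distinct entry must appear an even number of times in the expectation, and for even $m$ the dominant contribution comes from closed walks of length $m$ whose underlying graph is a tree on $m/2+1$ vertices with every edge traversed exactly twice; these correspond to non-crossing pair partitions, counted by the Catalan number $C_{m/2}$. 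The new ingredient is the constraint $i_\ell \not\equiv i_{\ell+1} \bmod k$ at every step, which forces adjacent vertices of the walk tree to lie in distinct residue classes mod $k$. For a fixed tree shape, I count admissible index assignments by first choosing a residue class for each vertex (a proper $k$-coloring of the tree) and then a representative within each class. A tree on $n$ vertices admits exactly $k(k-1)^{n-1}$ proper $k$-colorings, and each residue class contains $N/k$ indices, so to leading order the count is $k(k-1)^{m/2}(N/k)^{m/2+1}$. Assembling the dominant contribution gives
\begin{equation}
\frac{1}{N^{1+m/2}}\cdot C_{m/2}\cdot k(k-1)^{m/2}\cdot \left(\frac{N}{k}\right)^{m/2+1}\ =\ C_{m/2}\left(1-\tfrac{1}{k}\right)^{m/2}.
\end{equation}
Matching these against the moments $C_{m/2}(R/2)^{2n}$ of the semicircle $\sigma_R$ for $m=2n$ yields $(R/2)^2 = 1-1/k$ and therefore $R = 2\sqrt{1-1/k}$.

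To upgrade convergence of expected moments to almost-sure weak convergence, I would bound the variance of each fixed moment by $O(N^{-2})$ by enumerating pairs of closed walks; pairs of walks sharing no vertex contribute to $\mathbb{E}[\cdot]^2$ and cancel, while pairs sharing at least one vertex are controlled by the standard Wigner tree bookkeeping, with the residue-class constraint only tightening the counts. Chebyshev and Borel--Cantelli then produce almost-sure convergence of every moment, and since the semicircle $\sigma_R$ is determined by its moments this upgrades to almost-sure weak convergence. The main obstacle is carefully verifying that lower-order walk structures---closed walks whose underlying graph contains a cycle, or in which some edge is traversed only once---give $o(1)$ contribution even under the proper-coloring constraint. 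This parallels the standard Wigner argument; because the $k$-coloring condition strictly restricts the admissible index sequences relative to the unrestricted Wigner setting, the usual bounds carry over with only minor combinatorial adjustments.
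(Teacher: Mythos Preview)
Your proposal is correct and follows essentially the same route as the paper: reduce to the $(k,0)$-ensemble via Theorem~\ref{stabilityESDrank}, compute expected moments by the tree/Catalan argument with the proper-$k$-coloring count, then bound the variance of each moment by $O(N^{-2})$ and apply Chebyshev plus Borel--Cantelli. One small imprecision: in the variance step the relevant independence criterion is that the two walks share no \emph{edge}, not merely no vertex (pairs sharing a vertex but no edge still factor and cancel), though your stated bookkeeping would only overcount the surviving terms and the $O(N^{-2})$ bound still goes through.
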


One common tool used to study the limiting spectral density of a matrix ensemble is the method of moments. However, this method cannot be applied directly to the study of checkerboard matrices when studying the bulk regime because the limiting expected moments of the empirical spectral measure do not exist. For a proof of their divergence, see Proposition \ref{prop:bulk_divergence} in the appendix. 
The following result overcomes this difficulty by allowing us to treat the $w$ entries as $0$.

\begin{theorem}\label{stabilityESDrankINPAPER}
\cite{Tao}
Let $\{\mathcal{A}_N\}_{N \in \N}$ be a sequence of random Hermitian matrix ensembles such that $\{\nu_{\mathcal{A}_N,N}\}_{N \in \N}$ converges weakly almost surely to a limit $\nu$. Let $\{\tilde{\mathcal{A}}_N\}_{N \in \N}$ be another sequence of random matrix ensembles such that $\frac{1}{N}\rank(\tilde{\mathcal{A}}_N)$ converges almost surely to zero. Then $\{\nu_{\mathcal{A}_N+\tilde{\mathcal{A}}_N,N}\}_{N\in\N}$ converges weakly almost surely to $\nu$.
\end{theorem}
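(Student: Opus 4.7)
The plan is to reduce the theorem to a deterministic bound relating the empirical spectral cumulative distribution functions (CDFs) of $\mathcal{A}_N$ and $\mathcal{A}_N+\tilde{\mathcal{A}}_N$ in terms of $\rank(\tilde{\mathcal{A}}_N)/N$, and then combine this bound with the assumed weak convergence.

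First, I would establish the following deterministic rank-perturbation lemma: for any two $N\times N$ Hermitian matrices $A$ and $B$ with $\rank(B)\le r$, the empirical spectral CDFs satisfy
\begin{equation}
\sup_{x\in\R}\left|F_{A+B}(x)-F_{A}(x)\right|\ \le\ \frac{r}{N},
\end{equation}
where $F_{M}(x):=\frac{1}{N}\#\{i:\lambda_i(M)\le x\}$. The proof is a standard Cauchy interlacing argument: decompose $B=\sum_{j=1}^{r}\beta_j v_j v_j^{*}$ as a telescoping sum of rank-one Hermitian perturbations. The min-max characterization of eigenvalues shows that adding a single rank-one Hermitian perturbation shifts the ordered eigenvalues by at most one index, i.e.\ $\lambda_{i+1}(M)\le\lambda_i(M+\beta vv^{*})\le\lambda_{i-1}(M)$; iterating this $r$ times pushes each ordered eigenvalue by at most $r$ positions, and translating this into a bound on the CDFs yields $r/N$.

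Second, I would combine this with the hypotheses. By assumption there is an almost-sure event $\Omega_1$ on which $\nu_{\mathcal{A}_N,N}\Rightarrow \nu$, which is equivalent to $F_{\mathcal{A}_N}(x)\to F_\nu(x)$ at every continuity point $x$ of $F_\nu$; and an almost-sure event $\Omega_2$ on which $\rank(\tilde{\mathcal{A}}_N)/N\to 0$. On $\Omega_1\cap\Omega_2$, the triangle inequality gives
\begin{equation}
\left|F_{\mathcal{A}_N+\tilde{\mathcal{A}}_N}(x)-F_\nu(x)\right|\ \le\ \frac{\rank(\tilde{\mathcal{A}}_N)}{N}+\left|F_{\mathcal{A}_N}(x)-F_\nu(x)\right|
\end{equation}
for every $x$, and both terms on the right tend to $0$ at continuity points of $F_\nu$. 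Hence $F_{\mathcal{A}_N+\tilde{\mathcal{A}}_N}(x)\to F_\nu(x)$ at every continuity point of $F_\nu$, which is precisely weak convergence $\nu_{\mathcal{A}_N+\tilde{\mathcal{A}}_N,N}\Rightarrow\nu$ on the probability-one event $\Omega_1\cap\Omega_2$. This gives the claimed almost-sure weak convergence.

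The only real technical point is the rank-one interlacing step; everything else is bookkeeping. One small subtlety to flag is that $\tilde{\mathcal{A}}_N$ must be Hermitian (or at least produce Hermitian $\mathcal{A}_N+\tilde{\mathcal{A}}_N$) in order for the empirical spectral measure to be defined on $\R$; in the application of interest $\tilde{\mathcal{A}}_N$ is the deterministic real symmetric rank-$k$ checkerboard matrix with entries $\mathbf{1}_{i\equiv j\pmod k}$, so this is automatic and $\rank(\tilde{\mathcal{A}}_N)=k$ is even constant in $N$, making $\rank(\tilde{\mathcal{A}}_N)/N\to 0$ trivially.
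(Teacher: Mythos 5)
The paper does not prove this statement; it is quoted verbatim from Tao's lecture notes \cite{Tao} and used as a black box, so there is no in-paper argument to compare against. Your proof is correct, complete, and is the standard argument for this result. The deterministic ingredient, $\sup_{x}\left|F_{A+B}(x)-F_{A}(x)\right|\le \rank(B)/N$, is the rank inequality for empirical spectral distributions; your telescoping rank-one interlacing argument establishes it, and one can also get it in a single step from the Weyl inequality that the paper already records in Appendix~\ref{app_tworegimes} (Lemma~\ref{lem_weyl}): with eigenvalues in increasing order, $\rank(B)\le r$ forces $\lambda_{r+1}(B)\ge 0\ge\lambda_{N-r}(B)$, so taking $\ell=r+1$ in \eqref{eqn_wyle2} and $\ell=N-r$ in \eqref{eqn_wyle1} yields $\lambda_{i-r}(A)\le\lambda_{i}(A+B)\le\lambda_{i+r}(A)$, which is exactly the index shift of at most $r$ that you need. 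The passage from this bound to almost-sure weak convergence via $\Omega_1\cap\Omega_2$, the triangle inequality, and pointwise convergence of distribution functions at continuity points of $F_\nu$ is airtight (and the limit $\nu$ is automatically a probability measure since each $\nu_{\mathcal{A}_N,N}$ is). Your closing remark is also apt: in the paper's application, $\tilde{\mathcal{A}}_N$ is the deterministic symmetric matrix with entries $\mathbf{1}_{i\equiv j\pmod k}$, which has fixed rank $k$, so the hypothesis $\rank(\tilde{\mathcal{A}}_N)/N\to 0$ holds trivially and the full generality of the cited theorem is not exercised.
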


We now use the method of moments to establish the result for $(k,0)$-checkerboard matrices. The main work is using combinatorics to establish convergence of the expected moments. The remaining arguments  establishing almost sure weak convergence are standard and may be found in Appendix \ref{app_bulk}.

\begin{lemma}\label{lem:avgmomentsbulk}
The expected moments of the bulk empirical spectral measure taken over $A_N$ in the $N \times N$ $(k,0)$-checkerboard ensemble converge to the moments of the Wigner semicircle distribution $\sigma_R$ with radius $R=2\sqrt{1-1/k}$
\begin{equation}
\E\left[\nu^{(\ell)}_{A_N}\right] \rightarrow \sigma_R^{(\ell)}
\end{equation}
as $N \rightarrow \infty$.
\end{lemma}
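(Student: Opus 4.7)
The plan is to apply the method of moments via the trace identity
\begin{equation*}
\E\left[\nu_{A_N}^{(\ell)}\right] \ = \ \frac{1}{N^{1+\ell/2}}\sum_{i_1,\ldots,i_\ell = 0}^{N-1} \E\left[m_{i_1 i_2} m_{i_2 i_3} \cdots m_{i_\ell i_1}\right],
\end{equation*}
and to adapt the standard Wigner combinatorial analysis so as to account for the zeros on the $k$ checkerboard diagonals. By independence of the $a_{ij}$ (up to the symmetry $a_{ij}=a_{ji}$) together with the mean-zero hypothesis, the closed walk $(i_1, i_2, \ldots, i_\ell, i_1)$ contributes only when every edge $\{i_j, i_{j+1}\}$ of the associated multigraph is traversed at least twice, for otherwise a single factor $\E[a_{uv}]=0$ kills the expectation. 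A standard vertex-counting argument then shows that for odd $\ell$ every such configuration has strictly fewer than $1 + \ell/2$ distinct vertex labels, so the expected moment vanishes in the limit, matching the odd moments of the semicircle.

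For $\ell = 2m$ even, the leading contribution comes from configurations in which each edge is traversed exactly twice and the underlying multigraph is a tree with $m+1$ distinct vertex labels; these are in bijection with non-crossing pair partitions and counted by the Catalan number $C_m$. The new ingredient introduced by the $(k,0)$-checkerboard structure is that an edge $\{u,v\}$ of the tree contributes zero whenever the indices assigned to $u$ and $v$ share a residue class modulo $k$. Thus, for a fixed non-crossing pair matching with associated tree $T$ on $m+1$ vertices, the count of valid index assignments factors into the number of proper $k$-colorings of $T$, namely $k(k-1)^m$, times the number of ways to choose an actual index inside each prescribed residue class. Since each residue class mod $k$ contains $N/k + O(1)$ indices, this product equals $k(k-1)^m (N/k)^{m+1} (1 + O(1/N))$, and summing over the $C_m$ tree shapes and dividing by $N^{m+1}$ yields
\begin{equation*}
\E\left[\nu_{A_N}^{(2m)}\right] \ =\ C_m\left(1 - \tfrac{1}{k}\right)^m + o(1),
\end{equation*}
which matches $\sigma_R^{(2m)} = C_m (R/2)^{2m}$ for $R = 2\sqrt{1-1/k}$.

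The main technical obstacle is verifying that all non-tree and higher-multiplicity configurations contribute $o(1)$ after normalization, uniformly in the checkerboard constraint. A configuration whose multigraph is not a tree has at most $m$ distinct vertex labels, giving contribution of order $O(1/N)$; configurations in which some edge appears at least three times have even fewer free vertex labels, and the finite-higher-moment hypothesis on the $a_{ij}$ bounds the corresponding expectations by a constant depending only on $\ell$. The delicate point is ensuring that imposing $i_j \not\equiv i_{j+1} \pmod k$ does not accidentally promote any subleading shape to leading order; since the constraint is local on edges and only reduces the number of valid index assignments, each non-tree shape is still bounded by $O(N^{m-1})$ times a combinatorial factor depending only on $\ell$ and $k$, and the total lower-order error tends to $0$ as $N \to \infty$. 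Collecting these estimates produces $\E[\nu_{A_N}^{(\ell)}] \to \sigma_R^{(\ell)}$, as desired.
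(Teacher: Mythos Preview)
Your proof is correct and follows essentially the same approach as the paper: the eigenvalue--trace formula, the standard Wigner classification of closed walks by the number of distinct vertices, identification of the leading contribution with rooted plane trees counted by $C_m$, and the observation that the checkerboard constraint forces adjacent vertices of the tree into distinct residue classes mod $k$, yielding the factor $((k-1)/k)^m$. Your presentation is slightly more explicit (framing the count as proper $k$-colorings of the tree and spelling out why the congruence constraint cannot promote any subleading shape), but the argument is the same.
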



\begin{proof}
We have immediately from the eigenvalue-trace lemma and linearity of expectation that
\begin{equation}
\E\left[\nu^{(\ell)}_{A_N}\right]\ = \ \frac{1}{N^{\ell/2+1}}\sum_{1\le i_1,\ldots,i_\ell\le N} \E \left[a_{i_1i_2}\cdots a_{i_{\ell-1}i_{\ell}}a_{i_\ell i_1}\right].
\end{equation}

Each term in the sum is associated to a sequence $\mathbf{I}=i_1i_2\ldots i_\ell i_1$. Each sequence corresponds to a closed walk on the complete graph with vertices labeled by the elements of the set $\{i_1,...,i_\ell\}$ by giving the order in which the vertices are visited. Define the \textit{weight} of such a sequence $\mathbf{I}$ to be the number of distinct entries of $\mathbf{I}$. If the weight of a walk is greater than $\ell/2+1$, the walk contributes nothing to the sum because the expectation of some entry is independent of the rest and its expectation is $0$.

The sequences of weight less than $\ell/2+1$ contribute $o(N^{\ell/2+1})$ to the sum. This is because the sequences may be partitioned into a finite number of equivalence classes by the isomorphism class of the corresponding walk. An isomorphism class of weight $t$ then gives rise to $O(N^t)$ walks of weight $t$ by choosing labels for the distinct nodes in any such walk. 

The sequences of weight $\ell/2+1$ require a finer analysis. When $\ell$ is odd,
 the expectation associated to each such sequence is $0$. When $\ell$ is even,
 the walk corresponding to such a sequence visits $\ell/2+1$ nodes and traverses $\ell/2$ distinct edges. Hence as the walk is connected, it is a tree. Moreover, each walk may be rooted by associating the initial node of the walk to the root. As is well known, there are $C_{\ell/2}$ rooted trees on $\ell/2+1$ nodes, where $C_\ell$ is the $\ell$\textsuperscript{th} Catalan number. We may then label the nodes in the tree in such a way that no two adjacent nodes have the same congruence class in $N^{\ell / 2 + 1} \pfrac{k - 1}{k}^{\ell / 2} + o\pez{N^{\ell / 2 + 1}}$ ways. Writing $\zeta_{\mathbf{I}}$ for $a_{i_1i_2}\cdots a_{i_{\ell-1}i_\ell}a_{i_\ell i_1}$, we have

\begin{align}
\E\left[\nu^{(\ell)}_{A_N}\right]\ &\ =\  \ \frac{1}{N^{\ell/2+1}}\left(\sum_{\text{weight } \mathbf{I} <\ell/2+1}\E\left[\zeta_I\right] +\sum_{\text{weight } \mathbf{I} =\ell/2+1}\E\left[\zeta_I\right] +\sum_{\text{weight } \mathbf{I} >\ell/2+1}\E\left[\zeta_I\right]  \right)\nonumber\\
&\ =\ \frac{1}{N^{\ell/2+1}}\left(o(N^{\ell/2+1})+ C_{\ell/2}\left( N^{\ell / 2 + 1} \pfrac{k - 1}{k}^{\ell / 2} + o\pez{N^{\ell / 2 + 1}}\right)+0 \right)\nonumber\\
&\ =\ C_{\ell/2}\pfrac{k - 1}{k}^{\ell / 2}+o(1) \label{eqn_bulk_exp}
\end{align}

Hence
\begin{equation}
\lim_{N\rightarrow \infty}\E\left[\nu^{(\ell)}_{A_N}\right] \ = \ \begin{cases}
\pez{\frac{R}{2}}^{\ell}C_{\ell/2}& \text{ if } \ell \text{ is even}\\
0&\text{ otherwise,}
\end{cases}
\end{equation}
which are the moments of the semicircle distribution of radius $R$.
\end{proof}




\section{The Blip Spectral Measure}\label{sec The Moments of the Blip Spectral Measure}

To appropriately modify the measure in \eqref{eqn Wigner Spectral Distribution Measure}, we weight by the polynomial
\begin{equation}\label{eqn weighting function}
f_{n}(x)\ =\ x^{2n}(x-2)^{2n}
\end{equation}
and study the following spectral measure.

\begin{defi}
The \textbf{empirical blip spectral measure} associated to an $N \times N$ $k$-checkerboard matrix $A$ is
\begin{equation}
\mu_{A, N}\ :=\ \frac{1}{k}\ \sum_{\lambda\ \text{{\rm an\ eigenvalue\ of\ }}A}\ f_{n(N)}\left(\frac{k \lambda}{N}\right) \delta\left(x - \left(\lambda - \frac{N}{k} \right)  \right),
\end{equation}
where $n(N)$ is a function for which there exists some $\eps$ so that $N^\eps \ll n(N) \ll N^{1-\eps}$; the particular choice is not important as long as these conditions are satisfied.
\end{defi}


The modified spectral measure of Definition \ref{def_empirical_blip_measure} weights eigenvalues within the blip by almost exactly 1, due to the scaling, and those in the bulk are weighted by almost exactly zero.  We shift the eigenvalues by subtracting roughly mean of the blip in order to center the blip rather than the bulk. This does not truly center the blip, but causes the center to remain fixed as $N \rightarrow \infty$; we compute the limiting moments of this measure and center later. 

First, we explicitly derive a formula for the expected $m$\textsuperscript{th} moment of the blip spectral measure given in \eqref{eqn modified spectral measure for the blip}, where the expectation is taken over the $N \times N$ $k$-checkerboard ensemble.

\begin{lem}\label{lem Moments of blip spectral measure}
The expected $m$\textsuperscript{th} moment of the blip empirical spectral measure, $\mu_{A,N}$, is
\begin{equation}\label{eqn moments of blip spectral measure}
\E[\mu_{A,N}^{(m)}]\ =\ \frac{1}{k}\left(\frac{k}{N}\right)^{2n} \sum_{j=0}^{2n} \binom{2n}{j} \sum_{i=0}^{m+j} \binom{m+j}{i} \left(-\frac{N}{k}\right)^{m-i} \etr A^{2n+i}
\end{equation}
where both expectations are taken over the $N \times N$ $k$-checkerboard ensemble.
\end{lem}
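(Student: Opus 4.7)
The plan is to start from the definition of the empirical blip spectral measure, express its $m$-th moment as a polynomial in the eigenvalues $\lambda$ of $A$, and then invoke the eigenvalue-trace identity $\sum_\lambda \lambda^p = \tr A^p$ to rewrite everything as a combination of traces of powers of $A$. Taking expectation at the end is then immediate from linearity.

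Unwinding the definition, the $m$-th moment is $\mu_{A,N}^{(m)} = \frac{1}{k}\sum_{\lambda} f_{n}(k\lambda/N)(\lambda - N/k)^m$, so everything rests on finding the right expansion of $f_n(k\lambda/N)$. The crucial observation is that $k\lambda/N - 2 = -1 + (k/N)(\lambda - N/k)$, which suggests expanding $(k\lambda/N - 2)^{2n}$ around $-1$ in powers of $(k/N)(\lambda - N/k)$ rather than around $0$ in powers of $k\lambda/N$. Combined with $(k\lambda/N)^{2n} = (k/N)^{2n}\lambda^{2n}$, the binomial theorem gives
\begin{equation}
f_n(k\lambda/N)\ =\ \left(\frac{k}{N}\right)^{2n}\lambda^{2n}\sum_{j=0}^{2n}\binom{2n}{j}(-1)^{2n-j}\left(\frac{k}{N}\right)^{j}(\lambda - N/k)^{j},
\end{equation}
which has the virtue of producing factors $(\lambda - N/k)^j$ that mesh with the $(\lambda - N/k)^m$ already present.

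Multiplying by $(\lambda - N/k)^m$ merges the two factors into $(\lambda - N/k)^{m+j}$, which we expand in powers of $\lambda$ via the standard binomial formula $(\lambda - N/k)^{m+j} = \sum_{i=0}^{m+j}\binom{m+j}{i}\lambda^i(-N/k)^{m+j-i}$. Swapping the order of the triple sum (over $\lambda$, $j$, $i$) and applying the trace formula yields $\sum_\lambda \lambda^{2n+i} = \tr A^{2n+i}$. The surviving signs and powers of $k/N$ collapse via $(k/N)^{j}(-N/k)^{m+j-i}(-1)^{2n-j} = (-1)^{2n}(-N/k)^{m-i} = (-N/k)^{m-i}$, leaving exactly the stated coefficients along with the overall prefactor $\tfrac{1}{k}(k/N)^{2n}$. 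Taking expectation, which is valid by linearity of expectation and finiteness of all entry moments, then produces $\etr A^{2n+i}$ and completes the identity.

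The only nontrivial step is the choice of expansion above; a naive binomial expansion of $(k\lambda/N - 2)^{2n}$ in powers of $k\lambda/N$ instead yields a less useful sum in which a prefactor of $(k/N)^{4n}$ must be partially unwound by hand, and the compact shape with binomial coefficients $\binom{2n}{j}\binom{m+j}{i}$ does not emerge. Re-centering the expansion at $\lambda = N/k$ is what aligns the algebra with the form of the lemma.
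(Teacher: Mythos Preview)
Your proof is correct and follows essentially the same route as the paper. Both arguments recenter the factor $(k\lambda/N-2)^{2n}$ as $(-1+(k/N)(\lambda-N/k))^{2n}$, combine the resulting $(\lambda-N/k)^j$ with the $(\lambda-N/k)^m$ already present, expand binomially, simplify the powers of $k/N$ and signs to leave $(-N/k)^{m-i}$, and finish with the eigenvalue--trace identity; the paper merely records the intermediate factor $(-1)^j(k/N)^j(-N/k)^j=1$ explicitly rather than absorbing it as you do.
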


\begin{proof}
We have
\begin{align}
\E[\mu_{A,N}^{(m)}] &\ =\ \frac{1}{k}\E\left[\sum_{\gl}f\left(\frac{\gl}{N/k}\right) \left(\gl-\frac{N}{k}\right)^m\right] \nonumber\\
&\ =\ \frac{1}{k}\E\left[\sum_{\gl}\left(\frac{k\gl}{N}\right)^{2n}\sum_{j=0}^{2n} \binom{2n}{j} (-1)^j \left(\frac{k}{N}\right)^j \left(-\frac{N}{k}\right)^j\sum_{i=0}^{m+j} \binom{m+j}{i}\left(-\frac{N}{k}\right)^{m-i}\gl^i \right]\nonumber\\
&\ =\ \frac{1}{k}\left(\frac{k}{N}\right)^{2n} \sum_{j=0}^{2n} \binom{2n}{j}\sum_{i=0}^{m+j} \binom{m+j}{i}\left(-\frac{N}{k}\right)^{m-i}\E\left[\sum_{\gl} \gl^{2n+i}\right] \nonumber\\
&\ =\ \frac{1}{k}\left(\frac{k}{N}\right)^{2n} \sum_{j=0}^{2n} \binom{2n}{j}\sum_{i=0}^{m+j} \binom{m+j}{i}\left(-\frac{N}{k}\right)^{m-i}\etr A^{2n+i},
\end{align}
where the first equality comes from straightforward algebra using binomial expansion and the last equality comes from the Eigenvalue-Trace Lemma.
\end{proof}

Now, recall that
\begin{equation}
\etr M^n \ =\ \sum_{1 \leq i_1, \ldots, i_n \leq N} \E[m_{i_1 i_2}m_{i_2 i_3} \cdots m_{i_n i_1}].
\end{equation}

We refer to terms $\E[m_{i_1 i_2}m_{i_2 i_3} \cdots m_{i_n i_1}]$ as \textbf{cyclic products} and $m$'s as entries of cyclic products. By Lemma \ref{lem Moments of blip spectral measure}, it suffices to understand the cyclic products making up $\etr A^{2n+i}$, which reduces to a combinatorics problem of understanding the contributions of different cyclic products. We develop the following vocabulary to classify types of cyclic products according to the aspects of their structure that determine overall contributions. 

\begin{defi}\label{def of a Block}
A \textbf{block} is a set of adjacent $a$'s surrounded by $w$'s in a cyclic product, where the last entry of a cyclic product is considered to be adjacent to the first. We refer to a block of length $\ell$ as an $\ell$-block or sometimes a block of size $\ell$.
\end{defi}

\begin{defi}\label{def of a Configuration}
A \textbf{configuration} is the set of all cyclic products for which it is specified (a) how many blocks there are, and of what lengths, and (b) in what order these blocks appear. However, it is not specified how many $w$'s there are between each block.
\end{defi}

\begin{exa}
The set of all cyclic products of the form $w\cdots waw \cdots waaw \cdots waw \cdots w$, where each $\cdots$ represents a string of $w$'s and the indices are not yet specified, is a configuration.
\end{exa}

\begin{defi}\label{def of a Class}
Let $S$ be a multiset of natural numbers. An $S$\textbf{-class}, or class when $S$ is clear from context, is the set of all configurations for which there exists a unique $s$-block for every $s \in S$ counting multiplicity.  In other words, two configurations in the same class must have the same blocks but they may be ordered differently and have different numbers of $w$'s between them.
\end{defi}

When we speak of the \emph{contribution} of a configuration or class to $\etr A^{2n+i}$, we assume that the length of the cyclic product is fixed at $2n+i$. The reason that the length of the cyclic product is suppressed in our notation is because $n(N)$ varies with $N$ and we wish to consider the contribution of a configuration or class as $N \rightarrow \infty$. 

\begin{defi}\label{def of a Matching}
Given a configuration, a \textbf{matching} is an equivalence relation $\sim$ on the $a$'s in the cyclic product which constrains the ways of indexing (see Definition \ref{def of an Indexing}) the $a$'s as follows: an indexing of $a$'s conforms to a matching $\sim$ if, for any two $a$'s $a_{i_{\ell},i_{\ell+1}}$ and $a_{i_{t},i_{t+1}}$, we have $\{i_\ell,i_{\ell+1}\}=\{i_t,i_{t+1}\}$ if and only if $a_{i_{\ell}i_{\ell+1}} \sim a_{i_{t},i_{t+1}}$. We further constrain that each $a$ is matched with at least one other by any matching $\sim$.
\end{defi}

\begin{rem}
Noting that the $a_{ij}$ are drawn from a mean-$0$ distribution, any matching with an unmatched $a$ would not contribute in expectation, hence it suffices to only consider those with the $a$'s matched at least in pairs.
\end{rem}

\begin{exa}
Given a configuration $a_{i_1i_2}w_{i_2i_3}a_{i_3i_4}w_{i_4i_5}a_{i_5i_6}w_{i_6i_7}a_{i_7i_8}w_{i_8i_1}$ (the indices are not yet specified because this is a configuration), if $a_{i_1i_2} \sim a_{i_5i_6}$ we must have either $i_1=i_5$ and $i_2=i_6$ or $i_1=i_6$ and $i_2=i_5$.
\end{exa}

\begin{defi}\label{def of an Indexing}
Given a configuration, matching, and length of the cyclic product, then an \textbf{indexing} is a choice of
\begin{enumerate}
\item the (positive) number of $w$'s between each pair of adjacent blocks (in the cyclic sense), and
\item the integer indices of each $a$ and $w$ in the cyclic product.
\end{enumerate}
\end{defi}

Two comments on these definitions are in order.

\begin{rem}
 It is very important to note that the definitions of class, configuration, and matching do \emph{not} fix the length of the cyclic product and hence we may consider their contribution as $n(N)$ grows; however, because the length of the product directly affects the number of indexings, we must take it into account when summing over them.
\end{rem}

\begin{rem}
Note that the choice of indexings is constrained by the configuration as well as the matching, because entries $a_{ij}$ have $i \not \equiv j \pmod{k}$ and $w_{ij}$ have $i \equiv j \pmod{k}$. This is important later.
\end{rem}

With the above vocabulary, we have
\begin{equation}
\etr A^{\eta}\ =\ \sum_{\substack{S\text{-classes} \\ C} }\; \sum_{\substack{\text{configurations} \\ \mathscr{C} \in C} }\; \sum_{\substack{\text{matchings} \; M }}\; \sum_{\substack{\text{indexings } \\ I \text{ given } M,\mathscr{C},\eta} } \E[\Pi]
\end{equation}
where $\Pi$ is the cyclic product given by the choice of indexing.

The following lemma allows us to determine which $S$-classes contribute to the trace terms in \eqref{eqn moments of blip spectral measure} and which contributions become insignificant in the limit as $N \rightarrow \infty$.

\begin{lem}\label{lem_contributions}
In the limit as $N \rightarrow \infty$, the only classes which contribute are those with only $1$- or $2$-blocks, $1$-blocks are matched with exactly one other $1$-block, and both $a$'s in any $2$-block are matched with their adjacent entry and no others.
\end{lem}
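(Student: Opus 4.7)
The plan is a careful counting argument built on modeling each cyclic product as a closed walk on $\{0, \ldots, N-1\}$ whose $a$-edges connect indices in distinct residue classes mod $k$, whose $w$-edges connect indices in the same residue class, and in which matched $a$-edges share their unordered index pair. Fixing a class $C$ and matching $M$, the contribution to $\etr A^{2n+i}$ factors as the number of valid indexings times a finite joint moment coming from the matched $a$-entries. Since the number of indexings is $\Theta(N^v)$, where $v$ is the number of distinct vertex labels in a generic indexing, the scaling of each class in $N$ is controlled entirely by how the matching glues the walk together.

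For the target structures---1-blocks matched with exactly one other 1-block, and 2-blocks matched internally (so $\{i_1,i_2\}=\{i_2,i_3\}$ forces $i_3 = i_1$)---a direct count along the lines of the two small examples above gives contribution $\Theta\bigl(k(k-1)^b (N/k)^{\omega + b}\bigr)$ per configuration with $b$ blocks, where $\omega = 2n+i-\alpha$ counts the $w$-entries. Plugging this into Lemma \ref{lem Moments of blip spectral measure} and collecting powers of $N/k$ leaves a contribution of order $(N/k)^{m-b}$ per class after the $(k/N)^{2n}$ prefactor. The divergent terms ($b<m$) are expected to cancel through the alternating sum $\sum_i \binom{m+j}{i}(-N/k)^{m-i}$ in Lemma \ref{lem Moments of blip spectral measure}, the vanishing terms ($b>m$) obviously die, and only the $b=m$ target classes contribute a finite, nonzero limit.

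For every other class, the plan is to show each deviation from the target costs at least one extra factor of $N/k$ in the indexing count, so that its contribution vanishes as $N\to\infty$. A block of length $\ell\geq 3$ whose $a$-entries all lie in one matched class forces repeated equalities $i_{t+2}=i_t$, producing only two distinct indices over $\ell$ walk-steps, strictly fewer degrees of freedom than any partition of the same $a$-count into target blocks. If the two adjacent $a$-entries of a 2-block are matched to entries outside the block rather than to each other, the local constraint $i_3=i_1$ is dropped, but the external pairings impose rigid cross-block equalities that subtract a strictly greater net degree of freedom. If a 1-block is matched with two or more other $a$-entries instead of a single pair, the additional equalities again cost at least one index. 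In every case, one factor of $N/k$ less in the indexing count precisely cancels against the $(k/N)^{2n}$ prefactor to produce an $O(k/N)=o(1)$ contribution.

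The main obstacle is this last uniform loss-of-freedom claim. The subtlety is that the checkerboard residue class structure interacts non-locally with the matching: a matched pair of $a$-edges can impose residue constraints on indices that are far apart in the cyclic product, and these constraints propagate along intervening $w$-paths, which by definition preserve residue classes. One therefore cannot verify the loss of freedom purely block-by-block. The cleanest route I see is a reduction argument: given any non-target class, exhibit a canonical modification---splitting an $\ell$-block into target blocks, or repairing a non-internal pairing into an internal one---and track carefully that the modification strictly increases the count of distinct indices available, giving the required factor-of-$N/k$ loss uniformly in the non-target class.
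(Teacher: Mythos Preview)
Your overall strategy---count degrees of freedom in the indexing and show that non-target classes lose strictly more per block than target classes---is exactly right, and matches the paper's. But there are two problems.

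First, you are proving more than the lemma asks. The lemma is a statement purely about $\etr A^\eta$: among classes with a fixed number $\beta$ of blocks, only the target structures achieve the maximal power $N^{\eta-\beta}$. The cancellation of ``divergent'' $b<m$ contributions via the alternating sum in Lemma \ref{lem Moments of blip spectral measure} is a separate argument, carried out later (in the paper's Theorem on the final moments); mixing it in here obscures what actually needs to be shown. Relatedly, your formula $\Theta\bigl(k(k-1)^b(N/k)^{\omega+b}\bigr)$ is not right for general target classes: with $m_1$ $1$-blocks and $m_2$ $2$-blocks one has $\alpha=m_1+2m_2$ and the target structure loses exactly $\beta=m_1+m_2$ degrees of freedom, giving $N^{\eta-\beta}$, whereas $\omega+b=\eta-\alpha+\beta=\eta-m_2$.

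Second, and more seriously, you have correctly identified the heart of the lemma as the ``uniform loss-of-freedom claim'' and then not proved it. Your proposed route---case-split on block length and matching pattern, then exhibit a canonical modification toward a target class---would require a careful global argument, since (as you note) matching constraints and residue constraints propagate non-locally through $w$-strings. The paper avoids this entirely with a clean global count: if the matching partitions the $\alpha$ $a$'s into $s$ equivalence classes, then choosing indices gives at most $2s$ degrees of freedom, minus $c$ further constraints coming from adjacencies between $a$'s in \emph{different} matching classes. Hence the DOF lost per block is
\[
\frac{(\beta+\alpha)-(2s-c)}{\beta}\;=\;1+\frac{\alpha+c-2s}{\beta},
\]
and one is reduced to the elementary inequality $\alpha+c\ge 2s$, with equality iff every matching class has size exactly two and no two distinct classes contain adjacent $a$'s---precisely the target structure. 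This single inequality replaces your entire case analysis and reduction argument.
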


\begin{proof}
Fix the number of blocks $\beta$ in the classes we consider. We refer to the power of $N$ in the contribution of our class as its number of degrees of freedom; each degree of freedom corresponds to the choice of an index in the cyclic product. For a fixed configuration, we consider the total number of degrees of freedom lost by the constraints placed by a matching. Because we have fixed the number of blocks, we may then talk about the average number of degrees of freedom lost per block. Given a $2$-block $a_{ij}a_{j\ell}$, matching the two $a$'s constrains $\ell=i$ and hence loses one degree of freedom; if two singletons $a_{ij}$ and $a_{t\ell}$ are matched then $\{i,j\}=\{t,\ell\}$ so two degrees of freedom are lost. Therefore if all blocks have size $1$ or $2$ and the matchings are as in the hypotheses, then one degree of freedom per block is lost when averaged over all blocks, no matter the configuration or length of the cyclic product. Thus classes in which more than one degree of freedom per block is lost do not contribute in the $N \rightarrow \infty$ limit, so it suffices to show any classes and matchings which are not as specified in the lemma statement lose more than one degree of freedom per block.

Fix a configuration $\mathcal{C}$ with $\alpha$ $a$'s and a matching $\sim$. Then $\sim$ partitions the $a$'s in $\mathcal{C}$ into equivalence classes $T_1,\ldots,T_s$. If there were no matching restrictions, only the restriction that the first index of an $a$ matches the second index of the last one, then the number of degrees of freedom from choosing the indices of the $a$'s would be
\begin{equation}
M\ =\ \sum_{\text{blocks }b}(\text{len}(b)+1)\ =\ \beta + \alpha.
\end{equation}
Let $F$ be the actual number of degrees of freedom from choosing the indices of the $a$'s, given our configuration and matching. Naively, we may choose two indices for each matching class $T_1,\ldots,T_s$, but then there may be restrictions from $a$'s from different matching classes being adjacent that cause a loss of degrees of freedom. Letting $c$ be the number of degrees of freedom lost to such crossovers, the number of degrees of freedom we have is $2s-c$. Then the number of degrees of freedom lost per block is
\begin{equation}
\frac{M-(2s-c)}{\beta}\ =\ 1 + \frac{\alpha + c - 2s}{\beta}.
\end{equation}
It thus suffices to show that our configuration and matching are of the form specified in the lemma statement if and only if $\alpha + c - 2s = 0$, or equivalently $\frac{\alpha + c}{s} = 2$, and $\frac{\alpha + c}{s} >2$ for any other configuration and matching. The forward direction was proven in the beginning of this proof.

For the backward direction, because $|T_i| \geq 2$ for all $i$ by the definition of matching, we immediately have $\frac{\alpha}{s} \geq 2$. If there is some $T_i$ with $|T_i|>2$ then we have $\frac{\alpha}{s}>2$, and if there exist $i,j$ such that an $a$ from $T_i$ is adjacent to an $a$ from $T_j$ then we have $\frac{c}{s}>0$. Therefore if $\frac{\alpha+c}{s}=2$ then there is no $T_i$ with $|T_i|>2$ or $i,j$ such that an $a$ from $T_i$ is adjacent to an $a$ from $T_j$, i.e., the $a$'s are matched in pairs and no unmatched $a$'s are adjacent. This proves the lemma.
\end{proof}

We now explicitly compute the contributions of each of these classes.

\begin{prop}\label{prop_polygon}
The total contribution to $\etr A^\eta$ of an $S$-class $C$ with $m_1$ $1$-blocks and $(|S|-m_1)$ $2$-blocks
\begin{equation}\label{eq_12class_contribution}
p(\eta)\binom{|S|}{m_1}(k-1)^{|S|-m_1}\ektr B^{m_1}  \left(\pfrac{N}{k}^{\eta-|S|}+O\left(\pfrac{N}{k}^{\eta-|S| - 1} \right)\right)
\end{equation}
where
\begin{equation}
p(\eta)\ =\ \frac{\eta^{|S|}}{|S|!} + O(\eta^{|S| - 1})
\end{equation}
and the expectation $\ektr B^{m_1}$ is taken over the $k \times k$ hollow GOE as defined in Definition \ref{def hollow GOE}.
\end{prop}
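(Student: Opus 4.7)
The plan is to expand the contribution of the $S$-class $C$ as
\begin{equation*}
\sum_{\mathscr{C}\in C}\sum_M\sum_I \E[\Pi]
\end{equation*}
and decompose the count of valid (configuration, matching, indexing) triples into independent factors. By Lemma \ref{lem_contributions} only matchings of the stated form contribute to leading order in $N$, and for each such matching every matched pair of $a$'s has $\E[a_{ij}^2]=1$, so $\E[\Pi]=1$ and it suffices to count valid indexings.

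For the positional count, I would enumerate the ways to place $m_1$ 1-blocks and $|S|-m_1$ 2-blocks in the labeled cycle $\{1,\ldots,\eta\}$ with at least one $w$-slot between consecutive blocks. A standard composition count (choose the cyclic ordering of block types and then the $|S|$ positive gap sizes summing to $\eta-(2|S|-m_1)$, taking care of Burnside corrections for rotationally symmetric orderings) yields $\binom{|S|}{m_1}\cdot p(\eta)$ with $p(\eta)=\eta^{|S|}/|S|! + O(\eta^{|S|-1})$, the factor $\binom{|S|}{m_1}$ coming from the choice of which block-slots are 1-blocks.

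Next, for a fixed positional placement I would count the compatible integer indexings, splitting them into congruence-class data and actual integer values. Each $w$-factor forces adjacent indices to agree $\bmod\,k$, so each maximal alternating chain of $w$-runs and 2-blocks (a ``super-gap'') has a single main class $c_t\in\{0,\ldots,k-1\}$; the $m_1$ super-gaps are separated by 1-blocks, forcing $c_t\neq c_{t+1}$ cyclically. Each 2-block has a middle index with a class different from its super-gap's main class, contributing the independent factor $(k-1)^{|S|-m_1}$. Among the integer indices, each self-matched 2-block forces one equality $i_p=i_{p+2}$ and each matched pair of 1-blocks forces two equalities between the four boundary indices, for a total of $(|S|-m_1)+2\cdot(m_1/2)=|S|$ constraints, leaving $\eta-|S|$ free indices each with $N/k$ choices in its class. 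This gives $\pfrac{N}{k}^{\eta-|S|}$ to leading order; lower-dimensional strata (free indices accidentally coinciding) contribute the $O((N/k)^{\eta-|S|-1})$ remainder.

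The final and most delicate step is identifying the remaining sum over matchings $M$ of the 1-blocks together with the compatible main-class sequences $(c_1,\ldots,c_{m_1})$ with $\ektr B^{m_1}$. Writing
\begin{equation*}
\ektr B^{m_1} \ = \ \sum_{c_1,\ldots,c_{m_1}} \E[B_{c_1 c_2}B_{c_2 c_3}\cdots B_{c_{m_1}c_1}]
\end{equation*}
and applying Isserlis/Wick's theorem expresses it as a sum over pair-partitions $\pi$ of $\{1,\ldots,m_1\}$; since $B$ has vanishing diagonal and symmetric unit-variance off-diagonal entries, the summand for $\pi$ contributes $1$ exactly when $c_t\neq c_{t+1}$ for all $t$ and $\{c_t,c_{t+1}\}=\{c_{t'},c_{t'+1}\}$ for every $\{t,t'\}\in\pi$, matching our 1-block matching constraints term-by-term. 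Multiplying all factors yields the claimed formula. I expect the main obstacle to be the careful bookkeeping of the error sources (lower-order positional corrections, accidental coincidences between free indices across different super-gaps, and Burnside adjustments), all of which must be shown to be of order $(N/k)^{\eta-|S|-1}$ or absorbed into the $O(\eta^{|S|-1})$ term in $p(\eta)$.
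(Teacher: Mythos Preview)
Your proposal is correct and follows essentially the same approach as the paper: decompose into a positional count (yielding $p(\eta)$ and the $\binom{|S|}{m_1}$ factor), a congruence-class count (the propagation of residues through $w$-runs and 2-blocks, giving $(k-1)^{|S|-m_1}$ and the 1-block class sequence), an integer-index count (the $(N/k)^{\eta-|S|}$ factor from $|S|$ lost degrees of freedom), and the identification of the sum over 1-block matchings with $\ektr B^{m_1}$ via the equality of $(2q-1)!!$ with the Gaussian $2q$\textsuperscript{th} moment. The only cosmetic difference is that the paper obtains $p(\eta)$ by a polygon-labeling argument rather than your composition/Burnside approach, but both give $\eta^{|S|}/|S|! + O(\eta^{|S|-1})$.
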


\begin{proof}
Let $\mathcal{A}=m_1+2(m-m_2)$ denote the number of $a$'s in $C$. Let $p(\eta)$ be the number of ways to arrange $|S|$ blocks and $(\eta-\mathcal{A})$ $w$'s into a cyclic product of length $\eta$, where the blocks are taken to be indistinguishable. We first compute $p(\eta)$.   We may think of the configurations in $C$ by bijectively identifying them with the set of $(\eta - (\mathcal{A} - |S|))$-gons with $|S|$ non-adjacent vertices labeled by $a$ (these correspond to blocks of any size, not just $1$-blocks) and the rest labeled by $w$.  Each $a$ vertex corresponds to a particular block and each $w$ vertex corresponds to a $w$ in the configuration; these are on a polygon rather than a straight line because the first and last entry of a cyclic product are considered adjacent (see Definition \ref{def of a Block}).

We may calculate $p(\eta)$ by first examining all possible choices of $\binom{\eta - (\mathcal{A} - |S|)}{|S|}$ distinct vertices, then subtracting off all the cases for which at least one pair of the vertices selected are adjacent.  If one pair is adjacent, then--with no other restrictions placed upon the other vertices--there are $\eta - (\mathcal{A} - |S|)$ possible locations for the $2$-block to be placed.  This leaves $\binom{\eta - \mathcal{A} + |S| - 2}{|S| - 2}$ possible locations for the remaining labels.  As such, the term that must be subtracted off has degree in $\eta$ strictly less than $|S|$.  Hence
\begin{equation}
p(\eta)\ =\ \binom{\eta - (\mathcal{A} - |S|)}{|S|} +  O(\eta^{|S| - 1})\ =\ \frac{\eta^{|S|}}{|S|!} + O\left(\eta^{|S| - 1}\right).
\end{equation}


Having specified the locations of the blocks, there are $\binom{|S|}{m_1}$ ways to choose which locations have a $1$-block and which have a $2$-block.

By Definition \ref{defn:checkerboard}, we have that for any entry $a_{ij}$, $i \not \equiv j \mod{k}$, and for any entry $w_{ij}$, $i \equiv j \mod{k}$. We consider what conditions this places on the indices in a given configuration.

\begin{exa}\label{ex congruence}
Consider the configuration
\begin{equation}
\cdots a_{i_1i_2}w_{i_2i_3}w_{i_3i_4}a_{i_4i_5}a_{i_5i_4}w_{i_4i_6}a_{i_6i_7} \cdots.
\end{equation}
Then we have
\begin{equation}
i_2 \equiv i_3 \equiv i_4 \equiv i_6 \pmod{k}
\end{equation}
and these are not congruent to $i_1,i_5$ or $i_7$ mod $k$.
\end{exa}

We thus see that the congruence class of the second index of a $1$-block determines the congruence classes of the indices of the string of $w$'s to its right. Similarly, the leftmost index $i$ of a matched $2$-block $a_{ij}a_{ji}$ determines the rightmost index.
Thus the congruence class modulo $k$ of the second index of a $1$-block propagates through $w$'s and $2$-blocks, and hence determines the congruence class modulo $k$ of the first index of the next $1$-block, where `next' is taken in the cyclic sense for the last $1$-block in the cyclic product.



We now claim that the number of ways to choose congruence classes of the indices of the $1$-blocks, such that there exists a consistent choice of indices for the other entries given the constraints discussed above, is $\ektr B^{m_1}$. First, note that by the above considerations, the number of ways to choose congruence classes mod $k$ of the indices of the $1$-blocks is equal to the number of ways to choose indices of the cyclic product $\kmat_{i_1i_2}\kmat_{i_2i_3}\cdots \kmat_{i_{m_1}i_1}$ with $i_1,\ldots,i_{m_1} \in \{1,\ldots,k\}$ under the restriction $i_j \neq i_{j+1}$ for all $j$.

However, there are two restrictions on our choices of indices. Firstly given any pair of congruence classes mod $k$, any contributing cyclic product must have an even number of $a$'s with both indices coming from that pair of congruence classes, because the $a$'s must be matched in pairs by Lemma \ref{lem_contributions}. Secondly, if there are more than two $1$-block $a$'s with indices from the same pair of congruence classes, then there is a choice as to how to match them\footnote{Recall that we have specified that the indices come from the same congruence class, but we must still specify pairs of $a$'s with indices \emph{actually equal}.}.  Specifically, if there are $2q$ $a$'s with indices from the same pair of congruence classes, then there are $(2q - 1)!!$ ways to match them into pairs.

This means that if we have $q$ $a$'s with indices from the same pair of congruence classes, then there are $0$ ways to get a contributing matching if $q$ is odd and $(q-1)!!$ ways if $q$ is even. But these are exactly the moments of a Gaussian, so given a configuration, the number of ways to specify the congruence classes of the $1$-blocks and specify a matching which will contribute in the limit is
\begin{equation}
\sum_{1 \leq i_1,\ldots,i_r \leq k \text{ distinct}} \E[\kmat_{i_1i_2}\kmat_{i_2i_3}\cdots \kmat_{i_ki_1}]
\end{equation}
with each $\kmat_{ij}\sim \mathcal{N}(0,1)$ i.i.d. under the restriction that $\kmat_{ij} = \kmat_{ji}$ and $\kmat_{ii} = 0$ for all $i$. This is the $k \times k$ hollow GOE as defined in Definition \ref{def hollow GOE}

Finally, after specifying these congruence classes, the congruence classes of the indices of the $w$'s and the outer indices $i$ of pairs $a_{ij}a_{ji}$ are determined as argued previously. However, there are still $k-1$ possible choices of congruence class for each inner index $j$ in each $2$-block, because the congruence class of the inner index $j$ must be different from that of the outer one, which is already determined. Therefore there are $(k-1)^{|S|-m_1}$ ways to choose these congruence classes. After all congruence classes are determined, there are $N/k$ choices for each index. However, because there are $|S|$ blocks, by the proof of Lemma \ref{lem_contributions} there are $|S|$ indices which are determined by another. Therefore the contribution from actually specifying the indices is $\pfrac{N}{k}^{\eta - |S|}$.
Therefore, the contribution from choosing the locations of blocks, then locations of $1$-blocks, then congruence classes of indices, and finally the indices themselves is
\begin{equation}
p(\eta)\binom{|S|}{m_1}(k-1)^{|S|-m_1}\ektr B^{m_1}  \left( \pfrac{N}{k}^{\eta-|S|} + O\left(\pfrac{N}{k}^{\eta-|S| - 1} \right) \right),
\end{equation}
where the lower order terms in $N$ come from matchings which were proven in Lemma \ref{lem_contributions} to yield fewer degrees of freedom in $N$.
\end{proof}

When computing the $m\text{th}$ moment, the following combinatorial lemma allows us to cancel the contributions of classes with more than $m$ blocks.

\begin{lem}\label{lem Combinatorics Cancellation}
For any $0\le p <m$,
\begin{equation}
\sum_{j=0}^{m} (-1)^j \binom{m}{j} j^p\ =\ 0.
\end{equation}
Furthermore
\begin{equation}
\sum_{j=0}^{m} (-1)^{m-j} \binom{m}{j} j^m\ =\ m!.
\end{equation}
\end{lem}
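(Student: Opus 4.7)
The plan is to recognize both identities as instances of the standard fact that the $m$-th forward difference annihilates polynomials of degree less than $m$ and picks out $m!$ times the leading coefficient on monomials of degree $m$.

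Concretely, define the forward difference operator $\Delta$ acting on functions on $\Z_{\ge 0}$ by $(\Delta f)(x) = f(x+1) - f(x)$. A routine induction on $m$ (using Pascal's recurrence $\binom{m}{j} = \binom{m-1}{j-1} + \binom{m-1}{j}$) yields the identity
\begin{equation}
(\Delta^m f)(0) \ = \ \sum_{j=0}^{m} (-1)^{m-j} \binom{m}{j} f(j).
\end{equation}
Hence the quantity in the first claim of the lemma equals $(-1)^m (\Delta^m f)(0)$ with $f(x) = x^p$, and the quantity in the second claim equals $(\Delta^m f)(0)$ with $f(x) = x^m$. Both identities are therefore reduced to evaluating $\Delta^m$ on the monomial $x^p$.

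The key observation is that if $f$ is a polynomial of degree $d$ with leading coefficient $a$, then $\Delta f$ is a polynomial of degree $d-1$ with leading coefficient $da$ (this is immediate by expanding $(x+1)^d - x^d$). Iterating this, $\Delta^m$ sends a polynomial of degree $p$ to a polynomial of degree $p - m$, which is identically zero when $p < m$. This immediately gives the first identity. When $p = m$ the same iteration shows $\Delta^m x^m$ is the constant polynomial $m!$, which gives $(\Delta^m f)(0) = m!$ and hence the second identity.

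The only real step is the elementary induction establishing the finite-difference representation, and the degree-reduction property is a one-line calculation, so there is no genuine obstacle; the lemma is essentially a restatement of a well-known property of finite differences of polynomials, and the proof should occupy only a few lines.
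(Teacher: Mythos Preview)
Your proof is correct and complete. It differs from the paper's argument, though both are short and elementary. The paper works on the generating-function side: it starts from $f_0(x)=(1-x)^m=\sum_j(-1)^j\binom{m}{j}x^j$, applies the operator $x\frac{d}{dx}$ repeatedly to obtain $f_p(x)=\sum_j(-1)^j\binom{m}{j}j^p x^j$, and shows by induction that $f_p(x)$ is a combination of terms $x^i(1-x)^{m-i}$ with $i\le p$, so that $f_p(1)=0$ for $p<m$ and $f_m(1)=(-1)^m m!$. Your route via the iterated forward difference $\Delta^m$ is arguably a bit more transparent, since the degree-reduction property of $\Delta$ follows immediately from expanding $(x+1)^d-x^d$, whereas the paper must track an inductive structural form for $f_p$. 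Both arguments are standard and of comparable length; yours has the slight advantage that the two claims of the lemma fall out of a single statement about $\Delta^m$ acting on polynomials, while the paper's approach has the pedagogical virtue of staying entirely within calculus on the explicit polynomial $(1-x)^m$.
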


The proof is  a straightforward calculation; see Appendix \ref{Appendix Combinatorics Proof}.

We are now ready to prove our main result on the moments.

\begin{thm}\label{thm_final_moments}
Denote the centered moments of the empirical blip spectral measure of the $N \times N$ $k$-checkerboard ensemble by $\overline{\mu}_{A,N}^{(m)}$. Then
\begin{equation}
\lim_{N \rightarrow \infty} \E[\overline{\mu}_{A,N}^{(m)}]\ =\ \frac{1}{k} \ektr B^{m}.
\end{equation}
\end{thm}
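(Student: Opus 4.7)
The plan is to apply Lemma 3.1 and Proposition 3.4 to reduce $\E[\mu_{A,N}^{(m)}]$ to a combinatorial sum, then isolate the $|S|=m$ contribution using the identities of Lemma 3.6.

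First, Lemma 3.1 writes $\E[\mu_{A,N}^{(m)}]$ as an explicit weighted sum of the traces $\etr A^{2n+i}$, and Proposition 3.4 decomposes each such trace as a sum over $S$-classes indexed by pairs $(|S|,m_1)$. Substituting and collecting $N/k$ factors, the prefactor $(k/N)^{2n}$ combines with $(N/k)^{2n+i-|S|}$ and $(-N/k)^{m-i}$ to give $(-1)^{m-i}(N/k)^{m-|S|}$. Interchanging orders of summation yields
\begin{equation}
\E[\mu_{A,N}^{(m)}]\ \sim\ \frac{1}{k}\sum_{|S|,\,m_1}\frac{\binom{|S|}{m_1}(k-1)^{|S|-m_1}\ektr B^{m_1}}{|S|!}\pfrac{N}{k}^{m-|S|}T_{|S|},
\end{equation}
where
\begin{equation}
T_{|S|}\ :=\ \sum_{j=0}^{2n}\binom{2n}{j}\sum_{i=0}^{m+j}\binom{m+j}{i}(-1)^{m-i}(2n+i)^{|S|}.
\end{equation}

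The heart of the argument is the asymptotic analysis of $T_{|S|}$. After the substitution $i'=m+j-i$ and the binomial expansion of $(2n+m+j-i')^{|S|}$, the inner sum becomes an $(m+j)$-th finite difference of a polynomial in $i'$ of degree $|S|$. By Lemma 3.6 this difference vanishes whenever $|S|<m+j$ and equals $\pm|S|!$ when $|S|=m+j$. Consequently $T_{|S|}=0$ for $|S|<m$, and for $|S|=m$ only the term $j=0$ survives, giving $T_m=m!$. For $|S|>m$ the top-order-in-$n$ contribution aggregates as
\begin{equation}
(2n)^{|S|-m}\frac{|S|!}{(|S|-m)!}\sum_{j=0}^{|S|-m}(-1)^j\binom{|S|-m}{j}\ =\ 0,
\end{equation}
and the sub-leading Stirling-number contributions are $O(n^{|S|-m-1})$, so after multiplying by $(N/k)^{m-|S|}$ and invoking $n(N)\ll N^{1-\eps}$, each $|S|>m$ class contributes at most $O(N^{-\eps})$; the tail in $|S|$ is controlled by the $1/|S|!$ factor and the $O(\eta^{|S|-1})$, $O((N/k)^{\eta-|S|-1})$ corrections from Proposition 3.4 are handled identically.

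Only the $|S|=m$ class therefore survives in the limit, so
\begin{equation}
\lim_{N\to\infty}\E[\mu_{A,N}^{(m)}]\ =\ \frac{1}{k}\sum_{m_1=0}^{m}\binom{m}{m_1}(k-1)^{m-m_1}\ektr B^{m_1}\ =\ \frac{1}{k}\ektr (B+(k-1)I)^m.
\end{equation}
In other words the limiting uncentered blip measure is the empirical spectral measure of the hollow GOE translated by $k-1$, which has first moment exactly $k-1$. Centering the moments about $\E[\mu_{A,N}^{(1)}]\to k-1$ undoes this translation and produces $\lim_{N\to\infty}\E[\overline{\mu}_{A,N}^{(m)}]=\frac{1}{k}\ektr B^m$, as claimed. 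The main obstacle is the $|S|>m$ case: the vanishing of the leading-in-$n$ term must be combined with a careful, $|S|$-uniform tail bound on the sub-leading Stirling contributions, and this is precisely where the full growth window $N^\eps\ll n(N)\ll N^{1-\eps}$ is needed; the remaining bookkeeping (the Stirling approximation $\binom{2n}{j}=(2n)^j/j!\,(1+o(1))$ and the commutation of centering with the limit) is routine.
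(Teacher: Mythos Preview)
Your proof follows essentially the same route as the paper: feed Proposition~3.4 into the trace expansion of Lemma~3.1, then use the cancellation identity of Lemma~3.6 to show that only the $|S|=m$, $j=0$ term survives. Your treatment of the $|S|>m$ case via the explicit vanishing of the leading-$n$ contribution is in fact more detailed than the paper's (which simply appeals to the factor $(N/k)^{m-|S|}$ without tracking the $n$-dependence), and your recognition of the uncentered limit as $\tfrac{1}{k}\ektr(B+(k-1)I)^m$ streamlines the centering step that the paper carries out by a direct binomial computation.
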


\begin{proof}
Recall that by Lemma \ref{lem Moments of blip spectral measure},
\begin{equation}\label{eq_above}
\E[\mu_{A,N}^{(m)}]\ =\ \frac{1}{k}\left(\frac{k}{N}\right)^{2n} \sum_{j=0}^{2n} \binom{2n}{j} \sum_{i=0}^{m+j} \binom{m+j}{i} \left(-\frac{N}{k}\right)^{m-i} \etr A^{2n+i}.
\end{equation}

We consider which values of $|S|$ allow a class to actually contribute in the limit. For fixed $j$, by the formula for the expected $m$\textsuperscript{th} moment of $\mu$ given in \eqref{eqn moments of blip spectral measure} and Lemma \ref{lem Combinatorics Cancellation}, the contribution of an $S$-class cancels if $p(\eta)$ has degree less than $m+j$. Hence by the expression for the degree of $p$ given in Proposition \ref{prop_polygon}, an $S$-class cancels if $|S|<m+j$.
However, again by Proposition \ref{prop_polygon}, the contribution of an $S$-class to $\etr A^\eta$ is $O(N^{\eta-|S|})$, which if $|S|>m$ contributes a $o(1)$ term to \eqref{eq_above} after multiplying with the $(k/N)^{2n}$ term. Hence the only contributing $S$-classes have $m+j \leq |S| \leq m$, i.e., $|S|=m$ and $j=0$.

Then we may remove the sum over $j$ to yield
\begin{equation}\label{eq_reduced_moments}
\E[\mu_{A,N}^{(m)}]\ =\ \frac{1}{k}\left(\frac{k}{N}\right)^{2n} \sum_{i=0}^{m} \binom{m}{i} \left(-\frac{N}{k}\right)^{m-i} \etr A^{2n+i}.
\end{equation}
By the previous discussion, Lemma \ref{lem_contributions}, the terms which contribute to $\etr A^{2n+i}$ and do not vanish in the limit arise from classes with $m_1$ $1$-blocks and $(m-m_1)$ $2$-blocks. By Proposition \ref{prop_polygon}, these are of the form
\begin{equation}
p(2n+i)\binom{m}{m_1}(k-1)^{m-m_1}\ektr B^{m_1}  \left(\pfrac{N}{k}^{(2n+i)-m}+O\left(\pfrac{N}{k}^{(2n + i) - m - 1} \right)\right).
\end{equation}

Hence the contribution of such a class to $\E[\mu_{A,N}^{(m)}]$ in the limit is
\begin{align}
&\frac{1}{k}\pfrac{k}{N}^{2n} \sum_{i=0}^{m} \binom{m}{i} \left(-\frac{N}{k}\right)^{m-i}p(2n+i)\binom{m}{m_1}(k-1)^{m-m_1}\ektr B^{m_1} \pfrac{N}{k}^{(2n+i)-m} \nonumber\\
&=\ \frac{1}{k}\binom{m}{m_1}(k-1)^{m-m_1}\ektr B^{m_1} \sum_{i=0}^{m}(-1)^{m-i}\binom{m}{i}p(2n+i).
\end{align}
By the first part of Lemma \ref{lem Combinatorics Cancellation}, all terms in $p(2n+i)$ of degree lower than $m$ in $i$ cancel. Since $p$ is of degree $m$ by Lemma \ref{lem_contributions}, only the highest degree term in $i$ contributes, and this term is equal to ${i^m}/(m!)$ by the same lemma.
Applying the second part of the Lemma \ref{lem Combinatorics Cancellation}, we have that the contribution from our class to the limiting expected $m$\textsuperscript{th} moment is
\begin{equation}
\frac{1}{k}\binom{m}{m_1}(k-1)^{m-m_1}\ektr B^{m_1}\frac{1}{m!}m!\ =\ \frac{1}{k}\binom{m}{m_1}(k-1)^{m-m_1}\ektr B^{m_1}.
\end{equation}

Summing the above contributions over $m_1$, we have that
\begin{equation}\label{eqn uncentered moments}
\lim_{N\to\infty} \E\left[\mu_{A,N} ^{(m)}\right]\ =\ \frac{1}{k}\sum_{m_1 = 0}^{m}  \binom{m}{m_1}(k-1)^{m-m_1} \cdot \E_k \tr B^{m_1}.
\end{equation}

It is natural to compute the centered moments of the distribution.  The uncentered mean is
\begin{equation}
\E\left[\mu_{A,N} ^{(1)}\right]\ =\ k - 1.
\end{equation}

It is not trivial from the definition that centering the limiting expected moments of $\mu_{A,N}$ yields the limiting expected centered moments of $\mu_{A,N}$, but this can be shown straightforwardly from the definitions so we omit the proof. Now, applying the definition of centered moment to the moments given in \eqref{eqn uncentered moments} and reindexing summations gives us that the limiting expected centered moments are
\begin{align}
\mu_c^{(m)}\ &:=\ \lim_{N \rightarrow \infty}\E\left[\int (x-\mu_{A,N}^{(1)})^md\mu_{A,N}\right] \nonumber\\
&=\ \sum_{m_1 = 0}^m \binom{m}{m_1}(-(k-1))^{m-m_1}\E\left[\mu_{A,N} ^{(m_1)}\right] \nonumber\\
&=\ \sum_{m_1 = 0}^m \left[\binom{m}{m_1}(-1)^{m-m_1}(k-1)^{m-m_1}\frac{1}{k}\sum_{i = 0}^{m_1}  \binom{m_1}{i}(k-1)^{m_1-i} \cdot \ektr B^j\right] \nonumber\\
&=\ \sum_{m_1 = 0}^m \left[\binom{m}{m_1}(-1)^{m-m_1} \sum_{i = 0}^{m_1 }  \binom{m_1}{i}(k-1)^{m - i}\frac{1}{k} \ektr B^i\right] \nonumber\\
&=\ \sum_{j=0}^{m} \left[\binom{m}{i}(k-1)^{m-j}\frac{1}{k} \ektr B^i\sum_{m_1 = i}^m \binom{m-j}{m_1 - i}(-1)^{m-m_1} \right].\label{eqn binomial expansion of centered moment}
\end{align}
Now consider the inner sum in \eqref{eqn binomial expansion of centered moment}, which is equal to
\begin{equation}
\sum_{m_1 = 0}^{m-j} \binom{m-j}{m_1} (-1)^{m-m_1}.
\end{equation}
In fact, this is exactly equal to $(-1)^m\delta_{mj}$ where $\delta_{mj}$ is the Kronecker delta function.  From this, the limiting expected centered moments are
\begin{equation}\label{unnormalized centered moments, real case}
\mu_{c}^{(m)}\ =\ \frac{(-1)^m}{k}\ektr B^{m}.
\end{equation}
Because $\ektr B^m=0$ for $m$ odd, we remove the $(-1)^m$ factor, completing the proof.
\end{proof}

Although the formula given by Theorem \ref{thm_final_moments} is implicit, it enables us to compute any $m$\textsuperscript{th} centered moment of the limiting empirical blip spectral measure of the $N \times N$ $k$-checkerboard ensemble by combinatorics on the indices of the corresponding $k\times k$ hollow GOE. We illustrate this with the first few cases.

For ease of notation, we define
\begin{equation}
M_{k,m}\ :=\ \frac{1}{k} \ektr B^{m}.
\end{equation}
For a fixed value of $k$, the $M_{k,m}$'s are the moments of a spectral measure defined upon the $k \times k$ hollow GOE.  Some elementary consequences of this are relevant to our purposes, and we present these below.

\begin{prop}\label{prop_k2gaussian}
For $k = 2$, we have that the $M_{k,m}$'s are the moments of the standard Gaussian.
\end{prop}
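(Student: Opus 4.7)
The plan is to compute $M_{2,m}$ directly by exploiting the very simple form of a $2 \times 2$ hollow GOE matrix and comparing the resulting moments term-by-term with those of the standard Gaussian.

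First I would write down the generic element of the $2 \times 2$ hollow GOE. By Definition \ref{def hollow GOE}, such a matrix has the form
\begin{equation}
B\ =\ \begin{pmatrix} 0 & X \\ X & 0 \end{pmatrix}, \qquad X \sim \mathcal{N}(0,1).
\end{equation}
Its eigenvalues are immediately $\pm X$, so the eigenvalue-trace formula gives $\tr B^m = X^m + (-X)^m$, which equals $2X^m$ when $m$ is even and $0$ when $m$ is odd.

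Next I would take expectations. Since $M_{2,m} = \tfrac{1}{2}\mathbb{E}[\tr B^m]$, the odd moments vanish and the even moments satisfy
\begin{equation}
M_{2,2r}\ =\ \mathbb{E}[X^{2r}]\ =\ (2r-1)!!.
\end{equation}
Finally I would observe that these are precisely the moments of $\mathcal{N}(0,1)$: the standard Gaussian has vanishing odd moments and $(2r-1)!!$ for its $(2r)$\textsuperscript{th} moment. Hence the sequence $\{M_{2,m}\}_{m \ge 0}$ coincides with the moment sequence of the standard Gaussian, completing the proof.

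There is essentially no obstacle here; the statement reduces entirely to diagonalizing a $2 \times 2$ anti-diagonal matrix and recognizing the standard Gaussian moment formula. The only mildly delicate point is noting that the spectral measure associated with these moments is uniquely determined (by Carleman's condition, say), but this is automatic for the Gaussian.
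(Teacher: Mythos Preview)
Your proposal is correct and follows essentially the same approach as the paper: both write the $2\times 2$ hollow GOE matrix explicitly, read off its eigenvalues $\pm X$, and conclude immediately. You spell out the trace computation and the double-factorial moment matching in slightly more detail than the paper, which simply says ``the proposition follows immediately,'' but the argument is identical.
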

\begin{proof}
For $k = 2$, matrices $B$ in the hollow GOE are of the form
\begin{equation}
B\ =\ \begin{bmatrix}
0 & \kmat \\
\kmat & 0
\end{bmatrix}
\end{equation}
for $\kmat \sim \mathcal{N}_{\R}(0, 1)$.

The eigenvalues of $B$ are $\lambda = \pm \kmat$, and the proposition follows immediately.
\end{proof}

\begin{prop} We have
$M_{k,2} = k-1$.
\end{prop}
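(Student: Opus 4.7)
The plan is a direct computation using the definition of $M_{k,m}$ together with the explicit description of the hollow GOE. First, I would apply the trace formula: since $B$ is symmetric, $\tr B^2 = \sum_{i,j=1}^{k} B_{ij} B_{ji} = \sum_{i,j=1}^{k} B_{ij}^2$. By linearity of expectation,
\begin{equation}
\E_k \tr B^2 \ =\ \sum_{i,j=1}^{k} \E[B_{ij}^2].
\end{equation}

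Next, I would invoke Definition \ref{def hollow GOE}: the diagonal entries $B_{ii}$ are identically zero, contributing nothing, while the off-diagonal entries $B_{ij}$ with $i \neq j$ are standard Gaussians with second moment $1$. There are exactly $k(k-1)$ such off-diagonal positions (or equivalently $\binom{k}{2}$ independent entries each counted twice by symmetry), so $\E_k \tr B^2 = k(k-1)$.

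Dividing by $k$ per the definition $M_{k,2} = \frac{1}{k}\E_k \tr B^2$ yields $M_{k,2} = k-1$. There is no real obstacle here; the only subtlety to keep in mind is distinguishing independent matrix entries (above the diagonal) from entries of the trace sum (which count each symmetric pair twice), but since we sum $B_{ij}^2$ over the full pair $(i,j)$, both orderings are naturally accounted for.
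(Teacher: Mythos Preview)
Your proof is correct and matches the paper's own argument essentially line for line: expand $\tr B^2$ as $\sum_{i,j}\E[b_{ij}b_{ji}]$, note the diagonal terms vanish while each of the $k(k-1)$ off-diagonal terms contributes $1$, and divide by $k$.
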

\begin{proof}
For $B$ a hollow GOE matrix, we have
\begin{equation}
\frac{1}{k} \ektr B^2\ =\ \frac{1}{k} \sum_{1 \leq i,j \leq k} \E[b_{ij}b_{ji}]\ =\ \frac{1}{k}(k^2-k)\ =\ k-1
\end{equation}
upon noting that $\E[b_{ij}b_{ji}]=1$ and $b_{ii}=0$.
\end{proof}


\section{Generalizations to $\C$ and $\H$}\label{sec:genCandH}
We generalize the result of the previous section to complex and quaternion ensembles. Both cases can be reduced to the arguments of the real case in exactly the same manner, so we show only the proof of the quaternion case. The ensembles were defined in Definition \ref{defn:checkerboard}; note we are using  $\complexI, \quaternionJ$  and $\quaternionK$ for the imaginary units to avoid confusion with indices $i,j,k$.

Analogously, we define the hollow GUE and GSE.

\begin{defi}\label{def of zero diagonal Gaussian Unitary Ensemble}
The $k \times k$ \textbf{hollow Gaussian unitary ensemble} and \textbf{hollow Gaussian symplectic ensemble} are the ensembles of matrices $B = (\kmat_{ij})$ given by
\begin{equation} \label{def hollow GUE}
\kmat_{ij}\ =\ \begin{cases}
\mathcal{N}_{\C}(0, 1) \;(\text{resp. }\mathcal{N}_{\H}(0, 1))& \text{{\rm if} } i \neq j \\
0 & \text{{\rm if} } i = j
\end{cases}
\end{equation}
under the restriction $\kmat_{ij}=\overline{\kmat_{ji}}$. We denote the expectation over these ensemble with respect to the natural product probability measures by $\ecktr$ and $\ehktr$.
\end{defi}

We define in addition one new combinatorial notation.

\begin{defi}\label{def congruence configuration}
A \textbf{congruence configuration} is a configuration together with a choice of the congruence class modulo $k$ of every index of a $1$-block.
\end{defi}

The following generalizes Theorem \ref{thm_final_moments} to complex and quaternion ensembles.

\begin{thm}\label{thm quaternion GSE case}
Let $D = \C,\H$, and let $\Eo{D}\left[\mu_{A_N}^{(m)}\right]$ be the expected $m$\textsuperscript{th} moments of the empirical blip spectral measures over the $N \times N$ complex or quaternion $k$-checkerboard ensemble defined as in \eqref{eqn moments of blip spectral measure}. Then
\begin{equation}
\lim_{N \rightarrow \infty} \Eo{D}\left[\mu_{A_N}^{(m)}\right]\ =\ \frac{1}{k} \E_k^{D} \; Tr[B^m]. 
\end{equation}
\end{thm}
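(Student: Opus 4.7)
The proof plan is to port the real-case argument (culminating in Theorem \ref{thm_final_moments}) essentially verbatim, tracking only the places where the underlying distribution of the off-diagonal entries enters the combinatorial bookkeeping. Concretely, Lemma \ref{lem Moments of blip spectral measure} holds over any $D \in \{\R, \C, \H\}$ since it uses only the eigenvalue-trace lemma and binomial algebra; thus the identity
\begin{equation}
\Eo{D}\left[\mu_{A_N}^{(m)}\right]\ =\ \frac{1}{k}\left(\frac{k}{N}\right)^{2n} \sum_{j=0}^{2n} \binom{2n}{j} \sum_{i=0}^{m+j} \binom{m+j}{i} \left(-\frac{N}{k}\right)^{m-i} \Eo{D}\tr A^{2n+i}
\end{equation}
is the starting point. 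Similarly, Lemma \ref{lem_contributions} is a pure degree-of-freedom argument about the number of distinct indices, so the only classes contributing in the limit remain those consisting exclusively of $1$-blocks (matched in pairs by index sets) and $2$-blocks (matched internally). The combinatorial identities of Lemma \ref{lem Combinatorics Cancellation}, and the subsequent cancellation of all $S$-classes with $|S| \neq m$ and $j \neq 0$, likewise go through unchanged.

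The single nontrivial modification sits inside Proposition \ref{prop_polygon}, specifically in the step that converts congruence configurations with legal matchings of the $1$-blocks into moments of a $k \times k$ hollow matrix ensemble. In the real case each paired set of indices $\{i,j\}$ contributes an expectation $\E[a_{ij}^2] = 1$; for $D = \C$ or $\H$, Hermiticity gives $a_{ji} = \overline{a_{ij}}$, and the first moment that survives is $\E[a_{ij}\overline{a_{ij}}] = 1$, while same-orientation products $\E[a_{ij}^2]$ vanish (both in $\C$ and in $\H$ because real, imaginary, and quaternionic parts are i.i.d.\ mean-zero). Thus matchings of $1$-blocks carry an orientation constraint that mirrors exactly the Hermitian matching rule in the hollow GUE/GSE. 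Tracking this through the cyclic product $\kmat_{i_1 i_2} \kmat_{i_2 i_3} \cdots \kmat_{i_{m_1} i_1}$ indexed by the congruence classes modulo $k$ of the $1$-blocks, one finds the sum of contributing matchings equals
\begin{equation}
\sum_{1 \leq i_1, \ldots, i_{m_1} \leq k \text{ distinct consec.}} \Eo{D}[\kmat_{i_1 i_2} \kmat_{i_2 i_3} \cdots \kmat_{i_{m_1} i_1}]\ =\ \E_k^{D}\tr B^{m_1},
\end{equation}
where $B$ is a $k \times k$ hollow GUE (respectively hollow GSE) matrix. The remainder of Proposition \ref{prop_polygon}, including the $(k-1)^{|S|-m_1}$ factor from choosing the interior index of each $2$-block and the $(N/k)^{\eta-|S|}$ factor from choosing actual indices within fixed congruence classes, is unchanged because it is purely an index-counting computation independent of $D$.

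With the revised Proposition \ref{prop_polygon} in hand, the derivation from \eqref{eq_reduced_moments} onward runs identically: Lemma \ref{lem Combinatorics Cancellation} isolates the top-degree term of $p(2n+i)$, and one obtains
\begin{equation}
\lim_{N\to\infty}\Eo{D}\left[\mu_{A_N}^{(m)}\right]\ =\ \frac{1}{k}\sum_{m_1 = 0}^{m}  \binom{m}{m_1}(k-1)^{m-m_1}\, \E_k^{D}\tr B^{m_1},
\end{equation}
and the binomial centering computation at the end of Theorem \ref{thm_final_moments} extracts $(1/k)\, \E_k^{D}\tr B^{m}$ after subtracting the mean $k-1$. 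The main obstacle is purely bookkeeping: verifying that in the quaternion case the noncommutative nature of $\H$ does not spoil either the Wick-type reduction (which it does not, since the contributing matchings are planar pairings of Hermitian conjugate partners) or the rule $a_{ji} = \overline{a_{ij}}$ used repeatedly in the counting. Once this is confirmed the proof concludes exactly as in the real case.
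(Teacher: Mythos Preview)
Your overall plan---port the real-case combinatorics and only modify the step identifying matchings of $1$-blocks with hollow-ensemble moments---is the right shape, and for $D=\C$ a direct orientation argument along your lines can be made to work (the paper's Remark \ref{coulda done complex} notes the authors did this first). But for $D=\H$ your proposal contains a concrete error and an unjustified leap.

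First, the claim that ``same-orientation products $\E[a_{ij}^2]$ vanish'' is false over $\H$: writing $a_{ij}=(r+\quaternionI x+\quaternionJ y+\quaternionK z)/2$ with $r,x,y,z$ i.i.d.\ $\mathcal{N}(0,1)$, one computes $\E[a_{ij}^2]=\tfrac{1}{4}\E[r^2-x^2-y^2-z^2]=-\tfrac{1}{2}\neq 0$. Second, the assertion that noncommutativity is harmless ``since the contributing matchings are planar pairings of Hermitian conjugate partners'' is neither proved nor true in general: crossing pairings do contribute to $\E_k^{\H}\tr B^{m_1}$, and in a noncommutative product you cannot collapse a matched pair $a_{ij},a_{ji}$ to $|a_{ij}|^2$ unless they happen to be adjacent. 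So the Wick-type reduction you invoke does not go through over $\H$ as stated.

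The paper supplies exactly the missing mechanism. It first observes that the $w$'s and the matched $2$-blocks $a_{ij}a_{ji}=|a_{ij}|^2$ are real and therefore central, so the expectation of a contributing cyclic product factors as $\E[\text{ordered product of $1$-blocks}]\cdot\E[\text{$2$-blocks and $w$'s}]$. It then handles the ordered $1$-block product by expanding each quaternion entry componentwise,
\[
a_{i_1i_2}\cdots a_{i_{m_1}i_1}\ =\ \frac{1}{2^{m_1}}\sum \prod c_{i_\ell i_{\ell+1}},\qquad c\in\{r,\quaternionI x,\quaternionJ y,\quaternionK z\},
\]
so that each summand is a product of \emph{commuting} real Gaussians times a fixed quaternion unit. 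Only the summands landing on the real unit survive in expectation, and the identical expansion is performed on the hollow-GSE side; termwise equality then reduces to the real-case matching count $(2q-1)!!$ exactly as in Proposition \ref{prop_polygon}. This expansion into real components is the substantive idea your proposal is missing for $\H$.
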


\begin{proof}
We prove the quaternion case, and the complex case follows similarly.

To begin, notice that the first statement in Lemma \ref{lem_contributions} regarding counting the arrangements of blocks applies to this case exactly as it was stated.

Now, notice that the $S$-classes for which blocks have size one or two, which were shown to be the only contributing classes in Lemma \ref{lem_contributions}, still have the same number of degrees of freedom in the quaternion case.  It is apparent that a configuration in the quaternion case cannot have more degrees of freedom than the analogous configuration given above in the real case.  It follows that configurations which do not contribute in the real case also do not contribute in the quaternion case.  

Note that because the quaternions are not commutative, we must take care in computing the expectations of the cyclic products.
Consider a configuration. Note that the $w$'s in the configuration are real.  Further, the matched $2$-blocks, i.e., $a_{ij}a_{ji}=|a_{ij}|^2$, are also real. Therefore, all the quarternion-valued $1$-blocks in the configuration commute with the $w$'s and the matched $2$-blocks. Hence, for all cyclic products, we have that the expectation breaks down as
\begin{equation}
\E[\text{Cyclic Product}]\ =\ \E[\text{$1$-blocks (In the order they appear)}] \cdot \E[\text{$2$-blocks and $w$'s}].
\end{equation}

By Lemma \ref{lem_contributions} we need only consider matchings where the $1$-blocks have different indices from the $2$-blocks. Also, given a choice of the congruence classes of the indices of the $1$-blocks, we may construct a corresponding product of the entries in the $k \times k$ hollow GSE given by entries in the $k \times k$ hollow GSE whose indices are those prescribing the congruence class choices on the indices of the $1$-blocks.

Now, suppose that $\Pi_1$ is a congruence configuration of the $k$-checkerboard matrix, and suppose that $\Pi_2$ is the corresponding product of entries of the $k \times k$ hollow GSE. The $a_{ij}$'s that make up these products are quaternions, and hence they do not necessarily commute under multiplication. To deal with this issue, we distribute the product and commute the summed terms to make sure that all the copies of a distinct random variable are placed adjacently in the product. Upon doing this we can use independence of the random variables to convert the expectation of the product into a product of expectations which allows us to compute the expectation using the moments of the Gaussian.\\\\
In particular, we let $a_{ij}=\frac{r_{ij}+\quaternionI x_{ij}+\quaternionJ y_{ij}+\quaternionK z_{ij}}{2}$. Distributing, we get
\begin{equation}
a_{i_1i_2} \cdots a_{i_ni_1}\ =\ \frac{1}{2^n}\sum_{c_{ij}} \prod_{1 \leq \ell \leq n} c_{i_\ell i_{\ell + 1}},
\end{equation}
where the sum over $c_{ij}$ is over all choices of $c_{ij} \in \{r_{ij},\quaternionI x_{ij},\quaternionJ y_{ij},\quaternionK z_{ij}\}$ for each $ij = i_\ell i_{\ell+1}$ and the indices are taken cyclically. Note that this expansion is the same for a product of $1$-blocks in the $k$-checkerboard ensemble and a product of entries in the hollow GSE.

$\Pi_1$ distributes into a product of Gaussian terms times 1, a product of Gaussian terms times $\quaternionI$, a product of Gaussian terms times $\quaternionJ$, and a product of Gaussian terms times $\quaternionK$.  We denote these products by $\Pi_{1}^{\Re}$, $\Pi_{1}^{\quaternionI}$, $\Pi_{1}^{\quaternionJ}$, and $\Pi_{1}^{\quaternionK}$, respectively.  Similarly define $\Pi_{2}^{\Re}$, $\Pi_{2}^{\quaternionI}$, $\Pi_{2}^{\quaternionJ}$, and $\Pi_{2}^{\quaternionK}$. Now, distribution yields
\begin{equation}\label{eqn checkerboard distributive equality 1}
\Pi_1\ =\ \frac{1}{2^n}\left( \Pi_{1}^{\Re} + \Pi_{1}^{\quaternionI}\quaternionI + \Pi_{1}^{\quaternionJ}\quaternionJ + \Pi_{1}^{\quaternionK}\quaternionK\right)
\end{equation}
on the $k$-checkerboard side, and
\begin{equation}\label{eqn checkerboard distributive equality 2}
\Pi_2\ =\ \frac{1}{2^n}\left( \Pi_{2}^{\Re} + \Pi_{2}^{\quaternionI}\quaternionI + \Pi_{2}^{\quaternionJ}\quaternionJ + \Pi_{2}^{\quaternionK}\quaternionK\right)
\end{equation}
on the hollow GSE side.  Note that the $\E[\Pi_{1}^{\quaternionI}]$, $\E[\Pi_{1}^{\quaternionJ}]$, and $\E[\Pi_{1}^{\quaternionK}]$ terms are all zero, because a nonreal coefficient can only occur when there is an unpaired $1$-block $x_{ij},y_{ij}$ or $z_{ij}$. Hence only the real parts $\E[\Pi_1^{\Re}]$ and $\E[\Pi_2^{\Re}]$ remain.

Using \eqref{eqn checkerboard distributive equality 1}, we have
\begin{equation}
\E\left[\sum_{\text{matchings }}\sum_{\text{indexings}} \Pi_1\right]\ =\ \E\left[\sum_{\text{matchings}} \ \ \sum_{r, x, y\ \text{{\rm or} }z \text{ {\rm choices}}}\ \ \sum_{\text{indexings}} \Pi_{1}'^{\Re}\right]
\end{equation}
where $\Pi_1'^{\Re}$ is summed over all matchings, all $4^{length(\Pi_1)/2}$ ways to substitute in either $r,x,y$ or $z$ for each matched pair, and finally all indexings of these products.

Similarly, on the hollow GSE side, we have
\begin{equation}
   \E\left[\Pi_2\right]\ =\ \E\left[\sum_{r,x,y \text{ or }z \text{ choices}}\Pi_{2}'^{\Re}\right] .
\end{equation}
where again the sum is over all ways to substitute an $r,x,y$ or $z$ for the entries in $\Pi'_2$ so that there are an even number of each.
Hence to show equality of the two sides, it suffices to show termwise equality for each summand of $\sum_{r,x,y \text{ or }z \text{ choices}}$. 
Specifically, we must show

\begin{equation}\label{eqn quaternion sums}
\pfrac{k}{N}^{m_{1}} \sum_{\text{\tiny{matchings}}}\sum_{\text{indexings}}\E [\Pi_{1}'^{\Re}]\ =\ \E[ \Pi_{2}'^{\Re}].
\end{equation}

By the same argument as in the GOE case, when there are, say, $2q$ $x$ terms in $\Pi_1'^{\Re}$, there are $(2q-1)!!$ ways to choose a matching of them on the LHS, while on the RHS the expectation of the $2q$ $x$ terms contribute the Gaussian moment $(2q-1)!!$. For any product and choice of matchings there are $\pfrac{N}{k}^{m_{1}}$ ways to choose indexings, cancelling the $\pfrac{k}{N}^{m_{1}}$ on the LHS. Therefore \eqref{eqn quaternion sums} holds, completing the proof of the quaternion case.

The complex case may be proven by the exact same technique of distributing out products of complex Gaussians into products of real Gaussians and arguing as in the real case on these products, proving the theorem.
\end{proof}

\begin{rem}\label{coulda done complex}
It is possible to prove the complex case directly by a more complicated version of the GOE argument, which was the course first taken by the authors before solving the quaternion case. However, the approach outlined in the proof of Theorem \ref{thm quaternion GSE case} is cleaner and more general.
\end{rem}

\section{Almost-sure convergence}\label{sec:convergence}

The traditional way to show weak convergence of empirical spectral measures to a limiting spectral measure (in probability or almost-surely) is to show that the variance (resp. fourth moment) of the $m$\textsuperscript{th} moment, averaged over the $N \times N$ ensemble, is $O(\frac{1}{N})$ (resp. $O\left(\frac{1}{N^2}\right)$). In the case of the blip spectral measure, we encounter a problem: both assertions are false. Heuristically, as $N$ grows, the empirical spectral measures of $N \times N$ matrices from most standard ensembles will all be similar because there is a large and growing number of eigenvalues to average over and so the behavior of individual eigenvalues is drowned out by the average. However, for a $k$-checkerboard matrix there are only $k$ eigenvalues in the blip, so each blip spectral measure is just a collection of $k$ isolated delta spikes distributed randomly according to the limiting spectral computed in Theorem \ref{thm_final_moments}. As such, for fixed $k$ the variance and fourth moment over the ensemble of the general $m$\textsuperscript{th} moment do not go to $0$. We therefore define a modified spectral measure which averages over the eigenvalues of many matrices in order to extend standard techniques.

In order to facilitate the proof of the main convergence result (Theorem \ref{thm_as_convergence}) we first introduce some new notation. In all that follows we fix $k$ and suppress $k$-dependence in our notation for simplicity. Let $\Omega_N$ be the probability space of $N \times N$ $k$-checkerboard matrices with the natural probability measure. Then we define the product probability space
\begin{equation}
\Omega\ :=\ \prod_{N \in \N} \Omega_N.
\end{equation}
By Kolmogorov's extension theorem, this is equipped with a probability measure which agrees with the probability measures on $\Omega_N$ when projected to the $N$\textsuperscript{th} coordinate. Given $\{A_N\}_{N \in \N} \in \Omega$, we denote by $A_N$ the $N \times N$ matrix given by projection to the $N$\textsuperscript{th} coordinate. In what follows, we suppress the subscript $N \in \N$ on elements of $\Omega$, writing them as $\{A_N\}$.

\begin{rem}
\cite{KKMSX} employs a similar construction using product space, while \cite{HM} views elements of $\Omega$ as infinite matrices and the projection map $\Omega \rightarrow \Omega_N$ as simply choosing the upper left $N \times N$ minor.
\end{rem}

Previously we treated the $m$\textsuperscript{th} moment of an empirical spectral measure $\mu_{A,N}^{(m)}$ as a random variable on $\Omega_N$, but we may equivalently treat it as a random variable on $\Omega$. To highlight this, we define the random variable $X_{m,N}$ on $\Omega$
\begin{equation}\label{eq_xmn}
X_{m,N}(\{A_N\})\ :=\ \mu_{A_N,N}^{(m)}.
\end{equation}
These have centered $r$\textsuperscript{th} moment
\begin{equation}
X_{m,N}^{(r)}\ :=\ \E[(X_{m,N}-\E[X_{m,N}])^r].
\end{equation}

Per our motivating discussion at the beginning of this section, because we wish to average over a growing number of matrices of the same size, it is advantageous to work over $\Omega^\N$; this again is equipped with a natural probability measure by Kolmogorov's extension theorem. Its elements are sequences of sequences of matrices, and we denote them by $\overline{A}=\{A^{(i)}\}_{i \in \N}$ where $A^{(i)} \in \Omega$. We now give a more abstract definition of the averaged blip spectral measure defined in Definition \ref{def_ave_blip_fin}.

\begin{defi}\label{def_average_blip_measure}
Fix a function $g: \N \rightarrow \N$. The \textbf{averaged empirical blip spectral measure} associated to $\overline{A} \in \Omega^\N$ is
\begin{equation}
\mu_{N,g,\overline{A}} := \frac{1}{g(N)}\sum_{i=1}^{g(N)} \mu_{A_N^{(i)},N}.
\end{equation}
\end{defi}

In other words, we project onto the $N$\textsuperscript{th} coordinate in each copy of $\Omega$ and then average over the first $g(N)$ of these $N \times N$ matrices.

\begin{rem}\label{rem_stupidblocks}
If one wishes to avoid defining an empirical spectral measure which takes eigenvalues of multiple matrices, one may use the (rather contrived) construction of a $\N \times \N$ block matrix with independent $N \times N$ checkerboard matrix blocks.
\end{rem}

Analogously to $X_{m,N}$, we denote by $Y_{m,N,g}$ the random variable on $\Omega^\N$ defined by the moments of the averaged empirical blip spectral measure
\begin{equation}
Y_{m,N,g}(\overline{A})\ :=\ \mu_{N,g,\overline{A}}^{(m)}.
\end{equation}
The centered $r$\textsuperscript{th} moment (over $\Omega^\N$) of this random variable will be denoted by $Y_{m,N,g}^{(r)}$.

We now prove almost-sure weak convergence of the averaged blip spectral measures under a growth assumption on $g$. Recall the following definition.

\begin{defi}
A sequence of random measures $\{\mu_N\}_{N \in \N}$ on a probability space $\Omega$ converges \textbf{weakly almost-surely} to a fixed measure $\mu$ if, with probability $1$ over $\Omega^\N$, we have
\begin{equation}
\lim_{N \rightarrow \infty} \int f d\mu_N = \int f d\mu
\end{equation}
for all $f \in \mathcal{C}_b(\R)$ (continuous and bounded functions).
\end{defi}

\begin{thm}\label{thm_as_convergence}
Let $g: \N \rightarrow \N$ be such that there exists an $\delta>0$ for which $g(N) = \omega(N^\delta)$. Then, as $N\to\infty$, the averaged empirical spectral measures $\mu_{N,g,\overline{A}}$ of the $k$-checkerboard ensemble converge weakly almost-surely to the measure with moments $M_{k,m}=\frac{1}{k} \ektr[B^m]$, the limiting expected moments computed in Theorem \ref{thm_final_moments}.
\end{thm}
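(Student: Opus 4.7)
The plan is the method of moments combined with a high--moment concentration argument, exploiting the i.i.d.\ structure of the sequences $A^{(i)}$ that enter the averaged measure. By linearity of expectation and Theorem \ref{thm_final_moments},
\begin{equation}
\E[Y_{m,N,g}] \ = \ \E[X_{m,N}] \ \longrightarrow \ M_{k,m}
\end{equation}
for each fixed $m$, so it suffices to show that $Y_{m,N,g} - \E Y_{m,N,g} \to 0$ almost surely on $\Omega^\N$.

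The technical core is a uniform high--moment bound
\begin{equation}\label{eq_hmb}
\E\bigl[(X_{m,N} - \E X_{m,N})^{2\ell}\bigr] \ \le \ C_{m,\ell}, \qquad N \ge 1,
\end{equation}
for each $m,\ell \in \N$ and some constant $C_{m,\ell}$ independent of $N$. I would prove \eqref{eq_hmb} by the same class/configuration/matching/indexing framework developed in Section \ref{sec The Moments of the Blip Spectral Measure}: expanding $X_{m,N}^{2\ell}$ through the eigenvalue--trace formula and the definition of $\mu_{A,N}$ produces a sum over $2\ell$ cyclic products, and the combinatorial accounting of Proposition \ref{prop_polygon} and Lemma \ref{lem_contributions} still applies. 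The key observation is that the terms coming from ``disconnected'' groupings of the $2\ell$ cyclic products exactly match the expansion of $(\E X_{m,N})^{2\ell}$ and are therefore killed by centering, leaving only the $O(1)$ contributions from connected configurations. Intuitively this is unsurprising: the weight $f_{n(N)}(k\lambda/N)$ effectively restricts the measure to a window around the blip of size $O(\sqrt N)$ in $\lambda$ (hence $O(1)$ after subtracting $N/k$), so the moments of the measure remain bounded.

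Given \eqref{eq_hmb}, independence of $A^{(1)}, \ldots, A^{(g(N))}$ together with the standard moment--of--a--sum expansion gives
\begin{equation}
Y_{m,N,g}^{(2\ell)} \ = \ O\!\bigl(g(N)^{-\ell}\bigr),
\end{equation}
since in the expansion of $\bigl(\tfrac{1}{g(N)}\sum_i(X^{(i)}-\E X)\bigr)^{2\ell}$ every nonvanishing term involves at most $\ell$ distinct indices $i$. Given the hypothesis $g(N) = \omega(N^\delta)$, fix $\ell$ with $\ell \delta > 1$. Markov's inequality then yields
\begin{equation}
P\bigl(|Y_{m,N,g} - \E Y_{m,N,g}| > \epsilon\bigr) \ \le\ \epsilon^{-2\ell}\,Y_{m,N,g}^{(2\ell)}\ =\ O\bigl(N^{-\ell\delta}\bigr),
\end{equation}
which is summable. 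Borel--Cantelli then gives $Y_{m,N,g} \to M_{k,m}$ almost surely for each fixed $m$, and taking the countable intersection of these probability-one events over $m \in \N$ produces a single full-measure event on which all moments converge simultaneously.

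To finish, note that the moments $M_{k,m} = \tfrac1k\,\ektr B^m$ are those of the spectral measure of the $k \times k$ hollow GOE. Since $B$ is a fixed-size Gaussian matrix, its spectral moments grow slowly enough to satisfy Carleman's condition, so the sequence $\{M_{k,m}\}$ uniquely determines a measure on $\R$. The method of moments then promotes almost-sure convergence of moments to almost-sure weak convergence of the measures $\mu_{N,g,\overline{A}}$. The principal obstacle I expect is establishing \eqref{eq_hmb}: while the bound is intuitively clear from the support properties imposed by the weight $f_n$, the combinatorial bookkeeping required to verify that all super--constant contributions cancel under centering is the main source of technical complexity.
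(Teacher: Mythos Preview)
Your proposal is correct and follows essentially the same architecture as the paper: bound the centered moments of $X_{m,N}$ uniformly in $N$, pass to $Y_{m,N,g}^{(r)}=O(g(N)^{-r/2})$ via independence, pick $r$ large enough for summability, then apply Borel--Cantelli and the moment convergence theorem with Carleman's condition. Two small points where the paper differs are worth knowing. First, for the bound \eqref{eq_hmb} the paper does \emph{not} rely on centering to kill super-constant contributions: it shows directly that the \emph{uncentered} moments $\E[X_{m,N}^\ell]$ are $O_{m,\ell}(1)$, because in the expansion into $\ell$ cyclic products the cancellation of Lemma~\ref{lem Combinatorics Cancellation} applies independently to each factor (any factor whose class has fewer than $m$ blocks forces the whole sum to vanish, and those with at least $m$ blocks already contribute $O(1)$). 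So the ``connected versus disconnected'' bookkeeping you anticipate is unnecessary. Second, the measures $\mu_{N,g,\overline{A}}$ are not probability measures (their total mass is only approximately $1$), so the paper invokes a version of the moment convergence theorem for general finite measures rather than the textbook statement; you should flag that this extension is needed, though it is routine once the zeroth moment is seen to converge.
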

\begin{proof}

For simplicity of notation, we suppress $k$ and denote $M_{k,m}$ by $M_m$. By the triangle inequality, we have
\begin{equation}\label{eqn_tri_ineq_y}
\abs{Y_{m,N,g}-M_m}\ \leq\ \abs{Y_{m,N,g}-\E[Y_{m,N,g}]}+\abs{\E[Y_{m,N,g}]-M_m}.
\end{equation}
From Theorem \ref{thm_final_moments}, we know that $\E[X_{m,N}] \to M_m$, and it follows that $\E[Y_{m,N,g}] \to M_m$. Hence to show that $Y_{m,N,g} \to M_m$ almost surely, it suffices to show that $|Y_{m,N,g} - \E[Y_{m,N,g}]|\to 0$ almost surely as $N \rightarrow \infty$.
We show that the limit as $N \rightarrow \infty$ of all moments over $\Omega_N$ of any arbitrary moment of the empirical spectral measure exists, and that we may always choose a sufficiently high moment\footnote{Note the difference between this and the standard technique of, for instance, \cite{HM}, which uses only the fourth moment.} such that the standard method of Chebyshev's inequality and the Borel-Cantelli lemma gives that $|Y_{m,N,g}-\E[Y_{m,N,g}]|\to 0$. Finally, the moment convergence theorem gives almost-sure weak convergence to the limiting averaged blip spectral measure.

\begin{lem}\label{lem_moments_of_moments}
Let $X_{m,N}$ be as defined in \eqref{eq_xmn}. 
Then for any $t \in \N$, the $r$\textsuperscript{th} centered moment of $X_{m,N}$ satisfies
\begin{equation}
X_{m,N}^{(r)} \ =\ \E\left[\left(X_{m,N}-\E[X_{m,N}]\right)^r\right]\ =\ O_{m,r}(1)
\end{equation}
as $N$ goes to infinity.
\end{lem}
\begin{proof}
After expanding $\E\left[\left(X_{m,N}-\E[X_{m,N}]\right)^r\right]$ binomially, the proof follows similarly to that of Theorem \ref{thm_final_moments}. For more details, see Appendix \ref{app_proof_lem_mom_bound}.
\end{proof}

We apply the following Theorem (Theorem $1.2$ of \cite{Fer}) with $X=X_{m,N}-\E\left[X_{m,N}\right]$, $s=g(N)$ and $\mu_i=X_{m,N}^{(i)}$.

\begin{thm}\label{thm_turkishguy}
Let $r \in \N$ and let $X_1,\ldots,X_s$ be i.i.d. copies of some mean-zero random variable $X$ with absolute moments $\E[|X|^\ell]<\infty$ for all $\ell \in \N$. Then
\begin{equation}
\E\left[ \left( \sum_{i=1}^s X_i\right)^r \right]\ =\ \sum_{1 \leq m \leq \frac{r}{2}}B_{m,r}(\mu_2,\mu_3,\ldots,\mu_r) \binom{s}{m}
\end{equation}
where $\mu_i$ are the moments of $X$ and $B_{m,r}$ is a function independent of $s$, the details of which are given in \cite{Fer}.
\end{thm}
We must first show boundedness of the absolute moments of $X_{m,N}$. By Cauchy-Schwarz,
\begin{equation}
\left(\int |x^{2\ell+1}|d\mu_{X_{m,N}}\right)^2\ \leq\ \int |x|^2 d\mu_{X_{m,N}} \cdot \int |x|^{4\ell}d\mu_{X_{m,N}},
\end{equation}
where $\mu_{X_{m,N}}$ is the probability measure on $\Omega$ given by the density of $X_{m,N}$. Since, for fixed $N$, the even moments of $X_{m,N}$ are finite by \ref{eq_finite_moments}, the previous bound shows that all odd absolute moments are finite as well. Hence Theorem \ref{thm_turkishguy} applies, yielding
\begin{equation}
\E\left[\left(\sum_{i=1}^{g(N)} X_{m,N,i}-\E\left[X_{m,N,i}\right]\right)^r\right]\ =\ \sum_{1 \leq m \leq \frac{r}{2}}B_{m,r}(X_{m,N}^{(2)},X_{m,N}^{(3)},\ldots,X_{m,N}^{(r)}) \binom{g(N)}{m}.
\end{equation}
where the $X_{m,N,i}$ are $i$-indexed i.i.d. copies of $X_{m,N}$. By Lemma \ref{lem_moments_of_moments}, for sufficiently high $N$, $X_{m,N}^{(t)}$ are uniformly bounded above by some constant $K$ for $1 \leq t \leq m$, so there exists $C$ such that $B_{m,r}(X_{m,N}^{(2)},X_{m,N}^{(3)},\ldots,X_{m,N}^{(r)}) < C$ for all sufficiently large $N$ and for all $1\leq m \leq r/2$. Hence
\begin{equation}
\E\left[\left(\sum_{i=1}^{g(N)} X_{m,N,i}-\E\left[X_{m,N,i}\right]\right)^r\right]\ \leq\ \sum_{1\leq m \leq \frac{r}{2}}  C\binom{g(N)}{m}.
\end{equation}
As such, we have
\begin{equation}
Y_{m,N,g}^{(r)} = \frac{1}{g(N)^r} \E\left[\left(\sum_{i=1}^{g(N)} X_{m,N,i}-\E\left[X_{m,N,i}\right]\right)^r\right] \leq \sum_{1\leq m \leq \frac{r}{2}} \frac{C}{g(N)^r}\binom{g(N)}{m}  = O\left(\frac{1}{g(N)^{r/2}}\right).
\end{equation}
Since $g(N) = \omega(N^\delta)$, we may choose $r$ sufficiently large so that
\begin{equation}
Y_{m,N,g}^{(r)}\ =\ O\left(\frac{1}{N^2}\right).
\end{equation}
Then by Chebyshev's inequality,
\begin{equation}
\prob(|Y_{m,N,g}-\E[Y_{m,N,g}]|>\eps)\ \leq\ \frac{\E\left[\left(Y_{m,N,g}-\E[Y_{m,N,g}]\right)^r \right]}{\eps^r}\ =\ \frac{Y_{m,N,g}^{(r)}}{\eps^r}\ =\ O\left(\frac{1}{N^2}\right).
\end{equation}

We now apply the following.

\begin{lem}[Borel-Cantelli]
Let $B_i$ be a sequence of events with $\sum_i \prob(B_i)<\infty$. Then
\begin{equation}
\prob\left(\bigcap_{j=1}^\infty \bigcup_{\ell=j}^\infty B_\ell\right)\ =\ 0.
\end{equation}
\end{lem}

Define the events
\begin{equation}
B_N^{(m,d,g)}\ := \ \left\{A \in \Omega^\N: |Y_{m,N,g}(A) - \E[Y_{m,N,g}]| \ \geq\ \frac{1}{d}\right\}.
\end{equation}
Then $\prob(B_N^{(m,d,g)}) \leq \frac{C_m d^r}{N^2}$, so for fixed $m$, $d$, the conditions of the Borel-Cantelli lemma are satisfied. Hence
\begin{equation}
\prob\left(\bigcap_{j=1}^\infty \bigcup_{\ell=j}^\infty B_\ell^{(m,d,g)}\right) \ =\ 0.
\end{equation}
Taking a union of these measure-zero sets over $d \in \N$ we have
\begin{equation}
\prob\left(Y_{m,N,g} \neq \E[Y_{m,N,g}] \text{ for infinitely many $N$}\right)\ =\ 0,
\end{equation}
and taking the union over $m \in \Z_{\geq 0}$,
\begin{equation}
\prob\left(\exists m \text{ such that }Y_{m,N,g} \neq \E[Y_{m,N,g}] \text{ for infinitely many $N$}\right)\ = \ 0.
\end{equation}
Therefore with probability $1$ over $\Omega^\N$, $|Y_{m,N,g}-\E[Y{m,N,g}]| \to 0$ for each $m$. This, together with \eqref{eqn_tri_ineq_y} and the discussion following it, yields that the moments $\mu_{N,g}^{(m)}=Y_{m,N,g} \to M_m$ almost surely. We now use the following to show almost-sure weak convergence of measures (see for example \cite{Ta}).

\begin{thm}[Moment Convergence Theorem]\label{thm_moment_convergence}
Let $\mu$ be a measure on $\R$ with finite moments $\mu^{(m)}$ for all $m \in \Z_{\geq 0}$, and $\mu_1,\mu_2,\ldots$ a sequence of measures with finite moments $\mu_n^{(m)}$ such that $\lim_{n\rightarrow \infty} \mu_n^{(m)} = \mu^{(m)}$ for all $m \in \Z_{\geq 0}$. If in addition the moments $\mu^{(m)}$ uniquely characterize a measure, then the sequence $\mu_n$ converges weakly to $\mu$.
\end{thm}

To show Carleman's condition is satisfied for the limiting moments $M_m$, we show that $M_m$ are bounded above by the moments of the Gaussian. The odd moments vanish, and by Theorem \ref{thm_final_moments} the even moments are given by
\begin{equation}
M_{2m}\ =\ \frac{1}{k} \ektr A^{2m}\ =\ \sum_{1 \leq i_1,\ldots,i_{2m} \leq k} \E[a_{i_1i_2}a_{i_2i_3}\ldots a_{i_{2m}i_1}],
\end{equation} and as
$\E[a_{i_1i_2}a_{i_2i_3}\ldots a_{i_ni_1}]$ is maximized when all $a_{i_{\ell}i_{\ell+1}}$ are equal,
\begin{equation}
M_{2m}\ \leq\  \sum_{1 \leq i_1,\ldots,i_{2m} \leq k} (2m-1)!!\ =\ k^{2m}(2m-1)!!.
\end{equation}
These are the moments of $\mathcal{N}(0,k)$ so Carleman's condition is satisfied, thus we let $\overline{\mu}$ be the unique measure determined by the moments $M_m$. Choose $\overline{A} \in \Omega^\N$. Then the preceding argument showed that, with probability $1$ over $\overline{A}$ chosen from $\Omega^\N$, all moments $\mu_{N,g,\overline{A}}^{(m)}$ of the measures $\mu_{N,g,\overline{A}}$ converge to $M_m$. Then by Theorem \ref{thm_moment_convergence} the measures $\mu_{N,g,\overline{A}}$ converge weakly to $\overline{\mu}$ with probability $1$, completing the proof.
\end{proof}


\appendix


\section{Details for the Bulk}\label{app_bulk}


In this appendix we give additional details related to \S\ref{sec:bulk}. First, we verify that the expected higher moments of the $(k,1)$-checkerboard ensemble do not converge as $N \rightarrow \infty$. We then demonstrate almost sure weak convergence of the bulk eigenvalues to a semicircle.

\begin{prop} \label{prop:bulk_divergence}
The average moments diverge in the bulk case, namely
\begin{equation}
\E\left[\nu_{A_N}^{(\ell)} \right]\ =\ \Omega(N^{\ell/2-1}).
\end{equation}
\end{prop}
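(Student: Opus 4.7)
The plan is to show that $\E[\tr A_N^\ell] = \Omega(N^\ell)$, which yields the claim via the eigenvalue-trace formula $\E[\nu_{A_N}^{(\ell)}] = N^{-\ell/2-1} \E[\tr A_N^\ell]$.

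First I would combinatorially expand the trace
\begin{equation}
\E[\tr A_N^\ell]\ =\ \sum_{1 \leq i_1, \ldots, i_\ell \leq N} \E[m_{i_1 i_2} m_{i_2 i_3} \cdots m_{i_\ell i_1}]
\end{equation}
and split the index tuples into two groups: let $S_1$ denote the set of tuples with $i_1 \equiv i_2 \equiv \cdots \equiv i_\ell \pmod k$, and let $S_2$ be its complement. For every tuple in $S_1$ each entry $m_{i_j i_{j+1}}$ equals the constant $w = 1$, so the cyclic product is deterministically $1$. Counting $|S_1|$ by first choosing a common residue class ($k$ choices) and then selecting each of the $\ell$ indices freely from the $N/k$ elements of that class gives $|S_1| = k(N/k)^\ell = N^\ell/k^{\ell-1}$. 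This deterministic contribution is already $\Theta(N^\ell)$ and reflects the effect of the $k$ blip eigenvalues near $N/k$.

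Next I would bound the contribution from $S_2$. For any $\mathbf{i} \in S_2$ the cyclic product contains at least one mean-zero factor $a_{i_j i_{j+1}}$, and by independence (up to the Hermitian constraint $a_{ij} = \overline{a_{ji}}$) the expectation vanishes unless every such $a$-factor is matched with another factor sharing the same unordered index pair. Each matching identifies positions in the cyclic sequence and hence costs degrees of freedom, while $w$-factors are deterministic but impose the congruence $i_j \equiv i_{j+1}$ on their endpoints. Decomposing the cycle at the $a$-edge positions into arcs of $w$-edges, each arc lives in a single residue class, so the matching constraints on the $a$-edge endpoints combine cleanly with the residue constraints. A careful case analysis shows that the dominant contribution comes from configurations with a single matched pair of $a$-edges and produces $O(N^{\ell-1})$ contributing tuples of bounded expectation; tuples with more matched pairs lose further degrees of freedom and contribute at strictly lower order.

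Combining the two estimates yields $\E[\tr A_N^\ell] = N^\ell/k^{\ell-1} + O(N^{\ell-1})$, which is $\Omega(N^\ell)$ as $N\to\infty$ for $\ell \geq 2$. For $\ell = 1$ the set $S_2$ is empty since a single edge cannot avoid its own residue class, and $\E[\tr A_N] = N$ directly. The main obstacle will be the careful degree-of-freedom accounting for $S_2$: in particular one must handle short arcs of length zero, in which the two arc endpoints coincide as a single position rather than as two positions identified by a matching constraint, and verify in all such degenerate configurations that the power of $N$ in the contribution is at most $\ell - 1$. A direct computation of $\E[\tr A_N^2] = N^2$ provides an independent sanity check in the smallest nontrivial case.
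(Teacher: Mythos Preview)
Your proposal is correct, and the $S_1$ count is exactly what the paper does. The difference lies in how $S_2$ is handled. The paper dispenses with it in one line by asserting that the expectation of every cyclic product is non-negative, so the $S_1$ terms alone already furnish the lower bound $\E[\tr A_N^\ell] \geq |S_1| = N^\ell/k^{\ell-1}$ and no control on $S_2$ is needed at all. You instead aim for the sharper two-sided estimate $\E[\tr A_N^\ell] = N^\ell/k^{\ell-1} + O(N^{\ell-1})$ via a degree-of-freedom count on $S_2$.

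The paper's shortcut is much slicker, but its non-negativity claim is not literally valid for arbitrary mean-zero entry laws: for $k \geq 3$ the length-$9$ walk that traverses a triangle three times contributes $(\E[a^3])^3$, which is negative when the entry distribution has negative third moment. Your route sidesteps this issue and yields a genuinely stronger asymptotic. The price is the block/arc case analysis you correctly flag as the main obstacle; that analysis can be carried out with the same degree-of-freedom bookkeeping the paper later develops in \S\ref{sec The Moments of the Blip Spectral Measure} (Lemma~\ref{lem_contributions} and Proposition~\ref{prop_polygon} show that any class with at least one block of $a$'s loses at least one power of $N$, and for fixed $\ell$ there are only finitely many classes), but it is substantially more work than the paper's one-line observation in the regimes---Gaussian or symmetric entries---where non-negativity does hold.
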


\begin{proof}
By the eigenvalue-trace lemma, we have that
\begin{equation}
\E\left[\nu_{A_N}^{(\ell)} \right]\ = \ \frac{1}{N^{\ell/2+1}}\E\left[\on{Tr}(A_N^\ell) \right]\ =\ \frac{1}{N^{\ell/2+1}} \sum_{1\le i_1,\ldots,i_\ell\le N} \E\left[{a_{i_1i_2} \cdots a_{i_{\ell-1}i_\ell}  a_{i_\ell i_{1}}}\right].
\end{equation}

Note that the expectation of any term in the sum is non-negative. We now count the number of terms where each $a_{ij}=1$. Each such term uniquely corresponds to a choice of $i_1,...,i_\ell$ all congruent to each other modulo $k$. Hence the contribution of these terms is $\Omega(N^\ell)$, which gives the result.
\end{proof}

In \S\ref{sec:bulk}, we established convergence in expectation of the moments. We now show how to extend this to almost sure weak convergence of the empirical densities. This verification is standard, for instance, see \cite{Feier}. To do this, we establish the following lemma.


\begin{lemma}\label{lem:bulk_var_est}
Let $A_N$ be an $N\times N$ $(k,0)$-checkerboard matrix. Then for each fixed $\ell$,
\begin{equation}
\on{Var}(\nu_{A_N}^{(\ell)})\ = \ O(1/N^2).
\end{equation}
\end{lemma}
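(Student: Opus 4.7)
The plan is to bound the variance directly by expanding it via the eigenvalue-trace formula and carefully counting pairs of closed walks. Writing $P_I := a_{i_1 i_2} a_{i_2 i_3} \cdots a_{i_\ell i_1}$ for a tuple $I = (i_1,\ldots,i_\ell) \in [N]^\ell$, we have
\begin{equation}
\on{Var}(\nu_{A_N}^{(\ell)}) \ =\ \frac{1}{N^{\ell+2}} \sum_{I,J} \bigl(\E[P_I P_J] - \E[P_I]\,\E[P_J]\bigr).
\end{equation}
The first step is to note that if the walks $I$ and $J$ use disjoint sets of unordered edges $\{i_\alpha,i_{\alpha+1}\}$, then $P_I$ and $P_J$ are independent (the $a_{ij}$ are independent across distinct $\{i,j\}$), so the summand vanishes. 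Hence only pairs sharing at least one edge contribute, and for these the combined multigraph $G$ is connected.

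Since the entries have mean zero, every edge of $G$ must have total multiplicity at least $2$ in the combined multigraph for $\E[P_I P_J]$ to be nonzero. As the total number of edge traversals across both walks is $2\ell$, the number $e$ of distinct edges of $G$ satisfies $e \le \ell$, and connectivity gives $v \le e+1 \le \ell+1$, where $v$ is the number of distinct vertices of $G$. The main step is the refinement $v \le \ell$. Suppose $v = \ell+1$; then $e = \ell$ and $G$ is a tree with each edge used exactly twice across the combined walks. But any closed walk on a tree traverses each edge an even number of times (each downward traversal must be matched by an upward one), so $k_I(e'),\ k_J(e') \in 2\Z$ for every edge $e'$. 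Since $k_I(e')+k_J(e')=2$, every edge is used twice by exactly one of the two walks, contradicting the assumption that $I$ and $J$ share an edge. Hence $v \le \ell$.

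With $v \le \ell$ established, the number of index pairs $(I,J)$ giving rise to a prescribed combined multigraph is $O(N^v) = O(N^\ell)$; the $(k,0)$-checkerboard restriction that adjacent indices in a walk be non-congruent modulo $k$ only further restricts the count by a constant factor depending on $k$ and the graph structure, not on $N$. The expectations $\E[P_I P_J]$ are uniformly bounded by products of higher moments of the $a_{ij}$, which are finite. Since there are only finitely many combinatorial types of combined multigraphs for fixed $\ell$, summing yields
\begin{equation}
\on{Var}(\nu_{A_N}^{(\ell)}) \ =\ O\!\left(\frac{N^\ell}{N^{\ell+2}}\right) \ =\ O\!\left(\frac{1}{N^2}\right),
\end{equation}
as desired. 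The main obstacle is the parity/tree argument showing $v \le \ell$; once this cancellation of the leading $v=\ell+1$ term is secured, the remaining counting is routine, and the checkerboard structure plays no essential role beyond altering implicit constants.
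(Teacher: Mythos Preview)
Your proof is correct and follows essentially the same route as the paper: expand the variance via the eigenvalue-trace formula, discard edge-disjoint pairs by independence, use the mean-zero condition to force every edge to be traversed at least twice in the combined walk, and then apply the tree/parity argument to rule out $v=\ell+1$ vertices, leaving $O(N^{\ell})$ contributing pairs. The paper's proof is organized by ``weight'' (your $v$) but the logic is identical; the only cosmetic difference is that you explicitly remark that the $(k,0)$-checkerboard congruence restriction affects only constants, which the paper leaves implicit.
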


From this lemma, we can obtain almost sure convergence as follows. Firstly, by Chebyshev's inequality and the previous lemma,

\begin{align}\label{eq:chebyshev_ex}
\sum_{N=1}^\infty {\rm Pr}\pez{\left|\nu_{A_N}^{(\ell)}-\E\left[\nu_{A_N}^{(\ell)}\right]\right|\ >\ \epsilon}&\ \le\  \frac{1}{\epsilon^2}\sum_{N=1}^\infty \on{Var}(\nu_{A_N}^{(\ell)})<\infty.
\end{align}

Hence, by Borel-Cantelli, ${\rm Pr}\pez{\limsup_N  \left|\nu_{A_N}^{(\ell)}-\E\left[\nu_{A_n}^{(\ell)}\right]\right|>\epsilon}=0$, so $\nu_{A_n}^{(\ell)}\rightarrow \E\left[\nu_{A_n}^{(\ell)}\right]$ almost surely, giving us Theorem \ref{thm_main_bulk}  by the method of moments.

\begin{proof}[Proof of Lemma \ref{lem:bulk_var_est}]
This proof is combinatorics. By the eigenvalue trace lemma
\begin{align}
\left|\E\left[(\nu_{A_N}^{(\ell)})^2\right]-\left[\E(\nu_{A_N}^{(\ell)})\right]^2\right|&\ = \ \frac{1}{N^{\ell+2}}\left|\E\left[\on{tr}(A_N^\ell)^2\right]-\pez{\E\left[\on{tr}(A_N^\ell)\right]}^2\right| \nonumber \\
&\ = \ \frac{1}{N^{\ell+2}} \sum_{\textbf{I},\textbf{I}'} \left|\E [\zeta_{\mathbf{I}}\zeta_{\mathbf{I}'}]-\E[\zeta_{\mathbf{I}}]\E[\zeta_{\mathbf{I'}}]\right|, \label{eqn_est_bulk_con}
\end{align}
where $\zeta_{\mathbf{I}}$ is a stand-in for writing out the product $a_{i_1i_2}\cdots a_{i_{\ell-1}i_\ell}a_{i_\ell i_1}$ associated to the sequence $\mathbf{I}=i_1\ldots i_\ell$, where $1\le i_1,\ldots,i_\ell\le N$; hence the sum over pairs $(\mathbf{I},\mathbf{I}')$. Moreover, as in Lemma \ref{lem:avgmomentsbulk}, each pair corresponds to a pair of walks on a graph with vertices $V_{(\mathbf{I},\mathbf{I}')}=\{i_1,\ldots i_\ell,i_1',\ldots i_\ell'\}$ and with edges that we denote as $E_{(\mathbf{I},\mathbf{I}')}$. We say that two such pairs of walks are equivalent if they are equivalent up to relabeling the underlying set of nodes. We then define the weight of $(\mathbf{I},\mathbf{I}')$ to be $\abs{V_{(\mathbf{I},\mathbf{I}')}}$.

We claim that the pairs of weight $t\le \ell$ contribute $O(N^t)$ to the sum. Each equivalence class of weight $t$ gives rise to $O(N^t)$ equivalent pairs as we are choosing $t$ distinct nodes for the labels. Moreover, the contribution of each term is $O(1)$ as the moments of the random entries are finite.

We now consider the entries with weight $t\ge \ell+1$. Note that for the expectation of the term $(\mathbf{I},\mathbf{I}')$ to be nonzero, each edge in $E_{(\mathbf{I},\mathbf{I}')}$ must be traversed twice. In addition, the graphs induced by $\mathbf{I}$ and $\mathbf{I}'$ must share an edge, as otherwise, $\E[\zeta_{\mathbf{I}}\zeta_{\mathbf{I}'}]=\E[\zeta_{\mathbf{I}}]\E[\zeta_{\mathbf{I}'}]$ by independence. Since each edge is traversed twice there are at most $\ell$ unique edges in $E_{(\mathbf{I},\mathbf{I}')}$, which is too few to form a connected graph on $\ell+2$ nodes. Therefore, no pair satisfying the two aforementioned conditions can have weight $\ell+2$. Furthermore, in the case of weight $\ell+1$, there is no such pair either. In this case there are $\ell+1$ nodes and at most $\ell$ unique edges in $E_{(\mathbf{I},\mathbf{I}')}$. Hence as the graph is connected it is a tree. As the walk induced by $\mathbf{I}$ in this graph begins and ends at $i_1$, each edge in the walk is traversed twice: once in each direction. An identical statement holds for the walk induced by $\mathbf{I}'$. Hence as there are exactly two of each edge in $E_{(\mathbf{I},\mathbf{I}')}$ the walks induced by $\mathbf{I}$ and $\mathbf{I'}$ are disjoint, a contradiction. Hence, no pairs of weight greater than $\ell$ contribute to the sum, which, together with \eqref{eqn_est_bulk_con}, gives us Lemma \ref{lem:bulk_var_est}.
\end{proof}

\begin{rem}\label{rem:trace_var}
We note that the previous lemma also establishes that $\var(\tr(A_N^{\ell}))=O(N^{\ell})$.
\end{rem}


\section{Proof of Two Regimes}\label{app_tworegimes}

\noindent \small \emph{This appendix is based on work done by Manuel Fernandez (manuelf@andrew.cmu.edu) and Nicholas Sieger (nsieger@andrew.cmu.edu) at Carnegie Mellon under the supervision of the fifth named author, expanded by the third,  seventh and eighth named authors}.\\ \ \normalsize

In this appendix we demonstrate that checkerboard matrices almost surely have two regimes of eigenvalues, one that is $O(N^{1/2+\epsilon})$ (the bulk) and the other of order $N$ (the blip). To do this, we rely on matrix perturbation theory. In particular, we view a $(k,w)$-checkerboard matrix as the sum of a $(k,0)$-checkerboard matrix and a fixed matrix $Z$ where $Z_{ij}=w\chi_{\{{i \equiv j \bmod k}\}}$. In that sense, we view the $(k,w)$-checkerboard matrix as a perturbation of the matrix $Z$. Then, as the spectral radius of the $(k,0)$-checkerboard matrix is $O(N^{1/2+\epsilon})$, we obtain by standard results in the theory of matrix perturbations that the spectrum of the $(k,w)$-checkerboard matrix is the same as that of matrix $Z$ up to an order $N^{1/2+\epsilon}$ perturbation.

We begin with the following observation on the spectrum of the matrix $Z$:

\begin{lemma}\label{lem_spectrum_Z}
The matrix $Z$ has exactly $k$ non-zero eigenvalues, all of which are equal to $Nw/k$.
\end{lemma}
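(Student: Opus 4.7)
The plan is to exploit the fact that $Z$ has a very rigid combinatorial structure, depending only on residue classes modulo $k$. For convenience I would first assume $k \mid N$; if not, the residue classes differ in size by at most one and the same argument goes through with only cosmetic changes.

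The cleanest route is via a matrix identity. I would compute $Z^2$ directly: since
\begin{equation}
(Z^2)_{ij}\ =\ \sum_{\ell=0}^{N-1} Z_{i\ell} Z_{\ell j},
\end{equation}
and $Z_{i\ell} Z_{\ell j} = w^2$ exactly when $\ell \equiv i \equiv j \pmod k$ and vanishes otherwise, the inner sum contributes $w^2 \cdot (N/k)$ when $i \equiv j \pmod k$ and $0$ otherwise. Thus $Z^2 = (Nw/k)\, Z$. This shows that the minimal polynomial of $Z$ divides $x\bigl(x - Nw/k\bigr)$, so every eigenvalue of $Z$ lies in $\{0, Nw/k\}$.

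To pin down the multiplicity of the non-zero eigenvalue I would use the trace. Since $Z_{ii}=w$ for every $i$, we have $\tr Z = Nw$, and because the only possible non-zero eigenvalue is $Nw/k$, the number of such eigenvalues, counted with multiplicity, must equal $Nw/(Nw/k) = k$. The remaining $N-k$ eigenvalues are $0$, which gives the lemma.

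There is no real obstacle here: the argument is a short structural computation. As a sanity check (and an alternative proof), one can note that permuting the rows and columns of $Z$ to group indices by residue class modulo $k$ conjugates $Z$ into a block-diagonal matrix with $k$ blocks, each an $(N/k) \times (N/k)$ all-$w$ matrix, and each such all-$w$ block is rank one with unique non-zero eigenvalue $(N/k) w$; taking $k$ copies recovers the same conclusion.
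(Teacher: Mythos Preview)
Your proof is correct. The paper takes a different, more explicit route: it simply writes down eigenvectors. For each residue class $j$ modulo $k$, the indicator vector $\sum_{i} e_{ki+j}$ is an eigenvector of $Z$ with eigenvalue $Nw/k$, giving $k$ such eigenvectors; differences of standard basis vectors within a residue class furnish the $0$-eigenspace.

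Your argument via the identity $Z^2=(Nw/k)Z$ together with the trace is arguably cleaner: it avoids exhibiting a full eigenbasis and immediately forces both the spectrum and the multiplicity. The block-diagonal description you give as a sanity check is essentially a repackaging of the paper's eigenvector construction (grouping by residue class is exactly what produces those indicator eigenvectors). What the paper's approach buys is explicit eigenvectors, which can be handy for the heuristic discussion elsewhere in the paper about ``near-trivial'' eigenvectors of the perturbed matrix; what your approach buys is brevity and the observation that $Z$ is (up to scaling) an idempotent, which is really the structural content of the lemma.
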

\begin{proof}
For $1\le j \le k$ the vectors $\sum_{i=0}^{N/k-1} e_{ki+j}$ are eigenvectors with eigenvalues $Nw/k$. Furthermore, for $1\le i\le N/k-1$ and $1\le j<k$  the vector $e_{ki+j}-e_{ki+j+1}$ are eigenvectors with eigenvalues equal to $0$.
\end{proof}

Weyl's inequality gives the following:

\begin{lemma}\label{lem_weyl}
(Weyl's inequality) \cite{HJ} Let $H,P$ be $N\times N$ Hermitian matrices, and let the eigenvalues of $H$, $P$, and $H+P$ be arranged in increasing order. Then for every pair of integers such that $1\le j,\ell\le n$ and $j+\ell\ge n+1$ we have
\begin{equation}\label{eqn_wyle1}
\lambda_{j+\ell-n}(H+P)\le \lambda_j(H)+\lambda_\ell(P),
\end{equation}
and for every pair of integers $j,\ell$ such that $1\le j,\ell\le n$ and $j+\ell\le n+1$ we have
\begin{equation}\label{eqn_wyle2}
\lambda_j(H)+\lambda_\ell(P)\le \lambda_{j+\ell-1}(H+P).
\end{equation}
\end{lemma}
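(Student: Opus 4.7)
The plan is to derive both inequalities from the Courant--Fischer variational characterization of eigenvalues. For any $N \times N$ Hermitian matrix $A$ with eigenvalues ordered increasingly as $\lambda_1(A) \le \cdots \le \lambda_N(A)$, one has the dual formulas
\[
\lambda_j(A) \;=\; \min_{\dim V = j}\ \max_{x \in V,\ \|x\|=1} \langle Ax, x\rangle \;=\; \max_{\dim W = N-j+1}\ \min_{x \in W,\ \|x\|=1} \langle Ax, x\rangle,
\]
where $V, W$ range over linear subspaces of $\C^N$. I would take these as the starting point; they are standard and follow from a spectral decomposition of $A$ together with a dimension-counting argument against its eigenspaces.

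For \eqref{eqn_wyle1} (the regime $j+\ell \geq N+1$), let $V_j$ and $V_\ell$ be subspaces of dimensions $j$ and $\ell$ achieving the outer minima in the min--max formulas for $\lambda_j(H)$ and $\lambda_\ell(P)$. The standard dimension inequality gives $\dim(V_j \cap V_\ell) \geq j + \ell - N$, so I would pick a subspace $U \subseteq V_j \cap V_\ell$ of dimension exactly $j + \ell - N$. Applying the min--max formula to $H+P$, together with the pointwise identity $\langle (H+P)x,x\rangle = \langle Hx, x\rangle + \langle Px, x\rangle$ and the inclusions $U \subseteq V_j$ and $U \subseteq V_\ell$, yields
\[
\lambda_{j+\ell-N}(H+P) \;\leq\; \max_{x \in U,\ \|x\|=1}\langle (H+P)x, x\rangle \;\leq\; \max_{x \in V_j}\langle Hx, x\rangle + \max_{x \in V_\ell}\langle Px, x\rangle \;=\; \lambda_j(H) + \lambda_\ell(P).
\]

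For \eqref{eqn_wyle2} (the regime $j+\ell \leq N+1$), I would run the dual argument using the max--min formula. Pick subspaces $W_j, W_\ell$ of dimensions $N - j + 1$ and $N - \ell + 1$ achieving the outer maxima for $\lambda_j(H)$ and $\lambda_\ell(P)$; the intersection has dimension at least $(N-j+1) + (N-\ell+1) - N = N - (j+\ell-1) + 1$, which is precisely the dimension of a candidate subspace in the max--min characterization of $\lambda_{j+\ell-1}(H+P)$. The same chain of inequalities, with $\min$ replacing $\max$ and the final comparison of Rayleigh quotients reversed, then gives $\lambda_j(H) + \lambda_\ell(P) \leq \lambda_{j+\ell-1}(H+P)$.

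The main obstacle is purely bookkeeping: keeping the increasing-order convention consistent throughout and verifying that the dimensions of the intersected subspaces match exactly the ones required by the Courant--Fischer characterization of the target eigenvalue $\lambda_{j+\ell-N}(H+P)$ or $\lambda_{j+\ell-1}(H+P)$. Once the variational formulas are in hand, the argument reduces to the dimension count $\dim(V_j \cap V_\ell) \geq j + \ell - N$ and the linearity of the quadratic form $x \mapsto \langle Hx, x\rangle + \langle Px, x\rangle$.
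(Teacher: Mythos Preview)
Your argument via the Courant--Fischer min--max principle is correct and is precisely the standard textbook proof (indeed, it is essentially the one in the cited reference \cite{HJ}). Note, however, that the paper itself does not supply a proof of this lemma: it is stated as a quotation of a known result from Horn and Johnson and used as a black box, so there is no ``paper's own proof'' to compare against.
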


Let $\|P\|_{\text{op}}$ denote $\max_i \left|\lambda_i(P)\right|$. By using the fact that $\left|\lambda_\ell (P)\right|\le \|P\|_{\text{op}}$ and taking $\ell=n$ in \eqref{eqn_wyle1}, we obtain that $\lambda_j(H+P)\le \lambda_j(H)+\|P\|_{\text{op}}$. Taking $\ell=1$ in \eqref{eqn_wyle2} gives the inequality on the other side, hence $\left| \lambda_j(H+P)-\lambda_j(H)\right|\le \|P\|_{\text{op}}$.

The above lemma gives that if the spectral radius of $P$ is $O(f)$ then the size of the perturbations will be $O(f)$ as well. Hence it suffices to demonstrate that almost surely the spectral radius of a sequence of $(k,0)$-checkerboard matrices is $O(N^{1/2+\epsilon})$.


\begin{lemma}\label{lem_radius}
Let $m \in \N$ and let $\{A_N\}_{N\in\N}$ be a sequence of $(k,0)$-checkerboard matrices, then almost surely, as $N \to \infty$, $\|A_N\|_{\text{op}}=O(N^{1/2+1/(2m)})$.
\end{lemma}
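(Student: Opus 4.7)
The plan is to control $\|A_N\|_{\text{op}}$ via a trace moment bound. Since the operator norm of a Hermitian matrix is the largest absolute eigenvalue, for any positive integer $m$ we have the standard inequality
\begin{equation}
\|A_N\|_{\text{op}}^{2m}\ \le\ \sum_{i=1}^N \lambda_i(A_N)^{2m}\ =\ \tr(A_N^{2m}).
\end{equation}
Hence it suffices to show that, almost surely, $\tr(A_N^{2m}) = O(N^{m+1})$: taking $(2m)$-th roots then yields $\|A_N\|_{\text{op}} = O\bigl(N^{(m+1)/(2m)}\bigr) = O\bigl(N^{1/2+1/(2m)}\bigr)$, which is exactly the claim.

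To control $\tr(A_N^{2m})$ we combine an expectation bound with a variance bound. For the expectation, Lemma \ref{lem:avgmomentsbulk} applied to the $(k,0)$-checkerboard ensemble gives $\E[\nu_{A_N}^{(2m)}] \to C_m\bigl(\tfrac{k-1}{k}\bigr)^m$, i.e.\ $\E[\tr(A_N^{2m})] = O(N^{m+1})$. For the variance, Remark \ref{rem:trace_var} (which refines the proof of Lemma \ref{lem:bulk_var_est}) states $\var(\tr(A_N^{2m})) = O(N^{2m})$.

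With these two inputs, Chebyshev's inequality gives
\begin{equation}
{\rm Pr}\Bigl(\bigl|\tr(A_N^{2m}) - \E\tr(A_N^{2m})\bigr|\ >\ N^{m+1}\Bigr)\ \le\ \frac{\var(\tr(A_N^{2m}))}{N^{2m+2}}\ =\ O\bigl(N^{-2}\bigr),
\end{equation}
which is summable in $N$. Borel--Cantelli then implies that, almost surely, this deviation occurs for only finitely many $N$, so together with the expectation bound we conclude that almost surely $\tr(A_N^{2m}) \le C N^{m+1}$ for some constant $C$ and all sufficiently large $N$. Substituting back yields $\|A_N\|_{\text{op}} \le C^{1/(2m)} N^{1/2+1/(2m)}$ almost surely, completing the proof.

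The proof is essentially mechanical given the prior lemmas; the only technical point is that the variance estimate must hold for all fixed $\ell$ (including $\ell = 2m$) uniformly enough in $N$ to make Chebyshev summable, but this is exactly what the combinatorial argument behind Lemma \ref{lem:bulk_var_est} and Remark \ref{rem:trace_var} supplies. No new ideas beyond the moment/variance estimates already established in Appendix \ref{app_bulk} are needed.
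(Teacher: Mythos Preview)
Your proof is correct and follows essentially the same approach as the paper: both control $\|A_N\|_{\text{op}}$ via a high even trace moment together with the variance estimate of Lemma~\ref{lem:bulk_var_est}/Remark~\ref{rem:trace_var}, then invoke Chebyshev and Borel--Cantelli. The only cosmetic differences are that you work directly with $\tr(A_N^{2m})$ and redo the Chebyshev/Borel--Cantelli step explicitly, whereas the paper quotes the already-packaged almost-sure convergence of $\nu_{A_N}^{(2m+2)}$ from Appendix~\ref{app_bulk} and argues by contradiction; the underlying inputs are identical.
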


\begin{proof}
Suppose that for some $m \in \N$ we have a sequence $\{A_N\}_{N \in \N}$ such that $\|A_N\|_{\text{op}}$ is not $O(N^{1/2+1/(2m)})$. Then, by straightforward calculation the $(2m+2)^{nd}$ moments $\mu_{A_N,N}^{(2m+2)}$ do not converge. Hence if $\Pr\{\|A_N\|_{\text{op}}=O(N^{1/2+1/(2m)})\} \neq 1$, then with nonzero probability $\mu_{A_N,N}^{(2m+2)}$ do not converge. This contradicts the almost-sure moment convergence result of Appendix \ref{app_bulk}, and Lemma \ref{lem_radius} follows.
\end{proof}

Since Lemma \ref{lem_radius} holds for all $m\in \N$, we have that almost surely $\|A_N\|_{\text{op}}$ is $O(N^{{1/2}+\epsilon})$. Together with Lemma \ref{lem_spectrum_Z} and Lemma \ref{lem_weyl}, we obtain

\begin{theorem}
Let $\{A_N\}_{N\in\N}$ be a sequence of $(k,w)$-checkerboard matrices. Then almost surely as $N\rightarrow \infty$ the eigenvalues of $A_N$ fall into two regimes: $N-k$ of the eigenvalues are $O(N^{1/2+\epsilon})$ and $k$ eigenvalues are of magnitude $Nw/k+O(N^{{1/2}+\epsilon})$.
\end{theorem}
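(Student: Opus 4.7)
The strategy is a direct application of matrix perturbation theory, assembling the three lemmas of Appendix~\ref{app_tworegimes} in sequence. The plan is to decompose any $(k,w)$-checkerboard matrix as
\begin{equation}
A_N \;=\; B_N + Z,
\end{equation}
where $B_N$ is a $(k,0)$-checkerboard matrix (obtained by replacing each $w$-entry of $A_N$ with $0$) and $Z$ is the deterministic matrix with $Z_{ij}=w\,\mathbf{1}_{\{i\equiv j\bmod k\}}$. This is the viewpoint already indicated at the start of the appendix, and it reduces the theorem to quantitative control of each summand's spectrum.

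First I would invoke Lemma~\ref{lem_spectrum_Z} to pin down the spectrum of $Z$: it has exactly $k$ nonzero eigenvalues, all equal to $Nw/k$, and $N-k$ zero eigenvalues. Next I would apply Lemma~\ref{lem_radius} to obtain that, almost surely, $\|B_N\|_{\text{op}} = O(N^{1/2+\epsilon})$ for any $\epsilon>0$ (since $\|B_N\|_{\text{op}} = O(N^{1/2 + 1/(2m)})$ holds almost surely for every fixed $m\in\N$, and a countable intersection of almost-sure events is almost sure). Finally I would combine these via Weyl's inequality (Lemma~\ref{lem_weyl}): as noted in the remark following the lemma, for Hermitian $H$ and $P$ one has $|\lambda_j(H+P) - \lambda_j(H)| \le \|P\|_{\text{op}}$ for each $j$. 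Taking $H=Z$ and $P=B_N$ yields, for the ordered eigenvalues,
\begin{equation}
\bigl|\lambda_j(A_N) - \lambda_j(Z)\bigr| \;\le\; \|B_N\|_{\text{op}} \;=\; O(N^{1/2+\epsilon})
\end{equation}
almost surely. Applied to the $N-k$ zero eigenvalues of $Z$ this gives $N-k$ eigenvalues of $A_N$ of size $O(N^{1/2+\epsilon})$, and applied to the $k$ eigenvalues of $Z$ equal to $Nw/k$ it gives $k$ eigenvalues of $A_N$ of magnitude $Nw/k + O(N^{1/2+\epsilon})$, exactly as claimed.

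There is no real obstacle, as the three lemmas already do all the substantive work; the only subtlety is the quantification over $\epsilon$. This is handled by choosing a sequence $\epsilon_m = 1/(2m) \to 0$, noting that each almost-sure event $\{\|B_N\|_{\text{op}} = O(N^{1/2 + \epsilon_m})\}$ holds with probability one, and intersecting over $m\in\N$ to conclude that $\|B_N\|_{\text{op}} = O(N^{1/2+\epsilon})$ for every $\epsilon>0$ on a single almost-sure event. On that event the Weyl bound above holds for all $N$ large enough, completing the proof.
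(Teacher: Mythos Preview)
Your proposal is correct and follows essentially the same approach as the paper: decompose $A_N = B_N + Z$, compute the spectrum of $Z$ via Lemma~\ref{lem_spectrum_Z}, bound $\|B_N\|_{\text{op}}$ almost surely via Lemma~\ref{lem_radius}, and conclude with Weyl's inequality. Your explicit handling of the countable intersection over $m$ to pass from $O(N^{1/2+1/(2m)})$ to $O(N^{1/2+\epsilon})$ for all $\epsilon>0$ on a single almost-sure event is a welcome clarification of what the paper states in one line.
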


\section{Proof of Lemma \ref{lem Combinatorics Cancellation}}\label{Appendix Combinatorics Proof}
In \S\ref{sec The Moments of the Blip Spectral Measure}, we introduced Lemma \ref{lem Combinatorics Cancellation} without proof.  Here we provide a short proof of it.

\begin{proof}
Consider the function
\begin{equation}
f_0(x) \ = \  (1-x)^{m} \ = \  \sum_{j=0}^{m} (-1)^j \binom{m}{j} x^j.
\end{equation}
We inductively define, for each $0 \le p < m-1$, the function $f_{p+1}(x) = x f_{p}'(x)$. One can prove by straightforward induction that
\begin{equation}
f_p(x) \ = \  \sum_{i=1}^p c_{i,p} x^i(1-x)^{m-i},
\end{equation}
for each $0\le p < m$, with $c_{i,p}\in\R$, by using the product rule. Therefore, for each $0\le p < m$
\begin{equation}
0 \ = \  f_p(1) \ = \  \sum_{j=0}^{m} (-1)^j \binom{m}{j}j^p.
\end{equation}
By the same reasoning,
\begin{equation}
\sum_{j=0}^{m} (-1)^j \binom{m}{j}j^m \ = \  f_m(1) \ = \  (-1)^m m!
\end{equation}
and the second claim follows.
\end{proof}

\section{Bounds for $X_{m,N}^{(r)}$}
\begin{proof}[Proof of Lemma \ref{lem_moments_of_moments}]\label{app_proof_lem_mom_bound}
Firstly, we have
\begin{align}
 \E\left[\left(X_{m,N}-\E[X_{m,N}]\right)^r\right]\ &\ =\  \E\left[\sum_{\ell=0}^r \binom{r}{\ell}(X_{m,N})^\ell \left(\E[X_{m,N}]\right)^{r-\ell}\right]\  \nonumber \\
 &\ =\ \ \sum_{\ell=0}^r \binom{r}{\ell}\E\left[(X_{m,N})^\ell\right] \left(\E[X_{m,N}]\right)^{r-\ell}
\end{align}
By \eqref{eqn uncentered moments}, we have $\E[X_{m,N}]=O_m(1)$ hence $\left(\E[X_{m,N}]\right)^{r-\ell}=O_{m,r,\ell}(1)$ for all $\ell$. As such, it suffices to show that $\E\left[(X_{m,N})^\ell\right]=O_{m,\ell}(1)$.
By \eqref{eqn moments of blip spectral measure}, we have that
\begin{align}
&\E[X_{m,N}^\ell] \\
&\ =\ \left(\frac{k}{N}\right)^{2n\ell} \E\left[ \left(\sum_{j=0}^{2n} \binom{2n}{j} \sum_{i=0}^{m+j} \binom{m+j}{i} \left(-\frac{N}{k}\right)^{m-i} \tr A^{2n+i}\right)^\ell\right]  \nonumber \\
&\ =\  \left(\frac{k}{N}\right)^{2n\ell} \E\left[\sum_{j_1=0}^{2n} \dots \sum_{j_\ell=0}^{2n} \left[\prod_{u=1}^\ell \binom{2n}{j_u} \right] \sum_{i_1=0}^{m+j_1}\dots \sum_{i_\ell=0}^{m+j_\ell} \left[ \prod_{v=1}^{m+j_v} \binom{m+j_v}{i_v} \right] \left(-\frac{N}{k}\right)^{m-i_v}\tr A^{2n+i_v} \right]  \nonumber \\
&\ =\  \left(\frac{k}{N}\right)^{2n\ell} \sum_{j_1=0}^{2n} \dots \sum_{j_\ell=0}^{2n} \left[\prod_{u=1}^\ell \binom{2n}{j_u}\right] \sum_{i_1=0}^{m+j_1}\dots \sum_{i_\ell=0}^{m+j_\ell} \left[\prod_{v=1}^\ell \binom{m+j_v}{i_v}\right] \left(-\frac{N}{k}\right)^{m-i_v}\E\left[\prod_{v=1}^\ell\tr A^{2n+i_v} \right]. \label{eq_finite_moments}
\end{align}
Now, recall that
\begin{equation}
\E\left[\prod_{v=1}^\ell\tr A^{2n+i_v} \right] \ = \  \sum_{\alpha^1_1,\dots,\alpha^1_{2n+i_1} \leq N} \dots \sum_{\alpha^\ell_1,\dots,\alpha^\ell_{2n+i_\ell}\leq N} \E\left [ \prod_{j=1}^\ell a_{\alpha^j_1,\alpha^j_2}\dots a_{\alpha^j_{2n+i_j},\alpha^j_1} \right].
\end{equation}
We have now reached a combinatorial problem similar to the one we encounter in \S\ref{sec The Moments of the Blip Spectral Measure}. For each $j$, since the length of the cyclic product $a_{\alpha^j_1,\alpha^j_2}\dots a_{\alpha^j_{2n+i_j},\alpha^j_1}$ is fixed at $2n+i_j$, we can choose the number of blocks (determining the class), the location of the blocks (determining the configuration), the matchings and indexings. By Lemma \ref{lem_contributions}, we have that the main contribution from configurations of length $(2n+i_j)$ in $B_j$-class is $\frac{(2n+i_j)^{B_j}}{B_j!}$. By the same arguments made in \S\ref{sec The Moments of the Blip Spectral Measure}, the number of ways we can choose the number of blocks having one $a$ and two $a$'s as well as the number of ways to choose matchings across the $\ell$ cyclic products are independent of $N$, $j$'s and $i_j$'s, so for simplicity, we are denoting them as $C$. Finally, the contribution from choosing the indices of all the blocks and $w$'s is $O_k(N^{2n\ell+i_1+\dots+i_\ell-B_1-\dots-B_\ell})$. As such, if $B_1,\dots,B_\ell \geq m$, the total contribution is $O_{m,k}(1)$.
If there exists $B_{j'}<m$, then the overall contribution is
\begin{equation}
CN^{\ell m-B_1-\dots-B_\ell} \prod_{u=1}^\ell \left[\sum_{j_u=0}^{2n} \binom{2n}{j_u} \sum_{i_u=0}^{m+j_u} \binom{m+j_u}{i_u}(-1)^{m-i_u}\frac{(2n+i_u)^{B_u}}{B_u!}\right]\ = \ 0.
\end{equation}
since the sum over $j_u=j'$ is equal to 0 by by Lemma \ref{lem Combinatorics Cancellation}. As such, the total contribution of $\E[X_{m,N}^\ell]$ is simply $O_{m,\ell}(1)$ (suppressing $k$), as desired.
\end{proof}


\section{Moment Convergence Theorem}\label{sec:momconvthm}


The following argument is standard (though usually assumes all measures concerned are probability measures), and is given for completeness.

\begin{defi}\label{def_tight}
A sequence of measures $(\mu_n)_{n \geq 1}$ on $\R$ is \textbf{uniformly tight} if, for every $\eps>0$, there is a compact set $K$ such that $\sup_{n \geq 1} \mu_n(\R \setminus K)<\infty$.
\end{defi}

We are now ready to prove the moment convergence theorem for general finite measures, largely following the treatment of \cite{Cha}.

\begin{proof}[Proof of Theorem \ref{thm_moment_convergence}]
By convergence of moments, we have that
\begin{equation}
C_k\ :=\ \sup_{n \geq 1} \int_\R x^kd\mu_n
\end{equation}
is bounded. For any $R>0$ we then have by Chebyshev's inequality that
\begin{equation}\label{eqn_unif_tight}
\mu_n(\R \setminus [-R,R])\ \leq\ \frac{\int_\R x^2 d\mu_n}{R^2}\ \leq\ \frac{C_2}{R^2}.
\end{equation}
Therefore the $\mu_n$ are uniformly tight. Hence by Prokhorov's theorem for general measures (see \cite{Bog}, Theorem $8.6.2$), every subsequence of $(\mu_n)_{n\geq 1}$ contains a weakly convergent subsequence which converges to some measure $\nu$.

For any subsequence $(\mu_{n_\ell})_{\ell \geq 1}$ converging weakly to some measure $\nu$, we show that $\nu = \mu$. Fix some $k \in \Z_{\geq 0}$ and $R \in \R_{>0}$. Let $\varphi_R$ be a continuous function such that
\begin{equation}
1_{[-R,R]}\ \leq\ \varphi_R \ \leq\ 1_{[-R-1,R+1]}.
\end{equation}
We may split the integral as
\begin{equation}
\int x^k  d\mu_{n_\ell}\ =\ \int x^k \varphi_R d\mu_{n_\ell} + \int x^k (1-\varphi_R) d\mu_{n_\ell}.
\end{equation}
By the Cauchy-Schwarz inequality,
\begin{equation}
\left| \int x^k(1-\varphi_R)d\mu_{n_\ell} \right|^2 \leq \int x^{2k} d\mu_{n_\ell} \cdot \int (1-\varphi_R)^2 d\mu_{n_\ell}\ \leq\ \int x^{2k} d\mu_{n_\ell} \cdot \mu_{n_\ell}(\R \setminus [-R,R]),
\end{equation}
and by our moment bounds and the definition of $\varphi_R$, this is $\leq \frac{C_2 \cdot C_{2k}}{R^2}$. Therefore, we have
\begin{equation}
\lim_{R \rightarrow \infty} \int x^k \varphi_R d\mu_{n_\ell}\ =\ \int x^k d\mu_{n_\ell}.
\end{equation}
By the moment convergence assumption
\begin{equation}
\lim_{\ell \rightarrow \infty} \int x^k d\mu_{n_\ell}\ =\ \int x^k d\mu
\end{equation}
and by weak convergence,
\begin{equation}
\lim_{\ell \rightarrow \infty} \int x^k \varphi_R d\mu_{n_\ell} = \int x^k \varphi_R d\nu .
\end{equation}
We must now show
\begin{equation}
    \lim_{\ell \rightarrow \infty} \lim_{R \rightarrow \infty}  \int x^k \varphi_R d\mu_{n_\ell} =   \lim_{R \rightarrow \infty} \lim_{\ell \rightarrow \infty} \int x^k \varphi_R d\mu_{n_\ell}.
\end{equation}
For this, it suffices to show that $\int x^k \varphi_R d\mu_{n_\ell}$ converges uniformly with respect to $R$ as $\ell \rightarrow \infty$. In the following argument, we assume $k$ is even so that $x^k$ is nonnegative, but this may be modified easily for $k$ odd. By the same argument for uniform tightness as in Equation \eqref{eqn_unif_tight}, there exists $C$ such that for all $\ell$ (and when replacing $\mu_{n_\ell}$ by $\nu$),
\begin{equation}
\int_{\R \setminus [-K,K]} x^k \varphi_R d\mu_{n_\ell} \ \leq \ \int_{\R \setminus [-K,K]} x^k d\mu_{n_\ell} \ \leq\  \frac{C}{K^2}.
\end{equation}
Hence for any $\eps>0$, there exists some $K$ such that
\begin{equation}
\int_{\R \setminus [-K,K]} x^k \varphi_R d\mu_{n_\ell} < \eps/3
\end{equation}
unconditionally on $\ell$. By weak convergence, for any fixed $R$, there also exists an $N_R$ so that
\begin{equation}
\left|\int x^k \varphi_R d\mu_{n_\ell} - \int x^k \varphi_R d\nu\right| < \eps/3
\end{equation}
for all $\ell > N_R$. Therefore, letting $\overline{N} = \sup_{\substack{R \in \N \\ R \leq K}} N_R$,
we have that for $\ell > \overline{N}$ and any $R$,
\begin{align}
&\left|\int x^k \varphi_R d\mu_{n_\ell} - \int x^k \varphi_R d\nu\right| \\
& \leq \left|\int_{[-K,K]} x^k \varphi_R d\mu_{n_\ell} - \int_{[-K,K]} x^k \varphi_R d\nu\right| + \left|\int_{\R \setminus [-K,K]} x^k \varphi_R d\mu_{n_\ell} \right| + \left|\int_{\R \setminus [-K,K]} x^k \varphi_R d\nu \right| \\
& < \left|\int_{\R} x^k \varphi_{(K-1)} d\mu_{n_\ell} - \int_\R x^k \varphi_{(K-1)} d\nu\right| + \frac{2}{3}\eps \\
& < \eps
\end{align}
Thus we have uniform convergence, so the limits may be switched. Putting all this together,
\begin{align}
\lim_{R \rightarrow \infty} \int x^k \varphi_R d\nu &= \lim_{R \rightarrow \infty} \lim_{\ell \rightarrow \infty} \int x^k \varphi_R d\mu_{n_\ell} \\
&=   \lim_{\ell \rightarrow \infty} \lim_{R \rightarrow \infty} \int x^k \varphi_R d\mu_{n_\ell} \\
&= \lim_{\ell \rightarrow \infty} \int x^k d\mu_{n_\ell} \\
& = \int x^k d\mu
\end{align}
with the last equality following by the moment convergence hypothesis.
We have $\varphi_R x^{2k} \leq \varphi_{R+1}x^{2k}$ and both are nonnegative, so by the monotone convergence theorem
\begin{equation}
\int x^{2k} d\mu\ =\ \lim_{R \rightarrow \infty} \int \varphi_R x^{2k} d\nu\ =\ \int x^{2k}d\nu.
\end{equation}
Hence $x^{k} \in L^2(\nu)$, so $x^{k} \in L^1(\nu)$. Since $\varphi_R x^k \leq x^k$, by the dominated convergence theorem 
\begin{equation}
\int x^k d\mu\ =\ \lim_{R \rightarrow \infty} \int \varphi_R x^k d\nu\ =\ \int x^k d\nu.
\end{equation}
Since $\mu$ is uniquely characterized by its moments, $\nu=\mu$. Since every subsequence of $(\mu_n)_{n \geq 1}$ has a subsequence weakly converging to $\mu$, standard arguments give that $\mu_n$ converges weakly to $\mu$.
\end{proof}



\ \\

\end{document}